\documentclass[11pt]{article}

\usepackage[utf8]{inputenc}
\usepackage[T1]{fontenc}
\usepackage{amsmath,amssymb,amsthm}
\usepackage{mathtools}
\usepackage{geometry}
\usepackage{booktabs}
\usepackage{enumitem}
\usepackage[round]{natbib}
\usepackage{setspace}
\usepackage{adjustbox}
\usepackage{graphicx}
\usepackage[section]{placeins}
\usepackage{tikz}
\usetikzlibrary{decorations.pathreplacing}
\usepackage{hyperref}
\usepackage{cleveref}

\definecolor{nberred}{rgb}{0.72,0.20,0.05}
\hypersetup{colorlinks=true,linkcolor=nberred,citecolor=nberred,urlcolor=nberred}

\newtheorem{proposition}{Proposition}

\newtheorem{corollary}{Corollary}
\newtheorem{lemma}{Lemma}

\theoremstyle{definition}

\newcommand{\abar}{\bar{\alpha}}

\newcommand{\inparen}[1]{\left( #1 \right)}
\newcommand{\inbrak}[1]{\left[ #1 \right]}
\newcommand{\Rev}{\operatorname{Rev}}
\newcommand{\Piz}{\Pi_0}
\newcommand{\LL}{L}          
\newcommand{\nn}{N}          
\newcommand{\ww}{w}          
\newcommand{\cc}{c}          
\newcommand{\kk}{k}          
\renewcommand{\ss}{s}        
\newcommand{\ee}{\epsilon} 
\newcommand{\EE}{N_\epsilon} 
\newcommand{\ehat}{\hat{\eta}}  
\newcommand{\nntil}{\nn_{\ehat}} 
\newcommand{\nnhat}{\hat{\nn}} 

\title{The AI Layoff Trap}

\author{
Brett Hemenway Falk, Gerry Tsoukalas\thanks{
Hemenway Falk: University of Pennsylvania
(\href{mailto:fbrett@cis.upenn.edu}{fbrett@cis.upenn.edu}),
Tsoukalas: Boston University (\href{mailto:gerryt@bu.edu}{gerryt@bu.edu}). We thank Ricardo Caballero, Robert Engle, Rep. Bill Foster, Anindya Ghose, Kartik Hosanagar, Kose John, Leonid Kogan, Giacomo Mantegazza, Serguei Netessine, Fahad Saleh, Larry Summers, Prasanna Tambe, and S. Alex Yang for helpful comments and discussions. All errors are our own.
}
}

\date{June 3, 2026 \\ (first version: March 2, 2026)}

\begin{document}

\maketitle
\begin{spacing}{1.0}
\begin{abstract}
    If AI displaces human workers faster than the economy can reabsorb them, it risks eroding the very consumer demand firms depend on. We show that knowing this is not enough for firms to stop it. In a competitive task-based model of a transitioning economy, each firm captures the full cost saving from automation but bears only a fraction of the demand loss it creates in the product market; the rest falls on rivals. This \emph{demand externality} traps rational firms in an automation arms race, displacing workers well beyond what is collectively optimal. The resulting loss harms both workers and firm owners. More competition and ``better'' AI amplify the excess; wage adjustments and free entry cannot eliminate it. Neither can capital income taxes, worker equity, universal basic income, upskilling, or Coasean bargaining. A Pigouvian automation tax can. The results suggest that policy should address not only the aftermath of AI labor displacement but also the competitive incentives that drive it.\\[10pt]
{\bf Keywords:} artificial intelligence, automation, labor displacement, Pigouvian tax.

\end{abstract}
\end{spacing}

\section{Introduction}\label{sec:intro}

The fear that technology will displace workers is at least as old as the Industrial Revolution \citep{ricardo1821principles,keynes1930economic}. Historically, displacement has largely been self-correcting: automation of existing tasks has been offset by the creation of new tasks and occupations. What \citet{acemoglu2018race,acemoglu2019automation} call the reinstatement effect has tended to stabilize the labor market. Whether this balance will hold in the age of AI is an open question: \citet{autor2024new} find that displacement has intensified over the past four decades while the creation of new work has not always kept pace, and early signs suggest the current wave is disproportionately affecting entry-level workers \citep{brynjolfsson2025canaries}. 

Even if reinstatement eventually occurs, a problem arises along the way: displaced workers are also consumers, and when their lost income is not replaced, each round of layoffs erodes the purchasing power all firms depend on. At the limit, this becomes self-destructive: firms automate their way to boundless productivity and zero demand. Public discourse increasingly treats this dynamic as an inevitable process with no natural brake \citep{shah2026intelligence}. But rational, forward-looking firms \emph{should} be the brake; if the cliff ahead is visible to all, why would they race toward it?

Yet the evidence suggests firms are heading in precisely that direction. In February 2026, Block cut nearly half its 10,000-person workforce, with CEO Jack Dorsey stating that AI had made many of those roles unnecessary and that ``within the next year, the majority of companies will reach the same conclusion'' \citep{cnbc2026block}. In 2025, U.S.\ employers announced more than a million job cuts, and AI was explicitly cited in roughly 55,000 of them, led by technology firms and concentrated in customer support, content moderation, and middle management \citep{cnbc2025layoffs}. The exposure extends beyond tech: \citet{eloundou2024gpts} estimate that roughly 80\% of U.S.\ workers hold jobs with tasks exposed to large language models. Early demand-side indicators are consistent with the predicted strain: in Q1 2026, business investment overtook consumer spending as the leading contributor to U.S.\ GDP growth \citep{bea2026gdp}, and the personal savings rate fell to 3.6\% in March, its lowest level since late 2022 \citep{bea2026pio}.
None of this is hidden. Against this backdrop, we ask under what conditions rationality and perfect foresight are enough to prevent competitive over-automation, what determines the size of the distortion when they are not enough, and which proposed policy responses correct it.

To answer these questions, we develop a task-based automation model inspired by \citet{acemoglu2018race}, but refocused from the labor market to the product market: when automation displaces workers, their forgone spending reduces every firm's revenue. Each of several symmetric firms chooses what fraction of its workforce to replace with AI. Automated tasks are performed at lower cost, but integration frictions make each successive task harder to automate. On the demand side, workers spend a fraction of their income on the sector's output; firm owners spend less, normalized to zero in the baseline. Some displaced wage income is recovered through reemployment or transfers, but the remainder is lost to the sector. The model is deliberately stripped down to make this channel transparent, and the demand cliff ahead visible to all firms. The baseline holds wages fixed (relaxed in \Cref{sec:endo_wages}) and shuts down capital-income recycling (relaxed in \Cref{sec:recycling}); other baseline assumptions are also relaxed in \Cref{sec:generalizations}. Despite its parsimony, the framework accommodates a range of policy instruments (\Cref{sec:policy}) and robustness checks.

We show that competition creates a demand externality that traps firms. An automating firm captures the full cost saving but, under competitive pricing, bears only a fraction of the resulting aggregate demand destruction; the rest falls on rivals. Each firm's profit-maximizing automation rate is a strictly dominant strategy that exceeds the cooperatively efficient level, so foresight alone cannot prevent the race toward the cliff. The distortion deepens with competition: a monopolist fully internalizes the externality, while fragmented markets exhibit the widest gap. In the frictionless limit, where every task is equally easy to automate, the game sharpens into a Prisoner's Dilemma in which every firm displaces its entire human workforce with AI, even though collective restraint would raise all profits. The resulting surplus loss is not a transfer from workers to firm owners; it is a deadweight loss that harms both.

Since the loss falls on both sides, a natural question is whether policy can correct it. We evaluate six instruments against the externality margin. Upskilling and worker equity participation narrow the wedge but cannot eliminate it. Nor can Coasean bargaining: because automation is a dominant strategy, no voluntary agreement among firms is self-enforcing. Capital income taxes do not alter the equilibrium automation rate, operating on profit levels rather than the per-task margin where the externality resides. Neither does universal basic income: it raises the floor on living standards but leaves the automation incentive unchanged. Only a Pigouvian automation tax, set equal to the uninternalized demand loss per task, implements the cooperative optimum; its revenue can fund retraining that raises income replacement, shrinking the externality over time and making the tax potentially self-limiting.

The core result is also robust to several generalizations. Higher AI productivity widens the wedge rather than resolving it: each firm perceives a market-share gain from automating beyond rivals, but at the symmetric equilibrium these gains cancel, leaving only the additional distortion. This Red Queen effect means that ``better'' AI, far from mitigating the externality, amplifies it. Endogenous wage adjustment, a key self-correcting channel in the framework of \citet{acemoglu2018race}, raises the threshold at which the externality activates but, short of collapsing wages to AI's cost, cannot close the wedge once it does. Free entry, capital-income recycling, and richer product-market structures likewise fail to eliminate the distortion.

Those generalizations all stay within partial equilibrium. This raises the question whether the demand-destruction channel is itself a partial-equilibrium artifact: in a frictionless multi-sector general equilibrium (GE), the income lost to displacement could be reabsorbed elsewhere, and the mechanism may vanish. We argue that both routes for this reabsorption are blocked for the mass-market firms most exposed to AI. In the product market, displaced spending might rotate to other goods, but saturation in high-income consumption \citep{matsuyama2002rise,cominlashkari2021structural} and the inability to retool production quickly \citep{rameyshapiro2001displaced} keep it from returning to mass-market producers. Yet another route runs through the interest rate. The income displaced workers lose does not leave the economy: automation shifts it toward firm owners, who spend a smaller share of their income than workers do, so aggregate saving rises. In a frictionless economy a falling interest rate would put that saving back to work as investment and borrowing by others, holding total demand steady. This adjustment stalls, though, when interest rates are already near zero and cannot fall further, or when the income loss is lasting rather than temporary, so displaced workers cannot borrow their way through it. \Cref{app:ge} discusses these channels in detail.

Our work contributes to several literatures. We build on the task-based approach to automation \citep{zeira1998workers, autor2003skill, acemoglu2018race, acemoglu2019automation}, which emphasizes offsetting forces that restore labor demand after displacement, notably new task creation and a self-correcting wage channel. \citet{acemoglu2025simple} evaluates the aggregate productivity effects of AI within this framework. These contributions focus on whether and how the labor market rebalances; we ask what happens on the product-market side when rebalancing is slow or incomplete.

A growing literature argues that automation may be excessive. The closest to our setting is \citet{beraja2025inefficient}, who show that automation is inefficient when displaced workers face borrowing constraints during reallocation. Their mechanism operates through the labor market: firms ignore the welfare cost imposed on credit-constrained workers. Ours runs through the product market instead, with firms ignoring the demand they destroy for rivals. Two further differences follow. Their inefficiency arises even for a single firm in isolation, whereas ours requires competition and vanishes under monopoly. And while their planner corrects automation to protect worker welfare, ours would reduce automation even with zero weight on workers, because over-automation harms firm profits themselves.
Other channels for excessive automation share the feature that they would distort a single firm's decision even in isolation: the technology ecosystem may be biased toward ``so-so'' automation that displaces workers without large productivity gains \citep{acemoglu2020wrong}, automation may disproportionately target high-rent tasks, dissipating worker surplus rather than raising output \citep{acemoglu2024rent}, and corrective taxation has been justified by transitional frictions \citep{guerreiro2022should} and distributional concerns \citep{costinot2023robots}. More recently, AI-specific channels of the same in-isolation type have been identified through skill erosion, productivity mismeasurement, and copyright dynamics that can depress long-run welfare absent regulation \citep{caosun2026augmentation,bondi2026skill,yang2024generative}. Our externality, by contrast, arises only under competition and persists even when automation is highly productive, credit markets are complete, and the planner places no weight on distribution.

The demand externality we study belongs to the family of aggregate demand spillovers introduced by \citet{rosenstein1943problems} and formalized by \citet{murphy1989industrialization}. In their ``big push'' models, demand complementarities across sectors can prevent individually unprofitable investments from being made even though simultaneous adoption would be collectively profitable. Our mechanism is the mirror image: \emph{individually profitable} automation is collectively destructive because each firm's cost saving erodes the revenue base all firms share. \citet{cooper1988coordinating} provide the canonical framework for coordination failures driven by aggregate demand externalities; our game shares this setting of aggregate demand spillovers but not its strategic-complementarity structure: automation here is a strictly dominant strategy, making the problem a true externality rather than a coordination failure that communication could resolve. \citet{benzell2015robots} show in a dynamic overlapping-generations economy that automation can erode workers' purchasing power and generate immiserating growth through capital-accumulation channels. Our contribution is distinct: even absent these dynamic channels, and even when firms perfectly foresee the demand loss, decentralized competition alone induces excessive automation, because each firm captures the full labor-cost saving while bearing only a fraction of the resulting aggregate demand destruction. \citet{korinek2019artificial} similarly examines AI-driven income redistribution at the aggregate level, and \citet{caballero2026speculative} models AI-driven labor displacement and capital concentration as generating speculative-growth equilibria through a macro-financial funding feedback. Neither features the firm-level strategic competition in which one firm's automation choice imposes a demand externality on rivals.

The information systems literature has established that AI systems deliver substantial productivity gains \citep{brynjolfsson2025generative} and are increasingly deployed in strategic roles such as pricing, where algorithms can spontaneously learn to collude \citep{banchio2022artificial,keppo2026fragility}. On the adoption side, \citet{li2025forced} show that firms under labor-issue scrutiny invest specifically in AI automation rather than other forms of IT, and \citet{bastani2025human} show that as AI reliability improves, incentivizing effective human oversight becomes prohibitively expensive, weakening a key check on automation. What this literature has not modeled is how these individually documented phenomena interact across firms: each adoption decision is rational in isolation, but collectively they erode the consumer demand all firms depend on. We provide that model, connecting the micro-level evidence the IS literature has documented to a macro-level market failure that no individual firm can prevent.

The remainder of the paper is organized as follows. \Cref{sec:model} presents the model. \Cref{sec:analysis} derives the equilibrium and the over-automation wedge. \Cref{sec:policy} evaluates policy instruments. \Cref{sec:generalizations} extends the model to AI productivity gains, endogenous entry, endogenous wages, capital-income recycling, and richer product-market interaction. \Cref{sec:discussion} discusses implications and limitations.

\section{Model}\label{sec:model}
The baseline isolates the demand consequences of automation in the simplest environment that supports the mechanism: symmetric firms, a single sector, and exogenous wages. We describe the supply side (cost structure and automation choice), then the demand side (how displacement feeds back into revenue), and finally the game firms play. Each assumption is relaxed in \Cref{sec:generalizations}.

Consider a sector with $\nn \geq 2$ symmetric firms, indexed $i = 1, \dots, \nn$. It will later become useful to think of each firm as having a single \emph{owner}, for example the equity holder, who is entitled to the firm’s operating profits.

In the spirit of the task-based framework of \citet{acemoglu2018race}, each firm is endowed with $\LL > 0$ task-positions. Initially all tasks are performed by human \emph{workers}; a new technology shock arrives, for example agentic AI, and each firm must decide how much of its workforce to replace. In particular, firm $i$ chooses an automation rate $\alpha_i \in [0,1]$: tasks $z \in [0,\alpha_i]$ are performed by AI at cost $\cc$ per task, and tasks $z \in (\alpha_i,1]$ remain with human workers at wage $\ww$ per task, with $0 \le \cc \le \ww$. Since each automated task displaces one worker, $\alpha_i$ is simultaneously the automation rate and the fraction of the workforce laid off; we use the two descriptions interchangeably. Wages are exogenous in the baseline; \Cref{sec:endo_wages} endogenizes wages.

In the perfect-substitutes limit of the CES task aggregator in \citet{acemoglu2018race}, each task produces one unit of output regardless of mode, so firm output is $Y_i = \LL$; \Cref{sec:phi} relaxes this to allow AI to not only reduce costs, but also increase firm output. This normalization shuts down productivity and quality margins so that the baseline captures only the spending consequences of labor displacement.

We follow the literature in assuming tasks are ordered by comparative advantage, making the marginal task progressively harder to integrate; we capture this via a convex integration cost $\frac{\kk}{2}\LL\alpha_i^2$ with $\kk \geq 0$, using the standard quadratic adjustment-cost specification \citep{lucas1967adjustment}.
Firm~$i$'s total production cost is therefore
\begin{equation}\label{eq:cost}
  C_i(\alpha_i) = \LL \big ( \alpha_i \cc + (1-\alpha_i)  \ww \big ) +  \tfrac{\kk}{2}\,\LL\,\alpha_i^2.
\end{equation}
Defining the per-task cost saving from automation as $\ss \coloneqq \ww - \cc$,
the cost equation can be rewritten as $C_i = \LL(\ww - \ss\,\alpha_i) + \tfrac{\kk}{2}\,\LL\,\alpha_i^2$: each automated task saves~$\ss$ in labor costs but incurs the integration friction.

On the demand side, workers have a higher marginal propensity to consume (MPC) than owners \citep{kaldor1956alternative,summers2014secular,summers2015demand}; workers spend a fraction $\lambda \in (0, 1]$ of their income on the sector's good, generating the type of cross-firm demand linkage analyzed by \citet{murphy1989industrialization}. Owners, by contrast, spend none of their income in the sector in the baseline (\Cref{sec:recycling} relaxes this). This MPC asymmetry implies that when automation displaces workers, income shifts toward agents with a lower sectoral MPC, reducing aggregate expenditure on the sector. The asymmetry itself has a structural foundation: under non-homothetic preferences in which mass-market goods saturate at high incomes \citep{matsuyama2002rise,cominlashkari2021structural,boppart2014structural}, marginal owner income flows to a separate luxury segment rather than back to the modeled sector, and firms specialized to mass-market production cannot quickly retool to capture the redirected demand \citep{rameyshapiro2001displaced,cooperhaltiwanger2006adjustment}. \Cref{app:ge} discusses the conditions under which the demand destruction mechanism survives a fuller GE treatment.

When firm $j$ automates a fraction $\alpha_j$ of its tasks, $\alpha_j \LL$ workers are displaced. A fraction $\eta \in [0,1]$ of displaced wage income is replaced via reemployment, transfers, or other sources \citep{jacobson1993earnings}; the remainder, $(1-\eta)\ww$ per displaced worker, is lost to the sector.

Across all $\nn$~firms, the total number of displaced workers is $\sum_j \alpha_j \LL$, so total wage income lost to displacement is $(1-\eta)\ww\sum_j \alpha_j \LL$.
Total labor income in the sector is therefore $\ww\LL\nn - (1-\eta)\ww\sum_j \alpha_j \LL$, of which a fraction~$\lambda$ is spent on the sector's good.
Adding autonomous demand $A > 0$ (from outside the sector or from capital income), aggregate sectoral expenditure is
\begin{equation}\label{eq:demand}
  D(\boldsymbol{\alpha}) = A + \lambda \ww\LL\!\left[\nn - (1-\eta)\textstyle\sum_j \alpha_j\right].
\end{equation}
Writing $\abar \coloneqq \frac{1}{\nn}\sum_j \alpha_j$ for the average automation rate, this becomes $D = A + \lambda \ww\LL\nn[1 - (1-\eta)\abar]$.
Defining the effective demand loss per automated task as
\begin{equation}\label{eq:ell}
  \ell \coloneqq \lambda(1-\eta)\ww,
\end{equation}
this simplifies to $D = A + \lambda \ww\LL\nn - \ell \LL\nn\abar$: demand falls linearly in the average automation rate. 

Firms sell their output on the product market at a uniform price that equates aggregate supply and demand.
Since all firms produce the same output $Y_i = \LL$, total supply is $\nn\LL$ and the market-clearing price is $p = D / (\nn\LL)$.
Each firm earns revenue $\Rev_i = p \cdot Y_i = D/\nn$, which, after substituting \eqref{eq:demand}, gives
\begin{equation}\label{eq:revenue}
  \Rev_i = \frac{A}{\nn} + \lambda \ww\LL - \ell \LL\abar.
\end{equation}
Firm~$i$'s profit is $\pi_i = \Rev_i - C_i$.
Substituting \eqref{eq:revenue} and \eqref{eq:cost}:
\begin{equation}\label{eq:profit_clean}
  \pi_i = \Piz + \LL\!\left(\ss\,\alpha_i - \ell\abar - \tfrac{\kk}{2}\alpha_i^2\right),
\end{equation}
where $\Piz \coloneqq A/\nn + (\lambda - 1)\ww\LL$ is the per-firm profit when no firm automates.
Writing $\abar = (\alpha_i + \sum_{j \neq i}\alpha_j)/\nn$ to isolate firm~$i$'s own action:
\begin{equation}\label{eq:profit_expanded}
  \pi_i = \Piz + \LL\!\left[\alpha_i\!\left(\ss - \frac{\ell}{\nn}\right) - \frac{\kk}{2}\alpha_i^2 - \frac{\ell}{\nn}\sum_{j \neq i}\alpha_j\right].
\end{equation}

Firms play a one-shot simultaneous-move game, each choosing $\alpha_i$ to maximize~$\pi_i$; the product market then clears mechanically given the automation profile.%
\footnote{An alternative would be a two-stage game in which firms first choose automation rates and then compete on price or quantity. We abstract from this type of more elaborate second-stage product-market competition because those strategic effects are already well studied and would obscure the novel mechanism we isolate here: the demand externality from automation under full transparency. The qualitative results are plausibly robust to richer product-market interaction, but closed-form solutions would become substantially more complex.}
The solution concept is Nash equilibrium.

Define aggregate owner surplus $\mathcal{K}$ and aggregate worker income $\mathcal{W}$:
\begin{align*}
  \mathcal{K} &\coloneqq \sum_i \pi_i \\
  \mathcal{W} &\coloneqq \ww \LL \nn [1-(1-\eta)\abar].
\end{align*} 
We measure over-automation against two benchmarks: the \emph{cooperative optimum}, which maximizes~$\mathcal{K}$, and a \emph{generalized social planner} who maximizes
\[
S(\mu) \;\coloneqq\; \mu\,\mathcal{W} + (1-\mu)\,\mathcal{K}
\]
for a weight $\mu \in [0,1]$ on workers.

Note that the environment assumes full transparency: every firm can directly observe how automation maps into lost worker income and reduced aggregate spending. The question \Cref{sec:analysis} answers is whether this visibility alone is sufficient for firms to curb automation in a competitive setting.

\section{Equilibrium and Over-Automation}\label{sec:analysis}

This section derives the equilibrium, shows firms over-automate relative to the cooperative optimum, and quantifies the resulting surplus loss.
All proofs are collected in \Cref{app:proofs}.

\subsection{Equilibrium and the Over-Automation Wedge}\label{subsec:ge}

To characterize the equilibrium, consider firm~$i$'s marginal incentive to automate.
Recall from~\eqref{eq:ell} that $\ell(\ww) = \lambda(1-\eta)\ww$ is the demand lost per displaced worker, proportional to the wage because displaced workers' forgone spending scales with their earnings.
(We write simply~$\ell$ when the wage is held fixed; \Cref{sec:endo_wages} endogenizes~$\ww$.)
From~\eqref{eq:profit_expanded}, firm~$i$'s marginal profit from automation is
\[
  \frac{\partial \pi_i}{\partial \alpha_i}
    = \LL\!\left(\ss - \frac{\ell}{\nn} - \kk\alpha_i\right).
\]
A marginal increase in automation saves~$\ss$ in labor costs but incurs friction~$\kk\alpha_i$ and reduces the firm's revenue by~$\ell/\nn$.
The revenue loss is $\ell/\nn$ rather than~$\ell$ because competitive pricing allocates revenue equally across symmetric firms~\eqref{eq:revenue}: firm~$i$'s automation reduces aggregate demand by~$\ell\LL$, but only~$\ell\LL/\nn$ of this falls on firm~$i$ itself.
Each firm therefore underestimates the social cost of its automation, suggesting systematic over-automation in equilibrium.
The following proposition confirms this and quantifies the gap.

\begin{proposition}[Equilibrium and over-automation]\label{prop:alphastar}
Suppose $\kk > 0$ (the frictionless case $\kk = 0$ is treated separately in \Cref{cor:frictionless}). In the model defined in \Cref{sec:model}, define the automation threshold
\begin{equation}\label{eq:Nstar}
  \nn^{*} \coloneqq \frac{\ell}{\ss} = \frac{\lambda(1-\eta)\ww}{\ww - \cc}.
\end{equation}
If $\nn \leq \nn^{*}$, no firm automates ($\alpha^{NE} = 0$).\\
If $\nn > \nn^{*}$ (equivalently, $\ss > \ell/\nn$):
\begin{enumerate}[label=(\roman*),nosep]
  \item Each firm's strictly dominant strategy is $\alpha^{NE} = \min\inparen{ (\ss - \ell/\nn)/\kk, 1 }$;
  \item\label{prop:CO} The cooperative optimum is
  $\alpha^{CO} = \min \inparen{ \max\inparen{ 0,(\ss - \ell)/\kk }, 1 }$;
  \item\label{prop:wedge_N} If $\ell < \ss < \kk + \ell/\nn$ then both $\alpha^{NE}$ and $\alpha^{CO}$ are interior, 
  and the over-automation wedge is
  \[
    \alpha^{NE} - \alpha^{CO} \;=\; \frac{\ell\,(1 - 1/\nn)}{\kk} \;>\; 0.
  \]
  This is strictly increasing in~$\nn$ and~$\ell$, and decreasing in~$\kk$.
\item\label{prop:wedge_boundary1}
  If $\ss \le \ell$, then $\alpha^{CO} = 0$, and so the wedge is $\alpha^{NE}$.
  Thus if $\ss < \kk + \ell/\nn$, then $\alpha^{NE} = (\ss-\ell/\nn)/\kk$, so the wedge is $(\ss-\ell/\nn)/\kk$.
  On the other hand, if $\kk + \ell/\nn \le \ss$, then $\alpha^{NE} = 1$, so the wedge is 1.
\end{enumerate}
\end{proposition}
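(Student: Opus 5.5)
The proof works directly from the expanded profit expression \eqref{eq:profit_expanded}. The key structural fact is that $\pi_i$ is additively separable in firm $i$'s own choice and in rivals' choices. Rivals enter only through the term $-\frac{\ell}{\nn}\sum_{j\neq i}\alpha_j$, which does not interact with $\alpha_i$. Hence firm $i$'s best response is independent of $\boldsymbol{\alpha}_{-i}$, and any maximizer of the own-action part
\[
g(\alpha_i)=\alpha_i\left(\ss-\frac{\ell}{\nn}\right)-\frac{\kk}{2}\alpha_i^2
\]
over $[0,1]$ is a dominant strategy. Since $\kk>0$, $g$ is strictly concave, so the maximizer is unique and the strategy is strictly dominant.

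\textbf{Equilibrium (part (i) and the no-automation case).} The unconstrained maximizer of $g$ is $(\ss-\ell/\nn)/\kk$. Projecting it onto $[0,1]$ gives $\min(\max(0,(\ss-\ell/\nn)/\kk),1)$.
\begin{itemize}
\item If $\nn\le \nn^*$, i.e.\ $\ss\le \ell/\nn$ (using $\ss>0$; if $\ss=0$ the threshold is read as infinite and the conclusion holds trivially), then $g'(\alpha_i)\le -\kk\alpha_i\le 0$, so $\alpha^{NE}=0$.
\item If $\nn>\nn^*$, the max with $0$ is inactive, which yields (i).
\end{itemize}
In both cases the unique Nash equilibrium is symmetric, because every firm plays its dominant strategy.

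\textbf{Cooperative optimum (part (ii)).} Sum \eqref{eq:profit_clean} over $i$ and use $\sum_i \ell\abar = \ell\sum_i\alpha_i$:
\[
\mathcal{K}=\nn\Piz+\LL\sum_i\left(\ss\alpha_i-\ell\alpha_i-\tfrac{\kk}{2}\alpha_i^2\right).
\]
This separates across firms into identical strictly concave problems. Each is maximized at the projection of $(\ss-\ell)/\kk$ onto $[0,1]$, giving (ii).

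\textbf{The wedge (parts (iii) and (iv)).}
\begin{itemize}
\item Under $\ell<\ss<\kk+\ell/\nn$, both unconstrained maximizers lie in $(0,1)$: $(\ss-\ell)/\kk>0$, and $(\ss-\ell)/\kk\le(\ss-\ell/\nn)/\kk<1$. Subtracting them gives $\ell(1-1/\nn)/\kk$. This is positive since $\nn\ge2$ and $\ell>0$ (strict since $\ell<\ss$ requires nothing more, and $\ell>0$ is implicit for the wedge to be positive; I would note that $\ell>0$ needs $\eta<1$). Monotonicity in $\nn$, $\ell$ and $\kk$ follows by inspection.
\item For (iv), $\ss\le\ell$ gives $(\ss-\ell)/\kk\le0$, so $\alpha^{CO}=0$ and the wedge equals $\alpha^{NE}$. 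The two subcases then read off directly from (i) according to whether $(\ss-\ell/\nn)/\kk<1$.
\end{itemize}

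The argument is essentially routine. The only step needing care is the strictness claims: strict dominance, which comes from uniqueness of the maximizer under strict concavity, and strict positivity of the wedge, which requires $\ell>0$. The boundary conventions (such as $\ss=0$) are handled by reading $\nn\le\nn^*$ as $\ss\le\ell/\nn$.
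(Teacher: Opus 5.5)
Your proposal is correct. For the equilibrium, the wedge, and the boundary cases it follows the paper's proof essentially step by step: the rivals' term separates from the own-action term, the own-action part is a strictly concave quadratic, and you project onto $[0,1]$.

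The one genuine difference is part (ii). The paper writes aggregate profit as $\nn\Piz + \nn\LL\abar(\ss-\ell) - \tfrac{\kk\LL}{2}\sum_i\alpha_i^2$. It then uses the QM--AM inequality to show that, for fixed $\abar$, the symmetric profile is optimal, and maximizes a one-dimensional quadratic in the common rate. You instead note that, because demand is linear in $\abar$, $\nn\ell\abar = \ell\sum_i\alpha_i$, so $\mathcal{K}$ splits into $\nn$ identical strictly concave coordinate problems. Your route is shorter and gives symmetry and uniqueness of the cooperative profile in one step. The paper's symmetrization argument is more general: it works for any objective of the form $F(\abar) - \tfrac{\kk\LL}{2}\sum_i\alpha_i^2$ with $F$ nonlinear, whereas separability depends on $D$ being linear in $\abar$.

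You also make explicit two points the paper leaves implicit. The first is the knife-edge $\nn = \nn^{*}$, where the unconstrained maximizer is exactly $0$. The second is that strict positivity of the wedge requires $\ell > 0$, i.e.\ $\eta < 1$.
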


The proposition follows from the private first-order condition derived above and its cooperative counterpart: a planner setting a common rate for all firms faces the full demand loss~$\ell$ per automated task rather than the~$\ell/\nn$ each firm perceives, yielding $\alpha^{CO} = (\ss - \ell)/\kk$.
Because rivals' rates enter~\eqref{eq:profit_expanded} only through the term $-(\ell/\nn)\sum_{j\neq i}\alpha_j$, which is independent of~$\alpha_i$, the equilibrium rate is a strictly dominant strategy: each firm over-automates even with perfect foresight about every rival's behavior.

The case structure arises because both $\alpha^{NE}$ and $\alpha^{CO}$ lie in $[0,1]$: each can be at no automation, interior, or full automation depending on how the cost saving~$\ss$ compares to the demand-loss and friction parameters. The wedge is widest where firms automate but the cooperative benchmark would not ($\alpha^{CO} = 0$), and it shrinks to zero only once cost savings grow large enough ($\ss \geq \kk + \ell$) to pin both rates at full automation. We caution against reading that zero as a benign region: the demand externality is still present in firms' incentives, but the $\alpha^{NE} - \alpha^{CO}$ comparison can no longer detect it once even the profit-only cooperative benchmark ($\mu = 0$) prescribes displacing the entire workforce. A more informative comparison there is to a planner who weights worker income, for whom the gap stays strictly positive (\Cref{prop:surplus}). Parts~\ref{prop:wedge_N}--\ref{prop:wedge_boundary1} enumerate the relevant combinations; the economic force is the same throughout.

The wedge is strictly increasing in~$\nn$: more competitive sectors exhibit wider automation gaps.
This runs counter to the standard intuition that competition disciplines firms to act in consumers' interests; here, more competition dilutes each firm's share of the demand loss, weakening the private incentive to restrain.
A monopolist ($\nn = 1$) fully internalizes the externality ($\alpha^{NE} = \alpha^{CO}$); as $\nn \to \infty$, the wedge approaches its maximum of~$\ell/\kk$.

From \Cref{prop:alphastar}, a firm automates only when $\nn > \nn^{*} = \ell/\ss$: the number of competitors must be large enough that each firm's share of the demand loss, $\ell/\nn$, falls below its cost saving~$\ss$.
As AI costs fall ($\cc \to 0$), $\nn^{*} \to \lambda(1-\eta) \leq 1$: the over-automation region expands to cover virtually any market with $\nn \geq 2$.
For illustrative parameters ($\cc/\ww = 0.30$, $\lambda = 0.5$, $\eta = 0.30$, $\nn \to \infty$), the wedge equals $\ell/\kk = \alpha^{CO}$: firms in competitive markets automate at twice the cooperatively efficient rate. This interior comparison should be read with a caveat: it presumes frictions are not negligible relative to the cost saving ($\kk > \ss$, so that both rates stay interior as $\nn \to \infty$). When AI is both cheap and frictionless to integrate, both rates saturate at full automation and the wedge collapses to the corner noted above, so the factor-of-two statement is specific to the interior regime. Cheaper AI still worsens the outcome beyond that regime, tipping it into universal displacement, even though the $\alpha^{NE} - \alpha^{CO}$ wedge no longer registers it.

\Cref{fig:wedge_panels} illustrates these comparative statics. In each panel, the dashed line marks the $\nn = \nn^{*}$ boundary below which no firm automates, and darker shading indicates a larger wedge. The dominant pattern is that the wedge grows with~$\nn$; non-monotonicity in the other dimensions reflects the regime shift at $\ss = \ell$, where the cooperative optimum moves from zero to an interior solution.

\begin{figure}[t]
\centering
\includegraphics[width=0.33\linewidth]{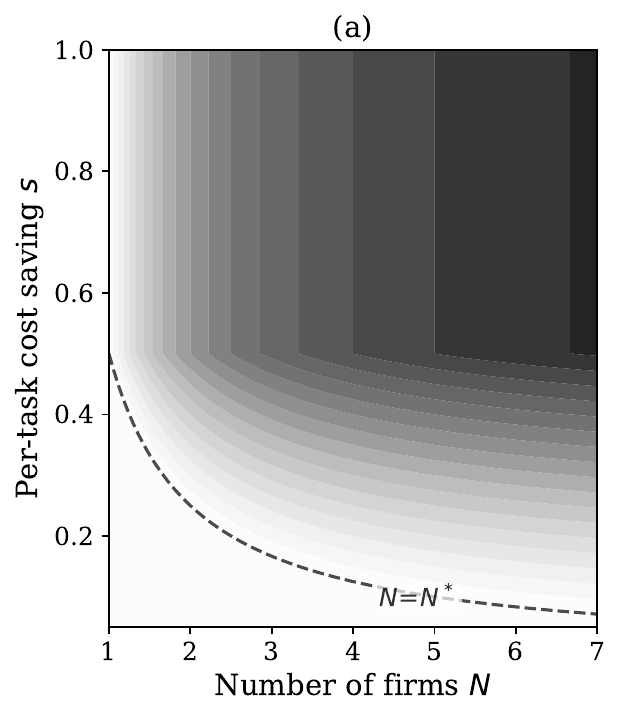}%
\includegraphics[width=0.33\linewidth]{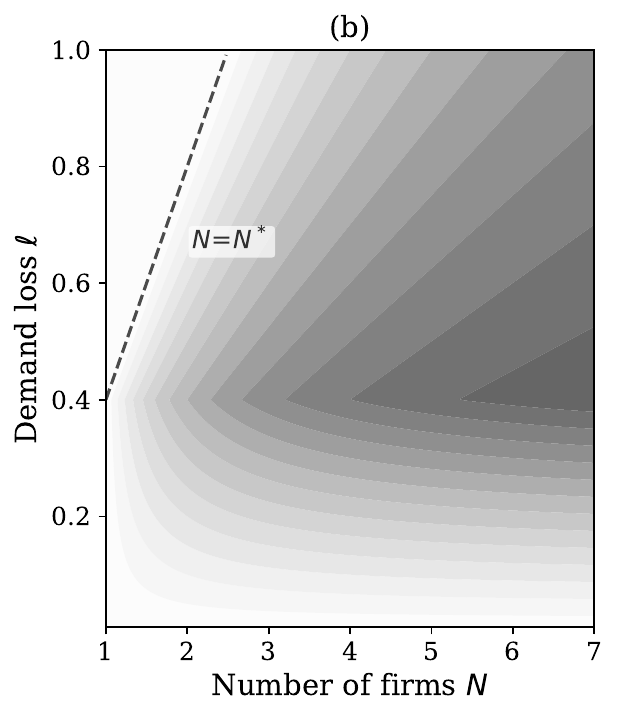}%
\includegraphics[width=0.33\linewidth]{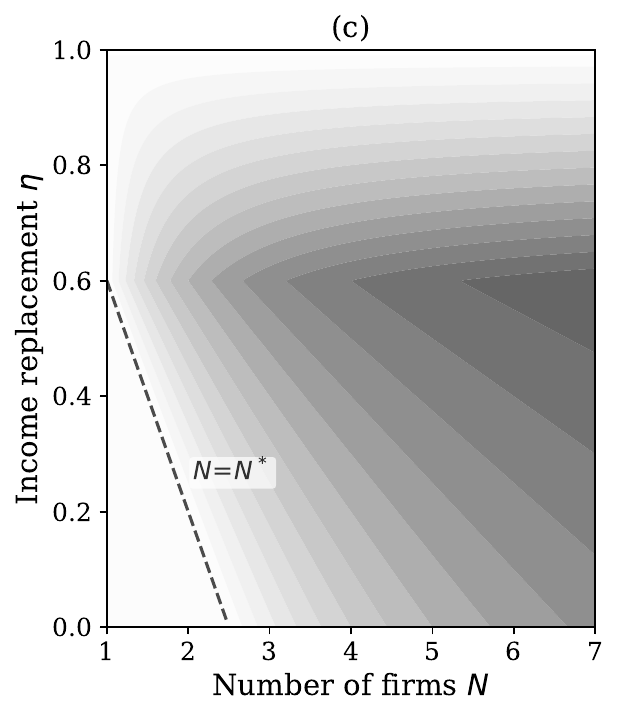}
\caption{The over-automation wedge $\alpha^{NE} - \alpha^{CO}$ across the parameter space. Shading runs from white (zero wedge) to black (wedge $\geq 0.50$). Dashed lines mark the $\nn = \nn^{*}$ boundary below which no firm automates.
(a)~Number of firms vs.\ cost saving $\ss = \ww - \cc$ at fixed wage ($\lambda = 0.5$, $\eta = 0$, $\kk = 1$).
(b)~Number of firms vs.\ demand loss $\ell = \lambda(1-\eta)\ww$ ($\cc = 0.60$, $\eta = 0$, $\kk = 1$).
(c)~Number of firms vs.\ income replacement ($\cc = 0.60$, $\lambda = 1$, $\kk = 1$).}\label{fig:wedge_panels}
\end{figure}

When frictions are positive, integration costs moderate the equilibrium automation rate. The next subsection shows that when frictions vanish ($\kk \to 0$), this moderating force disappears and the game reduces to a Prisoner's Dilemma: full automation versus none.

\subsection{Frictionless Automation as a Prisoner's Dilemma}\label{sec:PD}

When $\kk = 0$, marginal profit becomes the constant $\LL(\ss - \ell/\nn)$, independent of the automation level, and the outcome is all-or-nothing.
If $\nn \leq \nn^{*}$, no firm automates.
If $\nn > \nn^{*}$, automating is strictly dominant, and collectively harmful when the cost saving falls short of the demand loss ($\ss < \ell$):

\begin{corollary}[Frictionless limit]\label{cor:frictionless}
Suppose adjustment frictions vanish ($\kk = 0$) and the number of firms exceeds the critical threshold ($\nn > \nn^{*}$).
\begin{enumerate}[label=(\roman*),nosep]
  \item Full automation ($\alpha_i = 1$) is strictly dominant for every firm.
  \item\label{prop:PD} If additionally the cost saving is less than the demand loss per task ($\ss < \ell$), the cooperative optimum is no automation ($\alpha_i = 0$ for all~$i$, yielding per-firm profit~$\Piz$); the equilibrium yields $\Piz + \LL(\ss - \ell) < \Piz$.
  Total deadweight loss is $\nn\LL(\ell - \ss)$.
\end{enumerate}
\end{corollary}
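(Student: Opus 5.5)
The plan is to work directly from the expanded profit expression~\eqref{eq:profit_expanded} with $\kk = 0$:
\[
  \pi_i = \Piz + \LL\!\left[\alpha_i\!\left(\ss - \frac{\ell}{\nn}\right) - \frac{\ell}{\nn}\sum_{j \neq i}\alpha_j\right].
\]
This is affine in firm~$i$'s own choice~$\alpha_i$, and the rivals' term does not depend on~$\alpha_i$. For part~(i), I note that $\nn > \nn^{*} = \ell/\ss$ is equivalent to $\ss - \ell/\nn > 0$. This uses $\ss > 0$, which holds here: if $\ss = 0$, the threshold condition cannot hold for finite $\nn$ unless $\ell = 0$, a degenerate case I will note separately. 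For every fixed profile $(\alpha_j)_{j\neq i}$, $\pi_i$ is then strictly increasing in $\alpha_i$ on $[0,1]$. So $\alpha_i = 1$ gives strictly higher profit than any $\alpha_i' < 1$, whatever rivals do, and is therefore strictly dominant. Iterating over $i$ gives the unique equilibrium $\alpha_i = 1$ for all~$i$.

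For part~(ii), I compute aggregate owner surplus $\mathcal{K} = \sum_i \pi_i$ using the compact form~\eqref{eq:profit_clean} with $\kk=0$:
\[
  \mathcal{K} = \nn\Piz + \LL\Bigl(\ss\textstyle\sum_i \alpha_i - \nn\ell\abar\Bigr) = \nn\Piz + \nn\LL(\ss - \ell)\abar .
\]
This step uses $\sum_i \alpha_i = \nn\abar$. Total surplus depends on the profile only through $\abar$, and when $\ss < \ell$ it is strictly decreasing in $\abar$. It is therefore uniquely maximized at $\abar = 0$, that is, at $\alpha_i = 0$ for all $i$ since each $\alpha_i \ge 0$. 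By symmetry each firm then earns $\Piz$. This holds whether the cooperative optimum is defined over common rates, as in \Cref{prop:alphastar}, or over arbitrary profiles.

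Substituting the equilibrium profile $\alpha_i = 1$ into~\eqref{eq:profit_clean} gives $\abar = 1$ and per-firm profit $\Piz + \LL(\ss - \ell)$, which is strictly below $\Piz$ because $\ss<\ell$. The deadweight loss is the gap between cooperative and equilibrium aggregate surplus, $\nn\Piz - [\nn\Piz + \nn\LL(\ss-\ell)] = \nn\LL(\ell - \ss)$. The Prisoner's Dilemma reading follows because every firm strictly prefers to deviate to full automation, yet all firms are strictly worse off at the resulting profile than under mutual restraint. The main point requiring care is part~(i): dominance must hold against every rival profile, not just symmetric ones. The additive separability of the rival term in~\eqref{eq:profit_expanded} handles this immediately, so I expect no real obstacle beyond handling the degenerate $\ss=0$ edge case.
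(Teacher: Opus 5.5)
Your proposal is correct and follows essentially the same route as the paper. Both arguments rest on the same two facts: with $\kk=0$, profit is affine in $\alpha_i$ with a rival-independent slope $\LL(\ss-\ell/\nn)>0$, which makes $\alpha_i=1$ strictly dominant; and aggregate profit $\nn\Piz+\nn\LL(\ss-\ell)\abar$ is strictly decreasing in $\abar$ when $\ss<\ell$, which gives the cooperative optimum, the profit comparison, and the deadweight loss $\nn\LL(\ell-\ss)$.
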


Under the condition in part~\ref{prop:PD} ($\ss < \ell$), the Prisoner's Dilemma structure makes the failure of voluntary restraint transparent.
A firm that holds back unilaterally (choosing $\alpha_i = 0$) still suffers the revenue decline from rivals' automation but forgoes the offsetting cost savings; a firm that deviates (choosing $\alpha_i = 1$) captures the savings while imposing only a~$1/\nn$ share of the demand loss on itself.
The resulting payoff matrix has the classic form: mutual restraint yields~$\Piz$ per firm, while mutual automation yields $\Piz + \LL(\ss - \ell) < \Piz$, yet defecting is individually rational regardless of others' choices.
Because automating is strictly dominant (not merely a best response to others' automating), no non-binding agreement can restore efficiency.
Communication is cheap talk: even if all firms acknowledge that collective restraint would raise profits, each firm's individually optimal action remains unchanged.
This distinguishes the automation externality from pure coordination failures (where firms simply need to agree on which equilibrium to play) and motivates the analysis of Coasean bargaining in \Cref{sec:coase}.

\subsection{Over-Automation as Deadweight Loss}\label{subsec:surplus}

Is the over-automation wedge merely a redistribution from workers to firm owners, or does it reduce total surplus?
Recall the generalized planner introduced in the model section, who maximizes
\begin{equation}\label{eq:S}
  S(\mu) \;=\; \mu\, \mathcal{W} + (1-\mu)\, \mathcal{K}
\end{equation}
for a weight $\mu \in [0,1]$ on workers.

\begin{proposition}[Generalized planner and surplus loss]\label{prop:surplus}
Suppose $\kk > 0$ and $\nn > \nn^{*}$.
\begin{enumerate}[label=(\roman*),nosep]
  \item The $\mu$-planner's optimal automation rate is
    \[
      \alpha^{SP}(\mu) \coloneqq \frac{\ss-\ell}{\kk} - \frac{\mu\,\ell}{\lambda(1-\mu)\,\kk},
    \]
    where, as usual, $\alpha^{SP}(\mu)$ is the automation rate, and is thus restricted to the interval $[0,1]$.
    At $\mu = 0$ this reduces to $\alpha^{CO}$ from \Cref{prop:alphastar}.
  \item 
    The surplus loss from the Nash equilibrium relative to the planner's optimum is
    \[
      S\!\left(\mu;\alpha^{SP}\right) - S\!\left(\mu;\alpha^{NE}\right)
        = \frac{(1-\mu)\,\nn\LL\kk}{2}\,
          \bigl[\alpha^{NE} - \alpha^{SP}(\mu)\bigr]^{2},
    \]
    valid when the planner's optimum $\alpha^{SP}(\mu)$ is interior.
  \item \textup{(Pareto dominance.)}
    $\alpha^{NE} > \alpha^{SP}(\mu)$ for every $\mu \in [0,1)$, except when cost savings are large enough to pin both rates at full automation.
    Whenever the cooperative rate is interior ($\alpha^{CO} < 1$), the Nash equilibrium is Pareto dominated by the cooperative optimum: workers and firm owners are both strictly worse off.
\end{enumerate}
\end{proposition}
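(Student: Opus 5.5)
The plan is to reduce everything to a one-dimensional concave quadratic in a common automation rate. First I will argue that the $\mu$-planner can restrict attention to symmetric profiles. Summing \eqref{eq:profit_clean} over $i$ shows that both $\mathcal{W}$ and $\mathcal{K}$ depend on the profile only through $\abar$ and through the term $-\tfrac{\kk}{2}\LL\sum_i \alpha_i^2$. For fixed $\abar$ that term is maximized at $\alpha_i \equiv \abar$ (Jensen/convexity), and since $1-\mu \ge 0$ a symmetric profile is always (weakly) optimal. With common rate $\alpha$,
\[
S(\mu;\alpha) = \mu\,\ww\LL\nn\bigl[1-(1-\eta)\alpha\bigr] + (1-\mu)\,\nn\Bigl[\Piz + \LL\bigl((\ss-\ell)\alpha - \tfrac{\kk}{2}\alpha^2\bigr)\Bigr].
\]
This is a quadratic in $\alpha$ with leading coefficient $-(1-\mu)\nn\LL\kk/2<0$ for $\mu<1$.

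For part (i), I will set $\partial S/\partial\alpha = 0$, which gives $(1-\mu)(\ss-\ell-\kk\alpha) = \mu\,\ww(1-\eta)$. I will then rewrite $\ww(1-\eta) = \ell/\lambda$ using \eqref{eq:ell}, which yields exactly the stated $\alpha^{SP}(\mu)$. The projection onto $[0,1]$ follows from concavity, and $\mu=0$ recovers the interior formula for $\alpha^{CO}$ in \Cref{prop:alphastar}.

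For part (ii), the key observation is that a quadratic equals its second-order Taylor expansion exactly. Expanding around an interior maximizer $\alpha^{SP}$, where the first derivative vanishes, gives $S(\alpha^{SP}) - S(\alpha) = \tfrac{(1-\mu)\nn\LL\kk}{2}(\alpha-\alpha^{SP})^2$ for every $\alpha$. Evaluating at $\alpha = \alpha^{NE}$, which is symmetric by \Cref{prop:alphastar}(i), gives the formula. Interiority is needed only so that the derivative vanishes at $\alpha^{SP}$.

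For part (iii), I will compare the unconstrained rates. We have $\alpha^{SP}(\mu) \le (\ss-\ell)/\kk$ because the subtracted term is nonnegative for $\mu\in[0,1)$. We also have $(\ss-\ell/\nn)/\kk - (\ss-\ell)/\kk = \ell(1-1/\nn)/\kk > 0$, using $\nn\ge 2$ and $\ell>0$. Clipping to $[0,1]$ is monotone, so it preserves the strict inequality except where both clipped values equal $1$. It cannot produce a tie at $0$, because $\nn>\nn^*$ forces $\alpha^{NE}>0$.

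For Pareto dominance when $\alpha^{CO}<1$, the argument has two sides:
\begin{itemize}
\item \emph{Workers.} $\mathcal{W}$ is strictly decreasing in $\abar$ when $\eta<1$, so workers strictly prefer $\alpha^{CO} < \alpha^{NE}$.
\item \emph{Owners.} $\alpha^{CO}$ is the unique maximizer of the strictly concave $\mathcal{K}$ over symmetric rates in $[0,1]$, and $\alpha^{NE}\neq\alpha^{CO}$, so $\mathcal{K}$ is strictly higher at $\alpha^{CO}$. By symmetry each owner's profit is $\mathcal{K}/\nn$, so every owner individually gains.
\end{itemize}

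The main obstacle I expect is the boundary and degenerate bookkeeping rather than the algebra. Two points need care. First, if $\ell=0$ (i.e.\ $\eta=1$), the wedge vanishes and $\mathcal{W}$ is flat, so the strict claims require $\ell>0$. I will note this as implicit, since $\nn>\nn^*$ is then vacuous. Second, the corner cases of the clipping must be checked carefully, especially $\alpha^{CO}=0$ with $\alpha^{NE}\in(0,1]$. Here I will use strict concavity on $[0,1]$ rather than the first-order condition to get strict owner gains.
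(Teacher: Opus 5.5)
Your proposal is correct and follows essentially the paper's route. You reduce to a strictly concave quadratic in the common rate, read off $\alpha^{SP}(\mu)$ from the first-order condition using $w(1-\eta)=\ell/\lambda$, and obtain (ii) by completing the square about the interior vertex; (iii) then follows from the sign of the wedge together with $\mathcal{W}$ strictly decreasing and $\mathcal{K}$ strictly concave with maximizer $\alpha^{CO}$. Your explicit Jensen argument for restricting the $\mu$-planner to symmetric profiles, and your monotone-clipping treatment of the corners, tighten steps the paper handles only implicitly: it borrows symmetry from the cooperative problem and argues the strict ranking through the interior wedge decomposition.
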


Over-automation is not a transfer from workers to owners: it is a deadweight loss that harms both sides (part~(iii)).
Workers lose wage income directly through displacement.
Firm owners, despite cutting costs on each automated task, also lose: collective displacement erodes demand to the point where every firm's equilibrium profit falls below its cooperative-optimum profit.
No redistribution between the two groups can make the Nash outcome efficient. \Cref{app:ge} discusses when this lost demand is genuinely destroyed rather than reabsorbed elsewhere, so that the loss is real rather than merely pecuniary.

When both $\alpha^{NE}$ and $\alpha^{SP}$ are interior, the total wedge between equilibrium and the planner's optimum decomposes into two distinct sources:
\begin{equation}\label{eq:wedge_decomp}
  \underbrace{\alpha^{NE} - \alpha^{SP}(\mu)}_{\text{total wedge}}
    = \underbrace{\frac{\ell(1-1/\nn)}{\kk}}_{\text{demand externality}}
    + \underbrace{\frac{\mu\,\ell}{\lambda(1-\mu)\,\kk}}_{\text{distributional}}.
\end{equation}
The first term is the uninternalized demand externality from \Cref{prop:alphastar}\ref{prop:wedge_N}: it is present even when the planner places zero weight on workers ($\mu = 0$) and cares only about aggregate profit.
It grows with~$\nn$, approaching $\ell/\kk$ as $\nn \to \infty$, so fragmented markets suffer disproportionately.
The second term is a distributional premium: the additional automation reduction a planner who values worker income ($\mu > 0$) would impose beyond the profit-maximizing benchmark.
It is independent of~$\nn$ but grows without bound as $\mu \to 1$; at $\bar\mu \coloneqq \lambda\kk\,\alpha^{CO}/(\ell + \lambda\kk\,\alpha^{CO})$ the planner prohibits automation entirely.
The surplus loss in~(ii) is quadratic in this total wedge and scales with~$\nn\LL$, so both fragmentation and market size amplify the welfare cost.

\begin{figure}[ht!]
  \centering
  \includegraphics[width=0.33\linewidth]{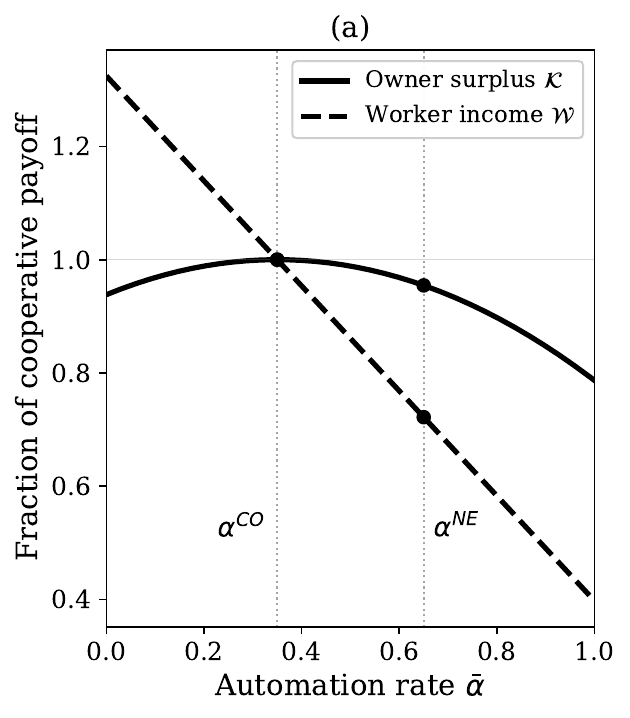}%
  \includegraphics[width=0.33\linewidth]{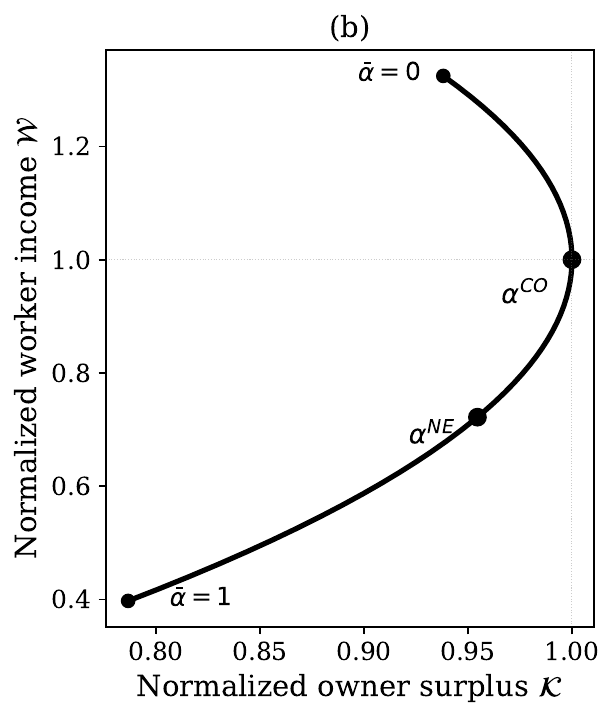}%
  \includegraphics[width=0.33\linewidth]{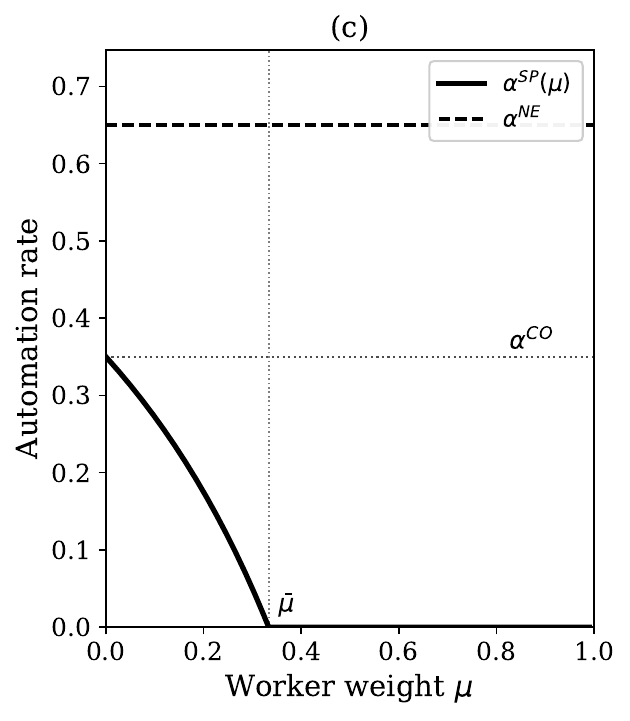}
  \caption{Welfare consequences of over-automation. All panels use $\cc = 0.30$, $\lambda = 0.5$, $\eta = 0.30$, $\nn = 7$, $\kk = 1$, $A = 10$.
  (a)~Normalized owner surplus~$\mathcal{K}$ and worker income~$\mathcal{W}$ vs.\ automation rate ($1$ = cooperative benchmark): both groups lose at $\alpha^{NE}$.
  (b)~Factor payoff frontier: $\alpha^{CO}$ at $(1,1)$, $\alpha^{NE}$ strictly southwest (Pareto dominated).
  (c)~Planner's optimum $\alpha^{SP}(\mu)$ vs.\ worker weight~$\mu$ (not normalized): the gap to $\alpha^{NE}$ is the over-automation wedge, positive even at $\mu = 0$.}\label{fig:welfare_panels}
  \end{figure}

\Cref{fig:welfare_panels} illustrates the Pareto dominance and the decomposition.
To make the losses for the two groups comparable, panels~(a) and~(b) normalize each payoff by its value at~$\alpha^{CO}$, so a value of~$1$ corresponds to the cooperative benchmark.
Panel~(a) plots the normalized payoffs against the common automation rate~$\abar$: both curves peak at or before~$\alpha^{CO}$ and both fall below~$1$ at~$\alpha^{NE}$. The equilibrium rate lies to the right of the aggregate profit peak, so that both owner surplus and worker income are lower than under cooperation. Workers bear the larger loss because their income declines linearly in~$\abar$, while the profit curve is concave and falls more gently.
Panel~(b) re-expresses the same information as a factor payoff frontier: each point on the curve corresponds to a different common automation rate, tracing out the $(\mathcal{K}, \mathcal{W})$ pairs as~$\abar$ increases. The cooperative rate sits at~$(1,1)$ and $\alpha^{NE}$ is strictly to the southwest, confirming that moving from equilibrium to the cooperative rate would make both groups better off.
Panel~(c) visualizes the decomposition in~\eqref{eq:wedge_decomp}: the horizontal line marks~$\alpha^{NE}$, and the declining curve is the planner's optimum~$\alpha^{SP}(\mu)$. Even at $\mu = 0$, the gap is positive (the demand-externality term alone), and the required correction grows further as the distributional premium widens with~$\mu$.

Since the over-automation wedge is a structural externality that harms both factor classes, a natural question is whether policy can close it.

\section{Policy Instruments}\label{sec:policy}

Several instruments could in principle address the externality; the question is which ones operate on the right margin.
To answer it, we benchmark against the cooperative optimum~$\alpha^{CO}$, which maximizes aggregate profit without directly weighting worker welfare.
This is deliberately the weakest case for intervention: \Cref{prop:surplus} shows that the demand externality alone reduces both firm profits and worker income, and that placing any positive weight on workers ($\mu > 0$) only widens the wedge.

\Cref{tab:policy} previews the results: only the Pigouvian automation tax fully corrects the distortion; the remaining instruments cushion the losers or partially shrink the wedge, but none eliminates it.\footnote{The categorical ranking that follows is stated for the regime in which the externality is active and uninternalized: $\lambda < 1$ (sectoral spending leaks), $\eta < 1$ (displacement is not fully reabsorbed), automation rates $\alpha_i$ are noncontractible and unobservable across firms, and $\nn$ is large. The points at which a non-tax instrument appears to succeed (full reabsorption $\eta = 1$, full recycling $\lambda = 1$, full profit-sharing $\ee = 1/\lambda$, which requires $\ee > 1$ when $\lambda < 1$, or a binding grand coalition over observable automation) are boundary or knife-edge cases that we flag as each instrument is discussed; each removes the very leakage or noncontractibility that constitutes the externality, so it confirms rather than overturns the ranking.}

\begin{table}[ht!]
\centering
\caption{Policy instruments and their effects on the over-automation externality.}\label{tab:policy}
\adjustbox{max width=\linewidth}{%
\begin{tabular}{lcccc}
\toprule
& Section & Changes $\nn^{*}$? & Changes wedge? & Fixes externality? \\
\midrule
Upskilling/retraining ($\eta$) & \ref{sec:eta} & Yes & Yes & Partially \\
Universal Basic Income ($A$) & \ref{sec:ubi} & No & No & No \\
Capital income tax ($t$) & \ref{sec:capital_tax} & No & No & No \\
Worker equity ($\ee$) & \ref{sec:equity} & Yes & Yes & Partially \\
Coasean bargaining ($M < \nn$) & \ref{sec:coase} & No & Partially & No \\
Automation tax ($\tau$) & \ref{sec:tax} & Yes & Yes & \textbf{Yes} \\
\bottomrule
\end{tabular}}

\smallskip
\footnotesize\textit{Note.} Each row records whether the instrument alters the automation threshold~$\nn^{*}$, the over-automation wedge $\ell(1-1/\nn)/\kk$, and whether it fully corrects the demand externality.
\end{table}
One limitation should be noted: the analysis evaluates each instrument against a single margin, the demand externality identified in \Cref{sec:analysis}, holding all other features of the economy fixed.
In practice, every instrument carries additional costs and benefits outside the model (administrative burden, labor-market distortions, political feasibility) that a full welfare analysis would need to weigh.
Nonetheless, an instrument that does not operate on the externality margin cannot correct the distortion regardless of how it scores on other dimensions; the analysis below separates instruments that can from those that cannot.

\subsection{Displacement vs. Upskilling}\label{sec:eta}

The demand-loss parameter $\ell = \lambda(1-\eta)\ww$ governs the externality's magnitude.
In the baseline model, $\eta \in [0,1]$ represents the fraction of displaced wage income recovered through reemployment, transfers, or other sources: higher~$\eta$ shrinks~$\ell$ and thereby the over-automation wedge.

But the parameter extends naturally beyond unity.
When $\eta > 1$, upskilling and reabsorption place displaced workers into higher-paying roles, automation \emph{increases} aggregate labor income, and loss $\ell$ turns negative, which we can interpret as a gain.
This is the scenario invoked by AI optimists, in which technological displacement is a stepping stone to better jobs.
As the following corollary shows, the sign reversal in~$\ell$ flips the externality itself.

\begin{corollary}[Sign of the externality]\label{cor:eta}
In the interior regime where both rates lie strictly in $(0,1)$ (\Cref{lem:boundary}), the over-automation wedge is
$(\alpha^{NE} - \alpha^{CO}) = \ell\,(1-1/\nn)/\kk$,
which is maximized at $\eta = 0$ (where $\ell = \lambda\ww$), positive for all $\eta < 1$, zero when $\eta = 1$ (where $\alpha^{NE} = \alpha^{CO} = \ss/\kk$), and negative (under-automation) when $\eta > 1$.
\end{corollary}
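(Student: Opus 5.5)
The plan is to re-derive the two interior first-order conditions directly from the profit expression \eqref{eq:profit_expanded}, allowing any real $\ell$, so that nothing depends on the sign of $\ell$. This matters because \Cref{prop:alphastar} was stated under $\eta\in[0,1]$ and $\nn>\nn^{*}$, whereas here $\eta>1$ and hence $\ell<0$ is allowed.

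\emph{Nash rate.} Firm~$i$'s objective is strictly concave in $\alpha_i$ when $\kk>0$. Rivals enter only through the additive term $-(\ell/\nn)\sum_{j\neq i}\alpha_j$, which does not involve $\alpha_i$. An interior solution therefore satisfies
\[
\alpha^{NE}=(\ss-\ell/\nn)/\kk
\]
whatever the sign of $\ell$.

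\emph{Cooperative rate.} Set a common rate $\alpha$ for all firms. Aggregate profit is then
\[
\mathcal{K}=\nn\Piz+\nn\LL\bigl(\ss\alpha-\ell\alpha-\tfrac{\kk}{2}\alpha^2\bigr),
\]
which is strictly concave. Its interior maximizer is $\alpha^{CO}=(\ss-\ell)/\kk$.

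I would note one point for completeness. The cooperative benchmark in \Cref{prop:alphastar}\ref{prop:CO} is stated over symmetric profiles, but by convexity of the friction term any asymmetric profile is weakly dominated by its symmetric average. The symmetric restriction is therefore without loss.

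Under the interior-regime hypothesis (\Cref{lem:boundary}), both expressions lie in $(0,1)$, so no clipping at $0$ or $1$ applies. Subtracting gives
\[
\alpha^{NE}-\alpha^{CO}=\frac{\ell-\ell/\nn}{\kk}=\frac{\ell(1-1/\nn)}{\kk}.
\]
This is the same formula as \Cref{prop:alphastar}\ref{prop:wedge_N}, now valid for $\ell$ of either sign.

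The remaining claims are sign and monotonicity statements in $\eta$. Because $\nn\ge2$ and $\kk>0$, the factor $(1-1/\nn)/\kk$ is strictly positive. The wedge therefore has the sign of $\ell=\lambda(1-\eta)\ww$. With $\lambda>0$ and $\ww>0$, this sign is that of $1-\eta$:
\begin{itemize}
\item positive for $\eta<1$;
\item zero at $\eta=1$, where substituting $\ell=0$ into both formulas gives $\alpha^{NE}=\alpha^{CO}=\ss/\kk$;
\item negative for $\eta>1$.
\end{itemize}

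For the maximization claim, the wedge is affine and strictly decreasing in $\eta$. On the baseline domain $\eta\in[0,1]$ (extended to $\eta\ge0$), it is therefore maximized at $\eta=0$, where $\ell=\lambda\ww$.

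One subtlety needs care. As $\eta$ varies, the interior hypothesis may fail at some values, since $\alpha^{CO}=(\ss-\ell)/\kk$ rises as $\eta$ increases. I would phrase the maximization claim as a statement about the interior formula, or equivalently restrict attention to those $\eta$ for which the regime of \Cref{lem:boundary} holds.

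The only genuine obstacle is legitimacy outside the original parameter range, that is, checking that nothing in the derivation used $\ell\ge0$ or $\nn>\nn^{*}$. The derivation above relies only on concavity ($\kk>0$) and the interior hypothesis, so the extension is immediate.
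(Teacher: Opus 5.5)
Your proposal is correct and follows essentially the same route as the paper. Both reuse the interior first-order conditions behind \Cref{prop:alphastar}, which never depend on the sign of $\ell$, subtract them to obtain $\ell(1-1/\nn)/\kk$, and read off the three sign cases (and the maximum at $\eta=0$) from $\ell=\lambda(1-\eta)\ww$; your extra caveats, that the symmetric cooperative restriction is harmless and that the maximization claim concerns the interior formula on $\eta\ge 0$, are sound refinements the paper leaves implicit.
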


The logic is symmetric.
When $\eta < 1$, displacement destroys demand, and each firm bears only $1/\nn$ of the loss, producing over-automation.
When $\eta > 1$, displacement \emph{creates} demand through higher reemployment wages, and each firm captures only $1/\nn$ of the gain, producing under-automation.
In both cases, the distortion grows with~$\nn$: more competition dilutes each firm's share of the externality, whether that externality is negative or positive.
A monopolist ($\nn = 1$) fully internalizes in every case.

The competitive forces are identical; only the sign of the demand externality differs.
As \Cref{sec:tax} will show, the same corrective instrument addresses both cases: a tax when $\eta < 1$, a subsidy when $\eta > 1$.

The case $\eta > 1$ is not merely theoretical.
Historical technological transitions have often eventually reabsorbed displaced workers at higher wages \citep{acemoglu2019automation}, and the current AI buildout offers a concrete channel: the expansion of data centers, energy infrastructure, and AI-adjacent services is creating skilled roles that can pay more than the positions automation displaces.
If this reabsorption is fast enough to push $\eta$ above unity, competitive firms will automate \emph{too slowly}.
However, past displacement episodes have consistently produced $\eta < 1$: displaced workers suffer large, persistent earnings losses \citep{jacobson1993earnings}, and there is little evidence yet that AI-driven displacement will differ, placing most economies firmly in the over-automation regime.

The policy implication is that raising~$\eta$ through retraining programs, wage insurance, and incentives for new firm creation is not merely a palliative for displaced workers but a direct lever on the externality: every unit increase in~$\eta$ toward unity shrinks~$\ell$, narrows the over-automation wedge, and reduces the burden placed on the corrective instruments analyzed below.
Only at $\eta = 1$ does upskilling close the wedge outright (\Cref{cor:eta}), but that is exactly where the demand loss $\ell$ vanishes, so it marks the edge of the externality rather than a remedy within it.
Pushing $\eta$ past unity would flip the distortion into under-automation, but this is a far less pressing concern: in that regime, displaced workers are already thriving in higher-paying roles.

\subsection{Universal Basic Income}\label{sec:ubi}

Among the most discussed responses to automation-driven displacement is a universal basic income.
In the model, a UBI funded from general revenue maps to an increase in autonomous demand~$A$: because the transfer is unconditional, employed and displaced workers receive the same payment, adding a constant to aggregate spending without altering the marginal income loss from displacement.
This distinguishes UBI from displacement-targeted transfers (wage insurance, severance), which raise the income-replacement rate~$\eta$ and directly shrink~$\ell$; see \Cref{sec:eta}.
The results below concern this modeled object and should not be read as a verdict on all UBI designs.

Because UBI adds a constant to demand, it enters firm profit only through $\Piz = A/\nn + (\lambda - 1)\ww\LL$, the baseline profit when no firm automates.
This term drops out of the first-order condition $\ss - \ell/\nn - \kk\alpha_i = 0$: a higher~$A$ raises the profit floor but changes neither the cost saving~$\ss$ nor the demand loss~$\ell$ that determine the automation rate.
Consequently, UBI alters neither the automation threshold $\nn^{*} = \ell/\ss$ nor the over-automation wedge $\ell(1-1/\nn)/\kk$.
In the language of game theory, UBI changes payoff \emph{levels} but not the payoff \emph{differences} that drive strategic behavior.
More generally, instruments that operate on profit levels can redistribute income but cannot correct the externality; only instruments that change the per-task automation margin can.

Despite not correcting the externality, UBI serves a complementary role.
A higher~$\Piz$ cushions profit losses from over-automation, while the transfer itself raises the floor on workers' living standards, buying time for corrective instruments that operate on the right margin.

Within the model, UBI is a \emph{complement} to the automation tax, not a substitute: a society that relies solely on UBI will over-automate at the same rate, with a higher floor on living standards but the same externality.

This limitation is one of scope, not of magnitude: because UBI is unconditional, no level of payment alters the marginal displacement decision. What UBI does support is the level of demand and of living standards, a role that becomes constructive when the transfer is financed by the automation tax rather than from general revenue, which displacement itself erodes (\Cref{sec:tax}). This complementarity is specific to the regime in which a wedge exists: once automation is collectively optimal, as when AI comes to dominate human labor, the corrective role disappears and only the recycling role remains (\Cref{sec:postlabor}).

\subsection{Capital Income Taxation}\label{sec:capital_tax}

If unconditional transfers do not alter the automation incentive, a natural alternative is to tax the \emph{proceeds} of automation directly.
Consider a proportional tax~$t \in (0,1)$ on capital income (profits), with revenue rebated economy-wide (e.g., as general transfers, public goods, or debt service) so that its contribution to this sector's demand does not depend on the firms' automation choices and is absorbed into the autonomous component~$A$.\footnote{The polar closed-loop case, in which revenue flows immediately to workers who spend a $\lambda$-fraction in the same sector, is isomorphic to mandated worker equity at profit-sharing rate $\ee = t$ and is analyzed in \Cref{sec:equity}: it narrows the wedge but cannot close it whenever $\lambda < 1$.}
Firm~$i$ now maximizes $(1-t)\pi_i$, but because $(1-t)$ is a positive scalar it cancels from the first-order condition: the equilibrium automation rate, the threshold~$\nn^{*}$, and the over-automation wedge are all unchanged.
On the revenue side, $A$ enters per-firm profit only through the constant $\Piz = A/\nn + (\lambda - 1)\ww\LL$, which does not appear in the first-order condition.
If revenue instead funds displacement insurance that raises~$\eta$, the externality shrinks through~$\ell$, but the operative channel is~$\eta$, not the profit tax itself.

The distinction matters because capital income taxes are often conflated with robot taxes in the policy debate.
The robot taxes studied in the literature \citep[e.g.,][]{guerreiro2022should} are per-unit levies on adoption, which operate on the per-task margin; a proportional capital income tax is a fundamentally different instrument that scales the entire profit function by $(1-t)$ and cancels from the optimality condition.
The failure is structurally identical to that of UBI (\Cref{sec:ubi}): both instruments shift profit levels rather than operating on the margin where the externality resides.
It follows that a profit tax paired with an unconditional transfer, two level instruments, cannot correct the externality either; the corrective leg must be the automation tax, with which such a transfer instead becomes a complement (\Cref{sec:tax}).

\subsection{Worker Equity Participation}\label{sec:equity}

A market-based alternative to taxation, rooted in the profit-sharing literature \citep{weitzman1985sharing}, gives workers a direct stake in the profits that automation generates.
Unlike UBI, which enters demand only through the level term~$A$, profit-sharing flows through the profit function and therefore interacts with automation decisions.
Suppose each firm distributes a fraction $\ee \in [0,1]$ of its profits to workers (through ESOPs, equity grants, or co-determination mandates).
Workers spend a $\lambda$-fraction of this income in the sector, so profit-sharing recycles capital income back into demand.

Aggregate demand now satisfies a fixed-point condition: $D = A + \lambda[\text{wage income} + \ee\sum_i \pi_i]$, where $\sum_i \pi_i = D - \sum_i C_i$.
Because profits depend on~$D$, demand is determined simultaneously with the automation decision; the proof solves this fixed point explicitly.

\begin{proposition}[Worker equity reduces but cannot eliminate the wedge]\label{prop:equity}
Let $\ee \in [0,1]$ and $\kk > 0$, and suppose the equilibrium is interior.
Define $\EE \coloneqq \nn - \lambda\ee(\nn-1)$.
\begin{enumerate}[label=(\roman*),nosep]
  \item The cooperative optimum is unchanged: $\alpha^{CO}(\ee) = (\ss - \ell)/\kk$, where, as usual $\alpha^{CO}$ is restricted to $[0,1]$.
  \item The Nash equilibrium automation rate is $\alpha^{NE}(\ee) = (\ss - \ell/\EE)/\kk$ (restricted to $[0,1]$).
  \item When both $\alpha^{NE}$ and $\alpha^{CO}$ are interior solutions, the over-automation wedge
    \[
      \alpha^{NE}(\ee) - \alpha^{CO} = \frac{\ell(\nn-1)(1-\lambda\ee)}{\kk\,\EE}
    \]
    is strictly decreasing in~$\ee$ but strictly positive for all $\ee < 1/\lambda$.
    The wedge vanishes only at $\ee = 1/\lambda$, which requires $\ee > 1$ whenever $\lambda < 1$.
\end{enumerate}
\end{proposition}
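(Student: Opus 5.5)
The plan is to solve the demand fixed point in closed form first, then redo the first-order analysis of \Cref{prop:alphastar} with the modified revenue function.

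\emph{Step 1: solve for demand.} With profit-sharing, aggregate expenditure satisfies $D = A + \lambda\ww\LL[\nn - (1-\eta)\textstyle\sum_j\alpha_j] + \lambda\ee\bigl(D - \sum_j C_j\bigr)$. For $\lambda\ee<1$ this solves uniquely as
\[
D(\boldsymbol{\alpha}) = \frac{A + \lambda\ww\LL\nn - \ell\LL\sum_j\alpha_j - \lambda\ee\sum_j C_j(\alpha_j)}{1-\lambda\ee}.
\]
The boundary $\lambda\ee = 1$ is excluded here; it is the knife-edge at which the wedge formula in (iii) vanishes. I will take firm $i$'s owner to maximize retained profit $(1-\ee)\pi_i$, where $\pi_i = D/\nn - C_i$. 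Since $(1-\ee)$ is a positive scalar for $\ee<1$, it drops out exactly as in \Cref{sec:capital_tax}, and the firm effectively maximizes $\pi_i$. At $\ee = 1$ I would instead read the objective as $\pi_i$ directly.

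\emph{Step 2: Nash condition.} Write $C_i' = -\LL(\ss - \kk\alpha_i)$. Then
\[
\frac{\partial D}{\partial\alpha_i} = \frac{-\ell\LL - \lambda\ee C_i'}{1-\lambda\ee},
\]
so $\partial\pi_i/\partial\alpha_i = \partial D/\partial\alpha_i \cdot (1/\nn) - C_i'$. Setting this to zero and collecting the $C_i'$ terms gives
\[
-C_i'\,\bigl[\nn(1-\lambda\ee) + \lambda\ee\bigr] = \ell\LL .
\]
The bracket equals $\EE$, so the condition reads $\ss - \kk\alpha_i = \ell/\EE$. The objective is concave in $\alpha_i$, and rivals' choices enter only additively, as in \Cref{prop:alphastar}. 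Hence this is again a dominant strategy, and clipping to $[0,1]$ yields (ii). I must check that $\EE>0$: for $\ee\le1$ and $\lambda\le1$ we have $\EE \ge \nn - (\nn-1) = 1$.

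\emph{Step 3: cooperative optimum.} Aggregate profit is
\[
\sum_i\pi_i = D - \sum_j C_j = \frac{A + \lambda\ww\LL\nn - \ell\LL\sum_j\alpha_j - \sum_j C_j}{1-\lambda\ee}.
\]
Maximizing this over a common $\alpha$ is the baseline problem rescaled by the positive constant $1/(1-\lambda\ee)$. The optimizer is therefore $(\ss-\ell)/\kk$ clipped to $[0,1]$, which is (i).

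\emph{Step 4: wedge and its properties.} When both rates are interior, the wedge is
\[
\frac{\ell}{\kk}\Bigl(1 - \frac{1}{\EE}\Bigr) = \frac{\ell(\EE - 1)}{\kk\,\EE} = \frac{\ell(\nn-1)(1-\lambda\ee)}{\kk\,\EE}.
\]
For monotonicity, $\EE$ is strictly decreasing in $\ee$ (because $\nn\ge2$ and $\lambda>0$), so $1 - 1/\EE$ is strictly decreasing in $\ee$. For the sign, the numerator factor $1-\lambda\ee$ is positive exactly when $\ee<1/\lambda$, and $\EE>0$. The wedge vanishes only at $\ee = 1/\lambda$, which exceeds $1$ whenever $\lambda<1$.

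The main obstacle is conceptual rather than computational. It lies in Step 1: justifying the owner's objective, namely that the retained share $(1-\ee)$ cancels, and keeping the fixed point well posed, namely $\lambda\ee<1$. The rest is the one-line algebra identifying $\nn(1-\lambda\ee)+\lambda\ee$ with $\EE$.
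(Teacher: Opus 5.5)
Your proposal is correct. It follows the paper's skeleton: solve the demand fixed point, take the private first-order condition, solve the planner's problem, and subtract. Two steps, however, take a different and somewhat cleaner route.

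For part~(i), the paper differentiates aggregate profit at a common rate. It then checks by algebra that the modified loss $\ell_\ee = \ell - \lambda\ee\ss$ and the multiplier $1/(1-\lambda\ee)$ cancel in the first-order condition. You instead use the exact identity $\sum_i \pi_i = \bigl(A + \lambda\ww\LL\nn - \ell\LL\sum_j\alpha_j - \sum_j C_j\bigr)/(1-\lambda\ee)$, which is baseline aggregate profit times a positive constant. This transfers all of \Cref{prop:alphastar}\ref{prop:CO} at once, including symmetry of the optimum over asymmetric profiles and the corner cases, which the paper's interior first-order condition does not address. It also explains why the cancellation occurs. It is the same multiplier logic the paper uses in the $\kk>0$ extension of \Cref{prop:recycling}. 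Indeed, on the demand side profit-sharing at rate $\ee$ is recycling at rate $\ehat = \lambda\ee$, and your $\EE$ is exactly $\nnhat$.

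Second, you write the fixed point with $\sum_j C_j(\alpha_j)$ at an arbitrary profile. The paper's $\sum_i C_i = \nn\LL(\ww - \ss\abar + \tfrac{\kk}{2}\abar^2)$ is exact only at symmetric profiles. The two derivatives coincide once symmetry is imposed, so the equilibrium rate is the same. Only your form, though, supports your added claim that this rate is a dominant strategy.

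Finally, your monotonicity argument through $\ell(1-1/\EE)/\kk$, with $\EE$ decreasing in $\ee$, replaces the paper's quotient-rule computation. Like the paper, you tacitly need $\ell>0$ (i.e.\ $\eta<1$) for the strict monotonicity and strict positivity claims.
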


Part~(i) is not immediate: profit-sharing changes the demand function by recycling capital income into worker spending, so one might expect it to shift the planner's optimum.
The result follows because the planner already controls all $\nn$ firms and thus fully internalizes the demand externality.
In the planner's first-order condition, the profit-sharing terms cancel: the modified demand-loss parameter $\ell_\ee = \ell - \lambda\ee\ss$ and the demand multiplier $1/(1-\lambda\ee)$ exactly offset, leaving $\kk\alpha = \ss - \ell$ regardless of~$\ee$.

While the cooperative optimum is unaffected, the Nash equilibrium does shift. Intuitively, when workers hold equity, part of the demand lost through displacement is recycled back through profit shares; each firm therefore perceives a larger effective demand loss from its own automation than in the baseline, and restrains accordingly. The magnitude of this shift is governed by the compound parameter $\EE = \nn - \lambda\ee(\nn-1)$.
This parameter measures the effective demand-leakage divisor: at $\ee = 0$ it equals~$\nn$, recovering the baseline in which each firm perceives demand loss $\ell/\nn$ per automated task.
As $\ee$ rises, $\EE$ falls toward~$1$, pushing $\alpha^{NE}$ toward the cooperative optimum.

Despite this improvement, the recycling cannot fully close the wedge whenever $\lambda < 1$.
Closing the wedge requires the product $\lambda\ee$ to reach one, i.e., $\ee = 1/\lambda$.
When $\lambda < 1$ this exceeds the feasible range $\ee \in [0,1]$: each unit of profit recycled to workers generates only $\lambda$ units of sectoral demand, so compensating for the leakage would require sharing more than the firm's entire profit.
Even at $\ee = 1$ (full profit-sharing), the wedge reduces to $\ell(\nn-1)(1-\lambda)/[\kk(\nn - \lambda(\nn-1))]$, which remains strictly positive.
(The exception is the knife-edge $\lambda = 1$, $\ee = 1$: there the demand fixed point~\eqref{eq:D_equity} has a vanishing divisor $1 - \lambda\ee$, so the wedge closes only in the limit $\lambda\ee \to 1$, and only because $\lambda = 1$ removes the spending leakage that drives the externality. For all $\lambda < 1$ it stays strictly positive.)

The structural limitation is that the externality is fundamentally multilateral: each firm's automation depresses demand for all~$\nn$ firms, and bilateral arrangements between a firm and its own workers cannot reach the demand that leaks to rivals.

A separate question is whether profit-sharing would arise voluntarily.

\begin{corollary}[Voluntary profit-sharing does not arise]\label{cor:equity_voluntary}
If each firm independently chooses its own profit-sharing rate $\ee_i \in [0,1]$ to maximize retained profit $(1-\ee_i)\pi_i$, then $\ee_i = 0$ is a dominant strategy.
\end{corollary}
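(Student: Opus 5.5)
We need to show that retained profit $(1-\ee_i)\pi_i$ is strictly decreasing in $\ee_i$ whatever rivals choose, so the plan is to write firm $i$'s retained profit as an explicit function of $\ee_i$ with everything else held fixed. Profit-sharing affects $i$'s payoff through two channels. The first is the direct factor $(1-\ee_i)$. The second is the demand channel: the distributed share $\ee_i\pi_i$ is paid to workers, who spend a $\lambda$-fraction of it in the sector, raising $D$ and hence $\Rev_i = D/\nn$. I will set up the fixed point as in the proof of \Cref{prop:equity}, now allowing heterogeneous rates:
\[
  D = A + \lambda\Bigl[\mathcal{W} + \textstyle\sum_j \ee_j(D/\nn - C_j)\Bigr],
\]
and solve it for $D$:
\[
  D = \frac{A + \lambda\mathcal{W} - \lambda\sum_j \ee_j C_j}{1 - \lambda\bar\ee},
  \qquad \bar\ee \coloneqq \tfrac{1}{\nn}\textstyle\sum_j \ee_j.
\]
I will hold all automation rates fixed, including $\alpha_i$. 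The claim concerns the choice of $\ee_i$, and dominance at every profile of the other variables suffices; the same argument also covers the case where $\alpha_i$ is chosen optimally afterward, via an envelope argument.

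The key step is to compute $\partial \pi_i/\partial \ee_i$. Differentiating the fixed point gives
\[
  \frac{\partial D}{\partial \ee_i} = \frac{\lambda\,\pi_i}{1-\lambda\bar\ee}.
\]
To see this, note that raising $\ee_i$ injects $\lambda\pi_i$ of spending, and the multiplier $1/(1-\lambda\bar\ee)$ then applies. Firm $i$ keeps only a $1/\nn$ share of the resulting revenue, so
\[
  \frac{\partial \pi_i}{\partial \ee_i} = \frac{\lambda\pi_i}{\nn(1-\lambda\bar\ee)}.
\]
Then
\[
  \frac{\partial}{\partial \ee_i}\bigl[(1-\ee_i)\pi_i\bigr]
  = -\pi_i + (1-\ee_i)\frac{\lambda\pi_i}{\nn(1-\lambda\bar\ee)}
  = -\pi_i\Bigl[1 - \frac{\lambda(1-\ee_i)}{\nn - \lambda\sum_j\ee_j}\Bigr].
\]
The bracket is positive because $\nn - \lambda\sum_j \ee_j \ge \nn - \lambda(\nn-1) - \lambda\ee_i > \lambda(1-\ee_i)$. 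This inequality uses $\nn \ge 2$, $\lambda \le 1$, and $\ee_j \le 1$; it is where $\nn \ge 2$ enters, since it makes $\nn - \lambda(\nn-1) - \lambda > 0$ whenever $\lambda < 1$ or $\nn > 2$. I will check the edge cases separately, where the divisor $1-\lambda\bar\ee$ may vanish (the knife-edge flagged after \Cref{prop:equity}). Consequently, whenever $\pi_i > 0$, retained profit is strictly decreasing in $\ee_i$, so $\ee_i = 0$ is optimal against every rival profile, which is dominance.

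The main obstacle is the sign of $\pi_i$. If $\pi_i < 0$ the derivative flips, and sharing a negative profit would perversely help retained profit. I will handle this first by restricting to the relevant regime where firms operate with $\pi_i > 0$, which is implicit in the profit-sharing interpretation (one cannot distribute losses, so I would model the payout as $\ee_i\max(\pi_i,0)$). Under that convention the case $\pi_i \le 0$ gives a derivative of zero or a flat payoff, so $\ee_i = 0$ is still weakly optimal. I would state strict dominance for the $\pi_i > 0$ case and note the convention.

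The economic reading then follows directly: firm $i$ gives up a full unit of retained profit per unit shared but recaptures only a $\lambda/\nn$-scaled multiplier share. That share is less than one, for the same multilateral-leakage reason behind \Cref{prop:equity}.
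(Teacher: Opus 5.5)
Your proposal is correct and follows essentially the same route as the paper. Your derivative $-\pi_i\bigl[1-\lambda(1-\ee_i)/(\nn-\lambda\sum_j\ee_j)\bigr]$ is identical to the paper's $R_i'(\ee_i)=-M_i\hat\pi_i(M_i-\lambda)/(M_i-\lambda\ee_i)^2$ with $M_i=\nn-\lambda\sum_{j\neq i}\ee_j$; the paper solves the own-sharing fixed point for $\pi_i(\ee_i)$ in closed form before differentiating, whereas you differentiate the fixed point directly, and both finish with the bound $\sum_{j\neq i}\ee_j\le\nn-1$.

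One small correction: your second inequality reduces to $\nn(1-\lambda)>0$. It therefore holds exactly when $\lambda<1$, for every $\nn$, not when ``$\lambda<1$ or $\nn>2$''. The exceptional case is $\lambda=1$ with all rivals fully sharing, where retained profit is flat in $\ee_i$ on $[0,1)$ and $\ee_i=0$ is only weakly dominant. This is the same knife-edge the paper flags, so it is not a gap in your argument.
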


The marginal cost of sharing is~$\pi_i$ (a dollar-for-dollar reduction in retained earnings), while the marginal demand benefit is only $\lambda\pi_i/\nn$: workers spend fraction~$\lambda$ of the shared profit in the sector, and firm~$i$ captures $1/\nn$ of the resulting demand increase.
Since $\lambda/\nn < 1$ for any $\nn \geq 2$, the cost strictly exceeds the benefit.
This is a second-order coordination failure layered on top of the automation externality itself, mirroring the Prisoner's Dilemma structure of \Cref{sec:PD}.

Profit-sharing must therefore be mandated to have any effect, and even then it cannot substitute for a corrective tax: it narrows the wedge but cannot eliminate it, and unlike a corrective tax (\Cref{sec:tax}), does not generate government revenue for retraining programs that would raise~$\eta$.

\subsection{Coasean Bargaining}\label{sec:coase}

None of the instruments considered so far fully closes the over-automation wedge, and worker equity will not arise voluntarily (\Cref{cor:equity_voluntary}). A natural question is whether private ordering could succeed where these instruments have not. By the Coase Theorem \citep{coase1960problem}, if property rights over the externality were well-defined and transaction costs sufficiently low, bargaining could achieve the cooperative optimum without government intervention.
To evaluate this possibility, it helps to separate two questions: can bargaining between a firm and its own workers correct the externality, and can bargaining among firms do so?
As we show below, neither can.

\paragraph{Bargaining between a firm and its own workers.}
If displaced workers can bargain for compensation, say a per-task severance payment~$\sigma$, the firm's effective cost saving falls from $\ss$ to $\ss - \sigma$, and the equilibrium automation rate drops.
But the severance also recycles income into demand: displaced workers spend a fraction~$\lambda$ of their compensation in the sector, reducing the effective demand-loss parameter from~$\ell$ to $\ell - \lambda\sigma$.
The over-automation wedge therefore becomes $(1-1/\nn)(\ell - \lambda\sigma)/\kk$, which is smaller but still positive.
This is operationally equivalent to raising the income-replacement rate~$\eta$ by $\sigma/\ww$; the same logic applies to equity stakes through the profit-sharing parameter~$\ee$ (analyzed in \Cref{sec:equity}).
As \Cref{sec:eta} shows, raising~$\eta$ narrows the wedge but cannot close it: the externality persists as long as displaced workers' spending is not fully replaced, which requires $\lambda\sigma = \ell$, i.e., full income replacement through bargaining alone.

Moreover, the uninternalized portion of the externality does not fall on firm~$i$'s own workers at all.
When firm~$i$ automates, the demand loss $\ell(1-1/\nn)\LL$ reduces revenue at rival firms, lowering their owners' profits (\Cref{eq:revenue}).
Workers at rival firms who retain their positions continue to earn~$\ww$ per task; they are not directly harmed and have no basis for negotiation with firm~$i$.
The externality is therefore a firm-to-firm channel running through the product market, not a firm-to-worker channel that bilateral bargaining can reach.

\paragraph{Firm-to-firm bargaining.}
Since the externality runs across firms, consider a coalition of $M \leq \nn$ firms that jointly choose automation rates to maximize their combined profit, while the remaining $\nn - M$ firms play Nash.

\begin{proposition}[Partial coalitions cannot eliminate the wedge]\label{cor:coase}
Let $\kk > 0$ and suppose the equilibrium is interior.
A coalition of $M$ firms that jointly maximizes its members' combined profit chooses the common automation rate
\[
  \alpha^{M} = \frac{\ss - M\ell/\nn}{\kk}.
\]
The residual over-automation wedge relative to the cooperative optimum is
\[
  \alpha^{M} - \alpha^{CO} = \frac{\ell(1 - M/\nn)}{\kk} > 0 \quad \text{for all } M < \nn.
\]
The wedge vanishes only when $M = \nn$: only the grand coalition replicates the cooperative optimum.
\end{proposition}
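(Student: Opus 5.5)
The approach is to work directly from the expanded profit expression \eqref{eq:profit_expanded} and sum it over the coalition members. Let $\mathcal{M}$ denote the coalition, with $|\mathcal{M}| = M$. Its objective is
\[
\Pi^{\mathcal{M}} = \sum_{i\in\mathcal{M}} \pi_i = M\Piz + \LL\sum_{i\in\mathcal{M}}\Bigl(\ss\,\alpha_i - \tfrac{\kk}{2}\alpha_i^2\Bigr) - M\LL\,\ell\,\abar .
\]
Here I use that every firm's revenue depends on the common average $\abar$, so each of the $M$ members bears the term $-\ell\LL\abar$.

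Next, write $\abar = \frac{1}{\nn}\bigl(\sum_{j\in\mathcal{M}}\alpha_j + \sum_{j\notin\mathcal{M}}\alpha_j\bigr)$ and differentiate with respect to a member's rate $\alpha_i$. This gives
\[
\partial \Pi^{\mathcal{M}}/\partial\alpha_i = \LL\bigl(\ss - \kk\alpha_i - M\ell/\nn\bigr).
\]
The coalition internalizes the share $M/\nn$ of the demand loss, because a unit of its automation lowers demand by $\ell\LL$, and its members jointly hold the fraction $M/\nn$ of revenue.

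The objective is additively separable and strictly concave in the members' rates, since $\kk>0$. Outsiders enter only through a term that is linear and independent of the coalition's choices. The first-order condition therefore has a unique solution, the common rate $\alpha^M = (\ss - M\ell/\nn)/\kk$, and this choice is dominant for the coalition, just as in \Cref{prop:alphastar}(i). Interiority is assumed in the statement, so no clipping to $[0,1]$ is needed.

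The non-members' behavior is irrelevant to $\alpha^M$, but for completeness I note that each outsider still plays the dominant strategy $\alpha^{NE}$ from \Cref{prop:alphastar}. Its first-order condition is unaffected by the coalition, because rivals' rates enter \eqref{eq:profit_expanded} only additively.

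Subtracting $\alpha^{CO} = (\ss-\ell)/\kk$ from \Cref{prop:alphastar}\ref{prop:CO}, again in the interior case, gives the residual wedge
\[
\alpha^M - \alpha^{CO} = \ell(1-M/\nn)/\kk .
\]
With $\ell>0$ in the regime $\eta<1$, this is strictly positive for all $M<\nn$ and zero exactly at $M=\nn$. At $M=\nn$ the coalition objective coincides with $\mathcal{K}$, so the grand coalition replicates the cooperative optimum. As a consistency check, $M=1$ recovers $\alpha^{NE}$.

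There is no serious obstacle. The one step needing care is the bookkeeping for the demand term: each of the $M$ members loses $\ell\LL/\nn$ per unit of any member's automation, which gives the factor $M/\nn$ rather than $1/\nn$ or $1$. I would also remark that interiority of both $\alpha^M$ and $\alpha^{CO}$ is what licenses the simple subtraction; outside that regime the comparison follows the boundary cases of \Cref{prop:alphastar}.
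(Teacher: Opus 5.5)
Your proof is correct and follows essentially the same route as the paper. Both start from the per-firm profit function and note that each of the $M$ members loses $\ell\LL/\nn$ per unit of any member's automation. This gives the coalition first-order condition $\ss - M\ell/\nn - \kk\alpha = 0$, and subtracting $\alpha^{CO}$ in the interior regime yields the wedge. The only difference is that the paper imposes a common coalition rate from the outset, whereas you optimize over each member's rate separately and derive the symmetry from the additive separability and strict concavity of the coalition objective, which is a slightly tighter justification of the same step.
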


The intuition is that a coalition of $M$ firms internalizes $M/\nn$ of the aggregate demand loss; the rest accrues to non-members who enjoy the coalition's restraint without bearing its cost. This is the classic free-rider problem of collective restraint, whose sign-flipped twin is the under-provision of trade credit among rival suppliers, which similarly worsens as the market fragments \citep{chod2019trade}.
Four features of the automation externality prevent the grand coalition from forming.
First, voluntary agreements are not self-enforcing: in the frictionless limit (\Cref{cor:frictionless}), automation is strictly dominant, so a coalition member gains from deviating regardless of whether others honor the agreement.
This is not a coordination failure that communication can resolve; the Prisoner's Dilemma structure means no non-binding arrangement is stable.
With convex costs ($\kk > 0$), the deviation incentive is continuous but still positive.
Second, the externality is multilateral and diffuse.
The Coase Theorem's canonical applications involve bilateral or small-number settings; here, each of $\nn$ firms imposes demand losses on all $\nn - 1$ others.
Each firm's individual contribution to the demand loss is $\ell\LL/\nn$, too small to motivate any single negotiation yet too large in aggregate to ignore.
This is precisely the large-numbers setting in which \citet{coase1960problem} himself acknowledged that private bargaining breaks down.
Third, the automation rate $\alpha_i$ is not contractible among firms: it is an internal organizational choice that rival firms cannot observe or verify, making binding private agreements impractical.
Fourth, automation decisions involve large sunk costs and are substantially irreversible, so even in a repeated setting, trigger-strategy punishments cannot undo a deviation; a firm that delays while rivals proceed loses market share (see \Cref{sec:phi} below); and large $\nn$ makes defection harder to detect and punishment harder to sustain.

To summarize, the demand externality studied here is not a market failure that private ordering can cure.
Worker-side bargaining operates on within-firm channels ($\eta$, $\ee$) that cannot reach the cross-firm margin where the externality resides; firm-to-firm bargaining targets the right margin but cannot sustain the grand coalition needed to close the wedge.\footnote{An alternative to bargaining is common ownership: a merger of $M$ firms implements the coalition of \Cref{cor:coase}, internalizing $M/\nn$ of the externality.}
The fundamental obstacle is incentive compatibility, not transaction costs: even with costless negotiation, the automation game retains its dominant-strategy structure. The only exceptions, the grand coalition $M = \nn$ (\Cref{cor:coase}) and common ownership, require automation to be observable and contractible across firms, which the maintained regime rules out.
Correcting the externality therefore requires an instrument that does not rely on voluntary agreement but instead changes each firm's marginal automation incentive directly.

\subsection{Pigouvian Automation Tax}\label{sec:tax}

The classic remedy for a negative externality is a \emph{Pigouvian tax}: a per-unit charge set equal to the marginal external cost, so that every agent's private incentive aligns with the social cost \citep{pigou1920}.
In contrast to many textbook externalities, where the harmed parties are outside the firms' market (e.g., pollution), here the harmed parties are workers whose income constitutes the firms' own demand.
This means the tax rate, its revenue, and its incidence all interact through the same labor-market channel, creating richer policy design questions than the standard case.

\begin{proposition}[Pigouvian automation tax]\label{prop:tax}
Let $\tau \geq 0$ be a per-task automation tax, $\kk > 0$, and suppose $\nn > \nn^{*}$ and the equilibrium is interior.
\begin{enumerate}[label=(\roman*),nosep]
  \item The Nash equilibrium automation rate is $\alpha^{NE}(\tau) = (\ss - \tau - \ell/\nn)/\kk$.
    The rate
    \[
      \tau^{*} = \ell\!\left(1 - \frac{1}{\nn}\right)
    \]
    implements $\alpha^{NE}=\alpha^{CO} = (\ss - \ell)/\kk$.
  \item Under the tax at rate~$\tau^{*}$ without rebate, each firm earns $\pi^{\mathrm{tax}} = \pi^{CO} - \tau^{*}\LL\alpha^{CO}$.
    With an exogenous lump-sum rebate (each firm's receipt is independent of its own automation), each firm achieves exactly~$\pi^{CO}$.
\end{enumerate}
\end{proposition}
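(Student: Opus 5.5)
The approach is to add the tax to the per-firm profit function \eqref{eq:profit_expanded} and repeat the first-order analysis of \Cref{prop:alphastar}. A per-task automation tax $\tau$ charges firm~$i$ the amount $\tau\LL\alpha_i$, so its profit becomes
\[
  \pi_i^{\tau} = \Piz + \LL\!\left[\alpha_i\!\left(\ss - \tau - \frac{\ell}{\nn}\right) - \frac{\kk}{2}\alpha_i^2 - \frac{\ell}{\nn}\sum_{j\neq i}\alpha_j\right].
\]
This has exactly the structure of the baseline with $\ss$ replaced by $\ss-\tau$. Rivals' choices still enter only through a term that does not depend on $\alpha_i$. The function is strictly concave in $\alpha_i$ because $\kk>0$. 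So the argument of \Cref{prop:alphastar}(i) applies verbatim: the strictly dominant strategy solves $\ss-\tau-\ell/\nn-\kk\alpha_i=0$, which gives $\alpha^{NE}(\tau)=(\ss-\tau-\ell/\nn)/\kk$ under the interiority assumption.

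Next I set this equal to $\alpha^{CO}=(\ss-\ell)/\kk$ from \Cref{prop:alphastar}\ref{prop:CO}. This yields $\tau+\ell/\nn=\ell$, so $\tau^{*}=\ell(1-1/\nn)$. I need to check that the cooperative benchmark is the untaxed one, i.e.\ that it maximizes $\mathcal{K}$ from the model section. This is the right benchmark because tax payments are transfers and do not change the real allocation. Interiority of $\alpha^{CO}$, i.e.\ $\ell<\ss<\kk+\ell$, is what makes the implemented rate interior. This is where the hypothesis ``the equilibrium is interior'' is used; I will read it as covering both rates.

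For part (ii), I evaluate taxed profit at the symmetric profile $\alpha_i=\alpha^{CO}$ for all $i$. The real terms (revenue minus production cost) coincide with $\pi^{CO}$, since revenue depends only on $\abar$ through \eqref{eq:revenue} and costs only on own $\alpha_i$. The only difference is the tax bill $\tau^{*}\LL\alpha^{CO}$, which gives $\pi^{\mathrm{tax}}=\pi^{CO}-\tau^{*}\LL\alpha^{CO}$. With a lump-sum rebate $R_i$ that is independent of $\alpha_i$, the rebate adds a constant to $\pi_i^{\tau}$. It therefore drops out of the first-order condition, exactly as $A$ does in \Cref{sec:ubi}, and the equilibrium is still $\alpha^{CO}$. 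Setting $R_i=\tau^{*}\LL\alpha^{CO}$, so that the budget balances at the equilibrium, each firm earns exactly $\pi^{CO}$.

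The main subtlety, more conceptual than computational, is the rebate. If the rebate were routed through sectoral demand, for example paid to workers who spend a fraction $\lambda$ of it, it would enter $D$ and could feed back into the margin. I will therefore state clearly that ``exogenous lump-sum'' means paid directly to owners as a constant independent of own automation. Under that reading, the budget-balancing amount is pinned down at the equilibrium profile rather than as a function of the firm's own choice. Everything else is a direct reuse of the algebra behind \Cref{prop:alphastar}.
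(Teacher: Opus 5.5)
Your proposal is correct and matches the paper's proof essentially step for step. You add the per-task charge $\tau\LL\alpha_i$ to \eqref{eq:profit_expanded} and solve the first-order condition, whose structure is unchanged. You then equate the result to $\alpha^{CO}$ to get $\tau^{*}=\ell(1-1/\nn)$, subtract the tax bill $\tau^{*}\LL\alpha^{CO}$ at the symmetric profile, and restore $\pi^{CO}$ with a rebate that does not depend on the firm's own automation. Your remark that the rebate must go directly to owners rather than through sectoral demand only makes explicit what the paper's ``exogenous lump-sum'' qualifier already assumes.
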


The optimal rate has a transparent economic interpretation: each firm already bears $\ell/\nn$ of the demand loss from its own automation; the tax charges it for the remaining $\ell(1-1/\nn)$ imposed on rivals.
For large~$\nn$, $\tau^{*} \approx \ell = \lambda(1-\eta)\ww$, so setting the rate requires only sector-level observables.
Levying the tax, however, requires observing firm-level automation rates, a practical challenge, though one that may be easing as AI adoption generates observable procurement records \citep{guerreiro2022should}.
Unlike rival firms in a Coasean bargaining setting (\Cref{sec:coase}), a tax authority can compel disclosure through mandatory reporting, payroll records, and procurement audits, making approximate measurement feasible even when private verification is not.
Because the welfare loss is quadratic in the wedge (\Cref{prop:surplus}), even an imprecisely targeted tax yields a first-order gain.

\paragraph{Allocation of tax revenue.}
\Cref{prop:tax} pins down the rate; the remaining design question is what to do with the revenue.
Because the externality flows through the labor market, this choice can affect structural parameters governing the distortion, not merely the distribution of gains.

A lump-sum rebate to firms restores cooperative profits exactly (part~(ii)), but returns revenue to the firms that automate while leaving displaced workers, the harmed parties, uncompensated.
The more natural option is to direct revenue toward those workers.
Two channels are available, with different incentive properties.

Direct transfers (wage insurance, severance supplements) raise~$\eta$ mechanically by replacing lost income.
The firm's automation incentive is unaffected: it pays~$\tau$ per automated task regardless of where the revenue goes, and any resulting rise in~$\eta$ operates at the sector level, not through firm~$i$'s own automation, so its first-order condition is unchanged.
But the standard moral-hazard concern applies: generous income replacement may weaken workers' incentive to retrain or reallocate, sustaining~$\eta$ through transfers rather than through productive reabsorption.

Funding retraining programs also raises~$\eta$, but through human-capital investment rather than income replacement.
This channel is slower and harder to implement, yet it builds the capacity for workers to re-enter the labor market at comparable or higher wages, making gains in~$\eta$ self-sustaining.
In principle, the resulting dynamic is self-reinforcing, complementing the analysis in \Cref{sec:eta}: the tax funds programs that raise~$\eta$, which lowers~$\ell$, which reduces~$\tau^{*}$ in future periods.
To the extent that reabsorption is successful, the required correction shrinks over time and the tax is transitional rather than permanent, echoing the finding in \citet{guerreiro2022should} that the optimal robot tax declines to zero as displaced cohorts retire and new workers choose occupations with full knowledge of automation.

A further use of revenue acts on the demand level rather than on the externality.
Funding an unconditional transfer does not raise~$\eta$ or enter any firm's first-order condition; it returns revenue to autonomous demand~$A$.
The contrast with \Cref{sec:capital_tax} is this: a profit tax paired with such a transfer stacks two level instruments and leaves the wedge intact, whereas the automation tax paired with it corrects the margin and recycles the revenue into demand, the working form of the complement anticipated in \Cref{sec:ubi}.
In a post-labor economy, where the corrective term vanishes, this recycling becomes the entire policy (\Cref{sec:postlabor}).

In practice, a mix of transfers and retraining is likely optimal: short-run transfers to cushion displacement while longer-run retraining builds durable gains in~$\eta$.
The Pigouvian tax therefore has the potential to do double duty: it corrects the externality at the margin, and its revenue can be recycled to shrink the distortion over time.

\subsection{The Post-Labor Limit}\label{sec:postlabor}

What does the model say when AI replaces most human labor? The problem does not
vanish; it changes. During the transition, firms automate too much and both
workers and owners lose. In the post-labor limit the cost saving is so large
that automating every task maximizes profit, since the demand lost per task is
limited while the cost saving is not. Demand does not fall to zero but settles
at a floor set by autonomous spending and replaced income, and the
over-automation wedge closes (\Cref{prop:alphastar}). What remains wrong is not
too much automation but that the eliminated wages do not return. Even a
planner who values workers ($\mu>0$) now optimally automates fully:
$\alpha^{SP}(\mu)$ in \Cref{prop:surplus} binds at~$1$ for any fixed $\mu<1$. The
remaining failure is distributional, not allocative, so an automation tax has no margin left to correct. A profit tax, useless for correction
during the transition because it cancels from each firm's automation decision
(\Cref{sec:capital_tax}), now has no margin left to distort, which makes it the
natural instrument for the one task that remains. A universal basic income
funded by such a tax returns part of the owners' surplus to households as
spending, rebuilding the demand floor by raising autonomous demand~$A$ (\Cref{sec:ubi}). Workers gain directly, and because firm profits depend on that demand,
firms gain too. This is the role UBI could not play in the transition but can
play here (\Cref{sec:ubi}). The transfer must be mandated, since firms will not
provide it on their own (\Cref{cor:equity_voluntary}), and because households
spend only a fraction~$\lambda$ in the sector, the recycling is partial. The
post-labor case is thus the limit of our result, not an exception. The failure is now distributional rather than allocative, and the corrective instrument is a profit-funded UBI rather than an automation tax.

\section{Extensions}\label{sec:generalizations}

The baseline isolates the demand externality in the simplest environment that supports it. A natural concern is that the result depends on what has been held fixed: endogenous wage adjustment might close the wedge, free entry might discipline the market to an efficient scale, higher AI productivity might resolve the demand problem by expanding the pie, and capital-income recycling might offset the spending lost through displacement. This section takes up each of these objections, along with richer product-market interaction, and shows that the externality is robust to all of them and, in some cases, amplified. \Cref{tab:extensions} previews the results.

\begin{table}[ht!]
\centering
\caption{Extensions of the baseline model and effects on the over-automation result.}\label{tab:extensions}
\adjustbox{max width=\linewidth}{%
\begin{tabular}{lcccc}
\toprule
& Section & Modifies $\nn^{*}$? & Wedge effect & Eliminates externality? \\
\midrule
AI productivity ($\phi > 1$) & \ref{sec:phi} & Yes (lowers) & Widens & No \\
Endogenous entry & \ref{sec:entry} & No & Persists (can widen) & No \\
Endogenous wages & \ref{sec:endo_wages} & Yes (raises) & Narrows & No \\
Capital-income recycling ($\ehat > 0$) & \ref{sec:recycling} & Yes (raises) & Narrows & Partially \\
Imperfect competition & \ref{sec:richer_models} & n/a & Persists & No \\
\bottomrule
\end{tabular}}

\smallskip
\footnotesize\textit{Note.} Each row records how the generalization affects the automation threshold~$\nn^{*}$ (for AI productivity, the formula $\nn^{*} = \ell/\ss$ is unchanged, but the market-share term lowers the firm count at which automation begins), the over-automation wedge, and whether it eliminates the demand externality. The externality survives all five generalizations under empirically plausible parameters.
\end{table}

\subsection{AI Productivity}\label{sec:phi}

In the baseline, AI and human workers produce the same output per task and so the automation incentive is purely cost-driven. In practice, AI can replace humans while also raising output per task (e.g., autonomous agentic coding agents, higher-throughput customer service bots). To capture this, we add a productivity advantage on top of the cost saving. A natural conjecture is that this output channel mitigates the demand problem by making the economy more productive.
We show the opposite is true: higher AI output per task widens the over-automation wedge.

Let an AI-performed task produce $\phi \geq 1$ units of output, while a human-performed task produces~$1$ unit.
With $\phi > 1$, firm~$i$'s output becomes
\[
    Y_i(\alpha_i) = [\phi\alpha_i + (1 - \alpha_i)]\LL = [1 + (\phi - 1)\alpha_i]\LL.
\]
Under perfect competition, revenue is allocated by output share: $\Rev_i = D \cdot Y_i / \sum_j Y_j$.
At a symmetric profile all firms produce the same output $\bar{Y} = [1 + (\phi-1)\alpha]\LL$, so $\Rev_i = D/\nn$ as in the baseline.
Differentiating with respect to $\alpha_i$ and evaluating at the symmetric profile yields
\begin{align}\label{eq:rev_phi}
  \frac{\partial \Rev_i}{\partial \alpha_i}\bigg|_{\mathrm{sym}} &= \underbrace{-\frac{\ell \LL}{\nn}}_{\text{demand loss}}
  \;+\; \underbrace{\frac{D(\phi - 1)(\nn-1)}{\nn^2[1+(\phi-1)\alpha]}}_{\text{market-share gain}}.
\end{align}
The first term is the baseline demand externality, which depends only on~$\ell$ and~$\nn$ and is therefore independent of~$\phi$.
The second is new: a deviating firm raises its output above rivals and captures a larger share of expenditure.
This market-share gain is positive whenever $\phi > 1$, raising the private incentive to automate above the baseline.

To quantify the effect, we combine cost saving, demand loss, and market-share gain into the first-order condition.
The symmetric equilibrium equates marginal integration cost to the combined benefit:
\[
  \kk\alpha = \ss - \frac{\ell}{\nn} + \frac{D(\alpha)\,(\phi-1)(\nn-1)}{\nn^2[1+(\phi-1)\alpha]\,\LL}.
\]
Because $D(\alpha)$ is linear in~$\alpha$, clearing the denominator yields a quadratic whose positive root is the unique equilibrium; however, the resulting expression is less transparent than the baseline formula, so the comparative statics below are established via a monotone crossing argument.

\begin{proposition}[AI productivity widens the over-automation wedge]\label{prop:phi}
Let $\kk > 0$ and suppose the equilibrium is interior.
\begin{enumerate}[label=(\roman*),nosep]
  \item The Nash equilibrium automation rate is increasing in AI productivity: $\alpha^{NE}(\phi) > \alpha^{NE}(1)$ for $\phi > 1$.
  \item The cooperative optimum and the generalized planner's optimum, however, are unchanged: $\alpha^{CO}(\phi) = \alpha^{CO}(1)$ and $\alpha^{SP}(\mu;\phi) = \alpha^{SP}(\mu;1)$ for all $\mu \in [0,1]$.
  \item Consequently, the over-automation wedge $\alpha^{NE}(\phi) - \alpha^{SP}(\mu;\phi)$ is strictly increasing in~$\phi$ for every $\mu \in [0,1]$.
\end{enumerate}
\end{proposition}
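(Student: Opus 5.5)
The plan is to treat the three parts in order: a monotone crossing argument for (i), a direct computation for (ii), and a combination of the two for (iii).

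\textbf{Part (i).} Write the symmetric first-order condition as $\kk\alpha = G(\alpha;\phi)$, where
\[
G(\alpha;\phi) \coloneqq \ss - \frac{\ell}{\nn} + \frac{D(\alpha)(\phi-1)(\nn-1)}{\nn^2[1+(\phi-1)\alpha]\LL}.
\]
At $\phi=1$ the market-share term vanishes. The condition then collapses to the baseline $\kk\alpha = \ss-\ell/\nn$, with root $\alpha^{NE}(1)$ from \Cref{prop:alphastar}.

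For $\phi>1$, evaluate $G$ at $\alpha^{NE}(1)$. Demand stays positive there, since $D(\alpha) \ge A + \lambda\ww\LL\nn\eta > 0$ for $\alpha\le 1$. Hence the extra term is strictly positive, and $G(\alpha^{NE}(1);\phi) > \kk\alpha^{NE}(1)$.

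The left side $\kk\alpha$ is increasing. The right side behaves differently: both $D(\alpha)$ and $1/[1+(\phi-1)\alpha]$ are decreasing in $\alpha$, so $G$ is nonincreasing in $\alpha$. The difference $\kk\alpha - G(\alpha;\phi)$ is therefore strictly increasing, which gives uniqueness of the root. That difference is negative at $\alpha^{NE}(1)$, so the unique interior root $\alpha^{NE}(\phi)$ lies strictly to the right, i.e.\ $\alpha^{NE}(\phi)>\alpha^{NE}(1)$.

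I also want to confirm this root is a genuine best response rather than just a critical point. For that I will check that $\pi_i$ is concave in own $\alpha_i$ near the symmetric profile, or at least that the symmetric FOC identifies the equilibrium the paper posits. I expect this second-order and existence check to be the main obstacle. The revenue share $Y_i/\sum_j Y_j$ is concave in $Y_i$ and the $\kk$ term is concave, so concavity should follow. I will lean on the interiority assumption to avoid boundary issues.

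\textbf{Part (ii).} The planner chooses a common rate $\alpha$ for all firms. Under a symmetric profile, shares are equal and total revenue is $D(\alpha)$, whatever the value of $\phi$. Aggregate profit is therefore $\mathcal{K}=D(\alpha)-\nn C(\alpha)$, which does not involve $\phi$, and $\mathcal{W}$ does not depend on $\phi$ either.

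One caveat: output $Y$ rises with $\phi$, but under competitive clearing this only lowers the price $p=D/\sum Y$. Revenue and payoffs are unaffected. Hence $S(\mu)$ is the same function of $\alpha$ as in the baseline, and its maximizers coincide with $\alpha^{CO}$ and $\alpha^{SP}(\mu)$ from \Cref{prop:alphastar} and \Cref{prop:surplus}.

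\textbf{Part (iii).} For strict monotonicity in $\phi$, I need more than the comparison with $\phi=1$. Repeat the crossing argument for $\phi'>\phi>1$. The market-share term $(\phi-1)/[1+(\phi-1)\alpha]$ is strictly increasing in $\phi$ for fixed $\alpha\in[0,1]$; its derivative is $1/[1+(\phi-1)\alpha]^2>0$. So $G(\cdot;\phi')>G(\cdot;\phi)$ pointwise, and the same single-crossing step gives $\alpha^{NE}(\phi')>\alpha^{NE}(\phi)$.

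By (ii), $\alpha^{SP}(\mu;\phi)$ is constant in $\phi$. Subtracting it from $\alpha^{NE}(\phi)$ therefore yields a wedge that is strictly increasing in $\phi$ for every $\mu\in[0,1]$, with the planner's rate clipped to $[0,1]$ as in \Cref{prop:surplus}.
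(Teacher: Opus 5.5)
Your proposal is correct and follows the paper's proof essentially step for step. It uses the same single-crossing argument on the symmetric first-order condition: $\kk\alpha$ is strictly increasing, the right-hand side is decreasing, and the two are evaluated at $\alpha^{NE}(1)$. It uses the same observation that total revenue equals $D$, so the planner's objective does not depend on $\phi$. And it uses the same ``larger market-share term'' comparison for strict monotonicity in (iii), which you make explicit via $\partial_\phi\bigl[(\phi-1)/(1+(\phi-1)\alpha)\bigr] = 1/[1+(\phi-1)\alpha]^2>0$.

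The second-order check you flag, which the paper omits, does go through. Your stated reason leaves out the demand factor, but it is harmless. Own revenue $D\cdot Y_i/\sum_j Y_j$ is the product of a positive, affine, nonincreasing function of $\alpha_i$ and a positive, increasing, concave one, so it is concave. Adding the convex integration cost makes $\pi_i$ strictly concave in $\alpha_i$, so the first-order condition is sufficient.
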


The mechanism is a Red Queen effect: each firm perceives a market-share gain from automating beyond rivals, but at the symmetric equilibrium all firms expand equally, so the gains cancel.
By contrast, the cost saving~$\ss$ enters each firm's profit identically regardless of rivals' choices, so it shifts $\alpha^{NE}$ and $\alpha^{CO}$ equally and leaves the wedge unchanged.

Part~(ii) holds because total sectoral revenue equals total expenditure~$D$ under market clearing, and $D$~\eqref{eq:demand} depends on worker income, not output: higher~$\phi$ raises output but lowers the price in proportion, leaving the planner's objective invariant to~$\phi$.
Together with part~(i), this yields part~(iii): better AI raises the equilibrium automation rate without shifting the efficient benchmark, so the distortion grows with AI capability.

The wider wedge also carries a policy implication. Because the market-share motive adds a second distortion on top of the demand externality, the baseline Pigouvian rate $\tau^{*} = \ell(1-1/\nn)$ no longer suffices: implementing~$\alpha^{CO}$ requires an additional correction equal to the market-share term in~\eqref{eq:rev_phi}, divided by~$\LL$, evaluated at the cooperative rate, which is strictly positive whenever $\phi > 1$.
That said, the lower price means each dollar of spending buys more physical output, so the welfare measure $S(\mu)$, built from nominal flows, understates the real consumption gains from higher~$\phi$.
The proposition identifies a strategic distortion, not a claim that higher AI productivity reduces total welfare on net.

\subsection{Endogenous Entry}\label{sec:entry}

So far the number of firms has been exogenous.
With free entry, one might expect the over-automation problem to be self-correcting: surplus erosion lowers profits, marginal firms exit, and the remaining industry settles at an efficient scale.
Whether this logic goes through, however, depends on how the entry margin interacts with the automation decision.

Consider a two-stage game: firms pay a fixed cost $\kappa \geq 0$ to enter, then simultaneously choose automation rates.
Given~$\nn$ entrants, the stage-2 Nash equilibrium yields per-firm operating profit $\Pi^{*}(\nn)$.
A pure-strategy \emph{free-entry equilibrium} is an integer $\nn \geq 1$ such that
\begin{equation}\label{eq:free_entry}
  \Pi^{*}(\nn) \geq \kappa
  \qquad\text{and}\qquad
  \Pi^{*}(\nn + 1) \leq \kappa:
\end{equation}
incumbents weakly prefer to remain active, while an additional entrant would not recover the fixed cost.

Because the frictionless ($\kk = 0$) and convex-cost ($\kk > 0$) regimes shape the profit schedule in qualitatively different ways, we treat each in turn. The frictionless case sets $\lambda = 1$ for clean closed forms; the qualitative results hold for any $\lambda \in (0,1]$, as the assumption affects only profit levels, not the structure of the entry regimes.

In the frictionless benchmark ($\kk = 0$), automation is all-or-nothing.
When $\ell > \ss$, the profit schedule drops discretely at~$\nn^{*}$: below the threshold no firm automates, while above it full automation is dominant (\Cref{cor:frictionless}) and per-firm profit falls by $\Delta \coloneqq \LL(\ell - \ss) > 0$ (\Cref{fig:entry_profit} in the appendix illustrates).
Write $m \coloneqq \lfloor \nn^{*} \rfloor$ for the largest integer not exceeding~$\nn^{*}$.
We assume, as a genericity condition, that the threshold is not itself an integer: $\nn^{*} \notin \mathbb{N}$.
Because the number of firms is a whole number, this places every market size strictly below the threshold (no firm automates) or strictly above it (all do), with no ambiguous case sitting exactly on the boundary.
The condition is innocuous, failing only on the measure-zero set of parameters for which $\ell/\ss$ is exactly a whole number.\footnote{Exactly at an integer threshold, $\ss = \ell/\nn$, so each firm's private gain from automating is zero and every firm is indifferent over its own rate; yet a rival's automation still lowers the firm's payoff by contracting demand~$D$. The stage-2 equilibrium profit is then not single-valued, and pinning it down would require an arbitrary selection rule, which the assumption avoids.}

\begin{proposition}[Endogenous entry in the frictionless benchmark]\label{prop:entry}
Suppose $\kk = 0$, $\lambda = 1$, $\ell > \ss$, $0 < \kappa < A$, and, generically, $\nn^{*} \notin \mathbb{N}$.
The per-firm equilibrium profit schedule
\[
  \Pi^{*}(\nn) =
  \begin{cases}
    A/\nn         & \text{if } \nn \leq \nn^{*}, \\[2pt]
    A/\nn - \Delta & \text{if } \nn > \nn^{*},
  \end{cases}
\]
is strictly decreasing on~$\mathbb{N}$.
The unique pure-strategy free-entry equilibrium has
$\nn^{FE} = \max\{\nn \in \mathbb{N} : \Pi^{*}(\nn) \geq \kappa\}$ firms.
Three regimes arise depending on the entry cost:
\begin{enumerate}[label=(\roman*)]
  \item \emph{Low entry cost} ($\kappa \leq A/(m+1) - \Delta$):\;
    $\nn^{FE} = \lfloor A/(\kappa + \Delta) \rfloor \geq m + 1$.
    Every firm fully automates.
  \item\label{prop:entry:barrier} \emph{Intermediate entry cost} ($A/(m+1) - \Delta < \kappa < A/m$):\;
    $\nn^{FE} = m$.
    No firm automates, yet each earns strictly positive profit $A/m > \kappa$.
    The \emph{threat} of automation deters further entry: one additional firm would trigger full automation, dropping per-firm profit to $A/(m+1) - \Delta < \kappa$.
  \item \emph{High entry cost} ($\kappa \geq A/m$):\;
    $\nn^{FE} = \lfloor A/\kappa \rfloor \leq m$.
    No firm automates.
    Entry costs, not the automation externality, limit the number of firms.
\end{enumerate}
\end{proposition}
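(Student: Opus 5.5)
The plan has three steps. First, compute the stage-2 profit schedule from \Cref{cor:frictionless} and \eqref{eq:profit_clean}. Second, show that the schedule is strictly decreasing. Third, read off $\nn^{FE}$ in each regime.

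\emph{Step 1: the profit schedule.} With $\lambda = 1$ we have $\Piz = A/\nn + (\lambda-1)\ww\LL = A/\nn$. By the genericity assumption $\nn^{*}\notin\mathbb{N}$, every integer $\nn$ satisfies either $\nn<\nn^{*}$ or $\nn>\nn^{*}$.
\begin{itemize}
\item If $\nn<\nn^{*}$, then $\ss-\ell/\nn<0$, so $\alpha_i=0$ is strictly dominant. Profit is $A/\nn$.
\item If $\nn>\nn^{*}$, \Cref{cor:frictionless} gives full automation. Substituting $\alpha_i=\abar=1$ into \eqref{eq:profit_clean} yields $A/\nn+\LL(\ss-\ell)=A/\nn-\Delta$.
\end{itemize}
Because strategies are strictly dominant, these values are single-valued.

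\emph{Step 2: monotonicity.} On each side of $\nn^{*}$ the schedule is $A/\nn$ minus a constant, which is strictly decreasing in $\nn$. At the jump from $m$ to $m+1$ the schedule falls further, since $A/(m+1)-\Delta<A/(m+1)<A/m$. Hence $\Pi^{*}$ is strictly decreasing on $\mathbb{N}$.

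\emph{Step 3: existence and uniqueness of the equilibrium.} The set $\{\nn:\Pi^{*}(\nn)\ge\kappa\}$ is nonempty because $\Pi^{*}(1)\ge A-\Delta$. This needs care: the regime $\nn=1\le m$ requires $m\ge 1$, i.e.\ $\nn^{*}>1$. In the paper's setting $\nn^{*}=\ell/\ss>1$ because $\ell>\ss$, so $m\ge1$ and $\Pi^{*}(1)=A>\kappa$. The set is also bounded, since $\Pi^{*}(\nn)\to-\Delta<\kappa$. By monotonicity it is therefore an initial segment $\{1,\dots,\nn^{FE}\}$.

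The two conditions in \eqref{eq:free_entry} then hold exactly at $\nn=\nn^{FE}$. No other integer works: any smaller $\nn$ has $\Pi^{*}(\nn+1)\ge\kappa$, and if that holds with equality, the strict decrease still forces $\Pi^{*}(\nn+2)<\kappa$. So the equality case needs a remark; I will interpret the weak inequality as in the statement, which defines $\nn^{FE}$ as the maximum. Any larger $\nn$ violates the first condition.

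\emph{The three regimes.} In each case I locate $\kappa$ relative to the jump values $A/m$ and $A/(m+1)-\Delta$.
\begin{itemize}
\item \textbf{Low entry cost} ($\kappa\le A/(m+1)-\Delta$). Here $\Pi^{*}(m+1)\ge\kappa$, so $\nn^{FE}\ge m+1$. The maximum lies on the automating branch, where $A/\nn-\Delta\ge\kappa$ is equivalent to $\nn\le A/(\kappa+\Delta)$. This gives $\nn^{FE}=\lfloor A/(\kappa+\Delta)\rfloor$.
\item \textbf{Intermediate entry cost} ($A/(m+1)-\Delta<\kappa<A/m$). Here $\Pi^{*}(m)=A/m>\kappa$ but $\Pi^{*}(m+1)<\kappa$, so $\nn^{FE}=m$. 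This is the entry-deterrence interpretation.
\item \textbf{High entry cost} ($\kappa\ge A/m$). Here $\Pi^{*}(\nn)\le\Pi^{*}(m)\le\kappa$ for all $\nn\ge m$, with strict inequality beyond $m$. The maximum therefore lies on the non-automating branch, where $A/\nn\ge\kappa$ is equivalent to $\nn\le A/\kappa$. This gives $\lfloor A/\kappa\rfloor\le m$, using $\kappa<A$ to ensure it is at least $1$.
\end{itemize}

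I expect the main obstacle to be bookkeeping rather than anything conceptual. The points needing care are the boundary equalities (for instance $\kappa=A/m$ giving $\nn^{FE}=m$, consistent with $\lfloor A/\kappa\rfloor=m$) and the uniqueness claim when $\Pi^{*}(\nn+1)=\kappa$ exactly.
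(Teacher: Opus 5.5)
Your proposal is correct and follows essentially the same route as the paper's proof. You derive the two-branch schedule from the frictionless corollary with $\lambda = 1$, show strict monotonicity across the jump between $m$ and $m+1$, take $\nn^{FE}$ as the maximum of the viable set, and read off each regime by locating $\kappa$ relative to $A/m$ and $A/(m+1)-\Delta$. Your remark on the tie case $\Pi^{*}(\nn^{FE}) = \kappa$ is apt: there $\nn^{FE}-1$ also satisfies the weak-inequality definition \eqref{eq:free_entry}, a point the paper's uniqueness claim passes over, so selecting the maximum by convention is the right resolution.
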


When case~(i) arises, the Prisoner's Dilemma of \Cref{cor:frictionless} materializes under free entry: all firms automate, demand contracts, and every firm would be better off had none automated.
Case~(iii) is the standard free-entry outcome: entry costs are high enough that the market never approaches the automation threshold, and the externality is irrelevant.
Case~(ii) is the most distinctive: the \emph{threat} of automation functions as an endogenous entry barrier, sustaining positive profits without any automation actually occurring, at the cost of sustaining market power.

The convex-cost case is less stark but more robust.
With $\kk > 0$, the automation rate $\alpha^{NE}(\nn) = (\ss - \ell/\nn)/\kk$ varies continuously in~$\nn$, so the profit schedule no longer jumps at~$\nn^{*}$ and the entry-deterrence mechanism of \Cref{prop:entry}\ref{prop:entry:barrier} does not arise.

\begin{proposition}[Endogenous entry with convex costs]\label{prop:entry_convex}
Suppose $\kk > 0$, $\ell > \ss$, $\kappa > 0$, and $\Piz(1) > \kappa$ (the market is viable).
A free-entry equilibrium $\nn^{FE}$ satisfying~\eqref{eq:free_entry} exists.
If $\nn^{FE} > \nn^{*}$, then $\alpha^{NE}(\nn^{FE}) > \alpha^{CO}$: over-automation persists under free entry.
\end{proposition}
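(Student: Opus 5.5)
The plan is to compute the stage-2 profit schedule $\Pi^{*}(\nn)$ directly from \Cref{prop:alphastar} and then build the free-entry equilibrium as the largest viable market size. I will not try to prove that $\Pi^{*}$ is monotone, since the maximal-viable-$\nn$ construction does not need it. Given $\nn$ entrants, the stage-2 equilibrium is the symmetric dominant-strategy profile $\alpha^{NE}(\nn)$ from \Cref{prop:alphastar}. Substituting it into \eqref{eq:profit_clean} gives
\[
\Pi^{*}(\nn) = \Piz(\nn) + \LL\Bigl[(\ss-\ell)\,\alpha^{NE}(\nn) - \tfrac{\kk}{2}\alpha^{NE}(\nn)^2\Bigr],
\qquad
\Piz(\nn) = A/\nn + (\lambda-1)\ww\LL .
\]

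The first key step is an upper bound. Since $\ell > \ss$ and $\kk > 0$, the bracket is nonpositive for every $\alpha \in [0,1]$. Hence $\Pi^{*}(\nn) \le \Piz(\nn)$ for all $\nn$.

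The second step is the value at $\nn = 1$. Here $\nn^{*} = \ell/\ss > 1$, so \Cref{prop:alphastar} gives $\alpha^{NE}(1) = 0$. Therefore $\Pi^{*}(1) = \Piz(1) > \kappa$ by the viability assumption, and the set $\mathcal{V} \coloneqq \{\nn \in \mathbb{N} : \Pi^{*}(\nn) \ge \kappa\}$ is nonempty.

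The third step is finiteness of $\mathcal{V}$. As $\nn \to \infty$, $\Piz(\nn) \to (\lambda - 1)\ww\LL \le 0 < \kappa$. Combined with the upper bound, $\Pi^{*}(\nn) < \kappa$ for all sufficiently large $\nn$, so $\mathcal{V}$ is finite. This is where $\kappa > 0$ is used: it covers even the boundary case $\lambda = 1$, where the limit is $0$.

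For existence, set $\nn^{FE} \coloneqq \max \mathcal{V}$. Then $\Pi^{*}(\nn^{FE}) \ge \kappa$ by membership in $\mathcal{V}$. Also $\Pi^{*}(\nn^{FE}+1) < \kappa$ by maximality, which is stronger than the weak inequality required in \eqref{eq:free_entry}. So $\nn^{FE}$ is a free-entry equilibrium. Uniqueness is not claimed, so I will not attempt it.

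The over-automation claim then follows from \Cref{prop:alphastar}. Because $\ss < \ell$, part~\ref{prop:wedge_boundary1} gives $\alpha^{CO} = 0$, and this holds independently of $\nn$. If $\nn^{FE} > \nn^{*}$, then $\ss > \ell/\nn^{FE}$, so $\alpha^{NE}(\nn^{FE}) = \min\bigl((\ss - \ell/\nn^{FE})/\kk,\, 1\bigr) > 0 = \alpha^{CO}$. The wedge is therefore exactly $\alpha^{NE}(\nn^{FE})$, the case-(iv) value.

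The step I expect to be the main obstacle is existence. Unlike the frictionless case of \Cref{prop:entry}, $\Pi^{*}$ here mixes a decreasing term $A/\nn$ with an automation loss that grows in $\nn$. Strict monotonicity is not obvious, which rules out a simple crossing argument. The maximal-element construction avoids this: it needs only $\Pi^{*}(1) > \kappa$ and eventual $\Pi^{*} < \kappa$, both of which come from the bound $\Pi^{*} \le \Piz$. That bound rests entirely on $\ell > \ss$. If time permits, I will also remark that $\Pi^{*}$ is in fact decreasing, because the bracket is decreasing in $\alpha$ and $\alpha^{NE}$ is nondecreasing in $\nn$, but the proof does not rely on this.
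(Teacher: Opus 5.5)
Your proof is correct and follows the paper's argument. In both, the viable set is nonempty because $\Pi^{*}(1)=\Piz(1)>\kappa$, and it is finite because the profit schedule eventually falls below $\kappa>0$. Both then take $\nn^{FE}$ as its maximum and derive over-automation from $\alpha^{CO}=0$ under $\ell>\ss$.

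The differences are minor. The paper additionally proves that $\Pi^{*}$ is strictly decreasing, which existence does not need but which yields uniqueness. It also computes $\lim_{\nn\to\infty}$ of the automation term explicitly, where you instead use the one-line bound $\Pi^{*}(\nn)\le\Piz(\nn)$.
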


Generically, $\nn^{FE}$ exceeds~$\nn^{*}$ whenever zero-automation profits at the threshold, $\Piz(\nn^{*})$, exceed~$\kappa$.%
\footnote{In a numerical grid over $\cc/\ww \in \{0.1, \dots, 0.5\}$, $\lambda \in \{0.3, \dots, 1\}$, $\eta \in \{0, \dots, 0.3\}$, $\kk \in \{0.5, 1, 2\}$, and $\kappa \in \{0.1, \dots, 5\}$, $\nn^{FE}$ exceeds~$\nn^{*}$ in over 94\% of parameterizations satisfying the proposition's conditions; the exceptions arise only when the entry cost is high enough that the market barely supports more than~$\nn^{*}$ firms.}
Free entry then pins down the number of firms but does not alter the strategic incentives within the automation subgame: each firm still bears only a fraction~$1/\nn^{FE}$ of the demand loss. Over-automation persists because $\ell>\ss$ forces $\alpha^{CO}=0$, so the realized wedge equals $\alpha^{NE}(\nn^{FE})=\min((\ss-\ell/\nn^{FE})/\kk,1)>0$ (the interior expression $\ell(1-1/\nn^{FE})/\kk$ applies only where the cooperative rate is itself interior).
If $\nn^{FE} \leq \nn^{*}$, no firm automates and the outcome is efficient, but only because the market is too concentrated for the private automation incentive to activate.

Taken together, the two propositions deliver a common lesson: free entry reshapes the over-automation problem but does not resolve it. If anything, the standard tendency toward excess entry \citep{mankiw1986free} widens the wedge by fragmenting the market further.

The entry margin reveals an unintended side effect of UBI (\Cref{sec:ubi}). By raising autonomous demand~$A$, UBI increases per-firm profit $\Piz = A/\nn + (\lambda-1)\ww\LL$ at any given~$\nn$, attracting additional entrants until the zero-profit condition~\eqref{eq:free_entry} binds at a larger~$\nn^{FE}$. Since the realized over-automation rate $\alpha^{NE}=\min((\ss-\ell/\nn)/\kk,1)$ is increasing in~$\nn$ under the maintained $\ell>\ss$ (where $\alpha^{CO}=0$), a policy designed to cushion displacement can paradoxically widen the very externality that causes it.

\subsection{Endogenous Wages}\label{sec:endo_wages}

A central insight of \citet{acemoglu2018race} is that endogenous wage adjustment can stabilize the automation path: as firms automate, displaced workers increase labor supply, pushing wages down; lower wages narrow the cost saving from automation and discourage further displacement.
This self-correcting feedback is a natural candidate for resolving the demand externality identified above.
We show that it raises the threshold at which the externality activates and, short of driving wages all the way to AI's cost~$\cc$, cannot close the wedge once it does; the only wage path that eliminates the wedge does so by collapsing worker income, so wage flexibility resolves the inefficiency only at a distributional cost a planner who values workers ($\mu > 0$) would reject.

In \citet{acemoglu2018race}, wages are determined by labor-market clearing in a full general equilibrium; we adopt a reduced-form representation that captures the key qualitative feature of their mechanism.
Let the wage depend on the aggregate automation rate: $\ww(\abar)$ with $\ww(\abar) > \cc$ and $\ww'(\abar) \leq 0$; firms take the prevailing wage as given when choosing~$\alpha_i$. This specification requires only that wages fall when aggregate labor-market slack increases, a property shared by efficiency-wage models, where the no-shirking wage declines with unemployment \citep{shapiro1984equilibrium}. In our setting, automation displaces workers into the labor pool, generating exactly this type of slack.

Both the cost saving $\ss(\ww) = \ww - \cc$ and the demand-loss parameter $\ell(\ww) = \lambda(1-\eta)\ww$ are increasing in the wage, so falling wages affect both sides of the automation margin: they shrink the private incentive to automate (the self-correcting channel) and reduce the demand loss per automated task.
The equilibrium automation rate is therefore a fixed point in which automation, wages, and the externality are jointly determined.
Despite this richer feedback, the threshold $\nn^{*} = \ell/\ss = \lambda(1-\eta)\ww/(\ww - \cc)$ rises as wages fall, because the cost saving $\ss = \ww - \cc$ contracts faster than the demand loss $\ell = \lambda(1-\eta)\ww$.
The structural source of the distortion, however, is unaffected.

\begin{proposition}[Robustness to wage adjustment]\label{prop:endo_wages}
Let $\kk > 0$, let $\ww: [0,1] \to (\cc, \infty)$ be differentiable with $\ww'(\abar) \leq 0$, and suppose firms are wage-takers in both the Nash and the cooperative comparison.\footnote{The cooperative benchmark $\alpha^{CO}$ is the common automation rate that maximizes aggregate firm profit when firms remain price-takers in the labor market, so the wage is taken parametrically and only evaluated at $\ww(\abar)$ at the symmetric equilibrium. This isolates the cross-firm demand externality from a separate monopsonistic-incidence channel that would arise if a coalition exploited the wage schedule $\ww(\abar)$ directly. Such coordination is, in any case, not a feasible benchmark for legal firm conduct: joint suppression of wages by competing employers is a per se antitrust violation in the United States and analogous jurisdictions. The relevant cooperative outcome is therefore one firms could lawfully implement, e.g., a coordinated cap on automation, with the wage left to the labor market.}
\begin{enumerate}[label=(\roman*),nosep]
  \item At any symmetric equilibrium with $\nn > \nn^{*}(\ww(\abar))$, the Nash automation rate exceeds the cooperative optimum: $\alpha^{NE} > \alpha^{CO}$, unless the cost saving is so large that both rates pin at full automation ($\ss(\abar) \geq \kk + \ell(\abar)$), where the strict inequality lapses.
  \item Endogenous wage adjustment raises the threshold: $\nn^{*}(\ww(\abar)) \geq \nn^{*}(\ww(0))$ for all $\abar \in [0,1]$, with strict inequality whenever $\ww(\abar) < \ww(0)$, $\cc > 0$, and $\eta < 1$ (at $\cc = 0$ the threshold $\nn^{*} = \lambda(1-\eta)$, and at $\eta = 1$ the threshold $\nn^{*} = 0$, are independent of the wage, so wage decline does not move them).
\end{enumerate}
\end{proposition}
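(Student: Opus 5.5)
The plan is to reduce both parts to the baseline analysis of \Cref{prop:alphastar}, evaluated at a frozen wage. Wage-taking means that, given a symmetric profile with average rate $\abar$, each firm treats $\ww(\abar)$ as a constant. Firm~$i$'s profit is then exactly \eqref{eq:profit_expanded} with the parameters $\ss(\abar) \coloneqq \ww(\abar)-\cc>0$ and $\ell(\abar) \coloneqq \lambda(1-\eta)\ww(\abar)$. The firm's first-order condition is therefore $\kk\alpha_i = \ss(\abar) - \ell(\abar)/\nn$, and its rate is clipped to $[0,1]$.

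The key step is to define the two rates at the same wage. A symmetric Nash equilibrium is a fixed point
\[
  \alpha^{NE}=\min\bigl(\max(0,(\ss(\alpha^{NE})-\ell(\alpha^{NE})/\nn)/\kk),1\bigr).
\]
By the footnote's definition, the cooperative benchmark also takes the wage parametrically at $\ww(\abar)$. So at the frozen wage $\ww(\alpha^{NE})$, the cooperative problem is literally the baseline cooperative problem, which gives $\alpha^{CO}=\min(\max(0,(\ss-\ell)/\kk),1)$ with $\ss,\ell$ evaluated at $\ww(\alpha^{NE})$. I would state this interpretation explicitly; it is the main conceptual obstacle. Without it, one would have to compare two fixed points at different wages. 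Treating the wage parametrically removes that issue, and the footnote licenses it.

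For part~(i), assume $\nn>\nn^{*}(\ww(\abar))$, so that $\ss>\ell/\nn$ at the frozen wage and $\alpha^{NE}>0$. Because $\eta\le 1$ and $\ww>0$, the demand loss is nonnegative; if it were zero, the wedge would vanish in the way described in \Cref{cor:eta}, so I assume $\eta<1$, which gives $\ell>0$. Then $\ss-\ell/\nn>\ss-\ell$, and the map $x\mapsto\min(\max(0,x/\kk),1)$ is nondecreasing and strictly increasing on $(0,\kk)$. I would split into the cases of \Cref{prop:alphastar}:
\begin{itemize}
\item If $\ss\le\ell$, then $\alpha^{CO}=0<\alpha^{NE}$.
\item If $\ell<\ss<\kk+\ell/\nn$, both rates are interior and their difference is $\ell(1-1/\nn)/\kk>0$.
\item If $\kk+\ell/\nn\le\ss<\kk+\ell$, then $\alpha^{NE}=1>\alpha^{CO}$.
\item If $\ss\ge\kk+\ell$, both rates equal $1$; this is the stated exception.
\end{itemize}

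For part~(ii), write $\nn^{*}(\ww)=\lambda(1-\eta)\ww/(\ww-\cc)$ for $\ww>\cc$. Its derivative is
\[
  \frac{d\nn^{*}}{d\ww}=-\frac{\lambda(1-\eta)\cc}{(\ww-\cc)^2}\le 0,
\]
which is strictly negative when $\cc>0$ and $\eta<1$, since $\lambda>0$. Because $\ww'\le0$ gives $\ww(\abar)\le\ww(0)$, the threshold satisfies $\nn^{*}(\ww(\abar))\ge\nn^{*}(\ww(0))$. The inequality is strict when $\ww(\abar)<\ww(0)$ with $\cc>0$ and $\eta<1$, because $\nn^{*}$ is strictly decreasing on $(\cc,\infty)$. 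When $\cc=0$ the expression is constant at $\lambda(1-\eta)$, and when $\eta=1$ it is constant at $0$, which matches the stated exceptions.
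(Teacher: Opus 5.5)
Your part~(ii) is the paper's argument: the quotient rule for $\nn^{*}(\ww)$ combined with $\ww(\abar)\le\ww(0)$. Your case split and your explicit assumption $\eta<1$ in part~(i) are, if anything, more careful than the paper's.

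The gap is the step you yourself flag as the main obstacle. You freeze the cooperative problem at the Nash wage $\ww(\alpha^{NE})$, which reduces part~(i) to \Cref{prop:alphastar} at a fixed wage. A symptom is that your argument for~(i) never uses the hypothesis $\ww'\le 0$.

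In the paper the benchmark is a coordinated automation cap ``with the wage left to the labor market.'' Firms take the wage as given when optimizing, but the wage is evaluated at the cooperative rate itself. So $\alpha^{CO}$ is the (clipped) root of
$g(\alpha)\coloneqq \ss(\ww(\alpha))-\ell(\ww(\alpha))-\kk\alpha=\ww(\alpha)[1-\lambda(1-\eta)]-\cc-\kk\alpha$.

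This is not a harmless relabeling. Your frozen-wage rate $\alpha_f$ satisfies $\alpha_f<\alpha^{NE}$, hence $\ww(\alpha_f)\ge\ww(\alpha^{NE})$. Since $\ss-\ell$ is nondecreasing in $\ww$, this gives $g(\alpha_f)\ge 0$. The actual cooperative rate therefore sits weakly \emph{above} $\alpha_f$, so $\alpha_f<\alpha^{NE}$ does not bound it.

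What is missing is the monotone-crossing argument that compares two fixed points at different wages:
\begin{itemize}
\item Since $g'(\alpha)=\ww'(\alpha)[1-\lambda(1-\eta)]-\kk<0$, $g$ has at most one root. This is where $\ww'\le 0$, $\kk>0$ and $\lambda(1-\eta)\le 1$ are used.
\item Write the private marginal benefit as $h(\alpha)\coloneqq\ss(\ww(\alpha))-\ell(\ww(\alpha))/\nn-\kk\alpha$. Then $g=h-\ell(\ww(\alpha))(1-1/\nn)$.
\item At an interior Nash point $h(\alpha^{NE})=0$, so $g(\alpha^{NE})=-\ell(\ww(\alpha^{NE}))(1-1/\nn)<0$. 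This places the cooperative root (or the corner $0$) strictly below $\alpha^{NE}$.
\end{itemize}

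Your corner analysis then goes through with $g(1)$ in place of the frozen comparison. If $\alpha^{NE}=1$ and $g(1)<0$, then $\alpha^{CO}<1$. If $g(1)\ge 0$, that is exactly the stated exception $\ss(1)\ge\kk+\ell(1)$, where both rates pin at~$1$.
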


Competitive pricing allocates revenue as $\Rev_i = D/\nn$ at any wage level, so each firm bears only a fraction of the demand destruction its automation causes regardless of whether~$\ww$ is high or low. Wage adjustment changes the \emph{magnitude} of~$\ell$ but not the \emph{fraction} each firm internalizes; that fraction is a property of market structure, not of factor prices.

The strongest version of the self-correcting argument is that wages could fall far enough to shut the externality down entirely. As $\ww \to \cc$, the cost saving $\ss \to 0$ while $\nn^{*} \to \infty$: eventually~$\nn^{*}$ exceeds~$\nn$ and no firm finds automation privately worthwhile. At the pure-efficiency benchmark $\mu = 0$ this is a genuine resolution: full wage flexibility extinguishes the cost saving that drives the race, so the over-automation wedge disappears. But the resolution is Pyrrhic. When wages are driven to near the AI cost, workers who \emph{retain} their jobs earn little more than the machines that would replace them, and aggregate purchasing power collapses through wage depression rather than displacement. The externality vanishes only because so little income is left per worker that the wedge between private and social incentives becomes negligible. The demand problem has not been solved; it has merely been hidden behind near-subsistence wages: a labor market that ``self-corrects'' only by impoverishing its workforce has transmuted displacement into depressed living standards. The cure fails as a corrective device for two reasons, not because the externality is indestructible: it works only if wages can fall all the way to~$\cc$, which subsistence and minimum-wage floors rule out, and even where wages are fully flexible a planner who places positive weight on workers ($\mu > 0$) counts the wage depression that does the correcting as a welfare loss, not a fix. More generally, wage flexibility changes \emph{when} the externality bites; whether full flexibility can also dissolve it turns on the maintained wage floor and on the weight placed on workers, not on the externality alone.

The analysis above sets $\mu = 0$, so the planner cares only about firm profits. A planner who also values worker welfare ($\mu > 0$) would find wage depression no more acceptable than displacement, demanding a larger correction, yet wage adjustment provides the same compression of~$\ell$. Endogenous wages therefore close a smaller share of the gap, making them even less adequate as a corrective mechanism. \Cref{cor:endo_mu} in the appendix confirms this: whenever the $\mu$-planner's marginal benefit~$g_\mu$ is strictly decreasing (which holds for all $\mu$ up to a threshold $\bar\mu$, and beyond it when integration frictions dominate wage sensitivity), the over-automation result extends to any $\mu \in [0,1)$ under endogenous wages.

\subsection{Capital Income Recycling}\label{sec:recycling}

\Cref{sec:eta} showed that raising~$\eta$, the fraction of displaced income recovered by workers, shrinks the demand-loss parameter~$\ell$ and narrows the over-automation wedge. A natural counterpart on the capital side is that owners spend their profits: if their consumption offsets the spending lost through displacement, the demand externality might disappear. We show that recycling narrows the wedge but cannot close it under empirically plausible parameters.

Suppose capital owners consume a fraction $\ehat \in [0, 1)$ of their capital income in the sector.
Total sector profit is $\Pi = D - \nn\LL(\ww - \ss\abar)$.
Adding capital consumption~$\ehat\Pi$ to aggregate demand and solving for~$D$ yields
\begin{equation}\label{eq:D_delta}
  D = \frac{A + (\lambda-\ehat)\ww\LL\nn}{1 - \ehat} - \frac{\ell_{\ehat} \LL\nn}{1 - \ehat}\,\abar,
\end{equation}
where
\[
  \ell_{\ehat} = \ell - \ehat\ss
\]
is the effective demand-loss parameter: each automated task loses~$\ell$ in worker spending, but owners recycle~$\ehat$ of the per-task saving~$\ss$ back into demand.
When $\ehat = 0$, $\ell_{\ehat} = \ell$ and \cref{eq:D_delta} reduces to \cref{eq:demand}.
This is the frictionless ($\kk = 0$) specialization; for $\kk > 0$ the fixed point carries an additional minus $(\ehat\kk/[2(1-\ehat)]) \LL \sum_j \alpha_j^2$ term (derived in the proof of \Cref{prop:recycling}).

Competitive pricing still gives $\Rev_i = D/\nn$.
The first-order condition becomes $\LL(\ss - \ell_{\ehat}/[\nn(1-\ehat)])$, giving a modified threshold
\[
  \nntil \coloneqq \frac{\ell_{\ehat}}{\ss(1-\ehat)}.
\]

\begin{proposition}[Capital income recycling]\label{prop:recycling}
Suppose $\kk = 0$.
When there is capital income recycling at rate $\ehat$,
\begin{enumerate}[label=(\roman*),nosep]
  \item When $\ell_{\ehat} > 0$, full automation is dominant if and only if $\nn > \nntil$.
  \item The externality vanishes ($\ell_{\ehat} \leq 0$) only when $\ehat \geq \ell/\ss$.
\end{enumerate}
\end{proposition}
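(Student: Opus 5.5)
The plan is to first re-derive the demand fixed point \eqref{eq:D_delta} with $\kk = 0$, then read off firm~$i$'s profit as an affine function of its own rate $\alpha_i$, exactly as in the baseline derivation of \eqref{eq:profit_expanded}.

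\emph{Step 1: the demand fixed point.} Aggregate expenditure satisfies $D = A + \lambda\ww\LL[\nn - (1-\eta)\sum_j\alpha_j] + \ehat\,\Pi$, where $\Pi = D - \sum_j C_j$. With $\kk=0$ we have $\sum_j C_j = \nn\LL\ww - \ss\LL\sum_j\alpha_j$. Since $\ehat<1$, we can solve for $D$. Collecting the coefficient on $\sum_j \alpha_j$ gives $-(\ell - \ehat\ss)\LL/(1-\ehat) = -\ell_{\ehat}\LL/(1-\ehat)$, and the constant term is $[A + (\lambda-\ehat)\ww\LL\nn]/(1-\ehat)$. This confirms \eqref{eq:D_delta}.

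\emph{Step 2: profit linear in own action.} Competitive pricing still gives $\Rev_i = D/\nn$, so
\[
\pi_i = \text{const} + \LL\Bigl(\ss - \frac{\ell_{\ehat}}{\nn(1-\ehat)}\Bigr)\alpha_i - \frac{\ell_{\ehat}\LL}{\nn(1-\ehat)}\sum_{j\ne i}\alpha_j .
\]
This is one point to handle with care. The firm's own profit $\pi_i = \Rev_i - C_i$ counts the cost saving directly. The owner's recycled consumption enters only through $D$, and that channel is already captured in $\ell_{\ehat}$. If firms maximize retained profit rather than gross profit, the $(1-\ehat)$ scaling cancels, as in \Cref{sec:capital_tax}. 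The key structural fact is that rivals' rates enter additively and separately from $\alpha_i$.

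\emph{Step 3: part (i).} Because $\pi_i$ is affine in $\alpha_i$ with a slope independent of $\alpha_{-i}$, $\alpha_i=1$ is strictly dominant if and only if the slope is positive. That condition is $\ss(1-\ehat)\nn > \ell_{\ehat}$, i.e.\ $\nn > \nntil$, which uses $\ell_{\ehat} > 0$, $\ss>0$ and $1-\ehat>0$. If $\nn < \nntil$, the slope is negative and $\alpha_i=0$ is dominant. At equality every rate is optimal, so full automation is not strictly dominant. This gives the ``if and only if'' and mirrors \Cref{cor:frictionless}.

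\emph{Step 4: part (ii).} Write $\ell_{\ehat} = \ell - \ehat\ss \le 0$ and use $\ss>0$; this is equivalent to $\ehat \ge \ell/\ss$. I would also explain why $\ell_{\ehat}\le 0$ is the right notion of the externality vanishing. In that case a firm's automation weakly raises rivals' revenue, so the coefficient on $\sum_{j\neq i}\alpha_j$ in $\pi_i$ is nonnegative and there is no negative spillover. Adding the remark that $\ell/\ss = \nn^{*}$ is typically large relative to plausible owner MPCs supports the ``empirically plausible'' reading.

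The main obstacle is Step 1's bookkeeping. The correct sign and the $(1-\ehat)$ multiplier must be carried through, and the paper's convention for whether owners' objective is gross or retained profit must be respected. Everything after that is sign analysis of a linear function.
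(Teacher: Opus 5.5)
Your proposal is correct and follows essentially the same route as the paper's proof. You solve the demand fixed point for $D$ and observe that firm~$i$'s marginal profit $\LL\bigl(\ss - \ell_{\ehat}/[\nn(1-\ehat)]\bigr)$ is constant and independent of rivals' rates. Part~(i) then follows from its sign, and part~(ii) from $\ell_{\ehat} = \ell - \ehat\ss \le 0 \iff \ehat \ge \ell/\ss$. The paper derives the fixed point for general $\kk$ and then sets $\kk = 0$, while you work at $\kk = 0$ directly; your explicit handling of the knife-edge $\nn = \nntil$ is a small improvement.

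Your closing aside that $\ell/\ss = \nn^{*}$ is ``typically large'' is not part of the proof and does not follow from the model: at the paper's illustrative parameters, $\ell/\ss = 0.5$. The paper instead bases its empirical point on the case $\ell > \ss$, where the required $\ehat$ exceeds one.
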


Part~(ii) requires $\ehat \geq \ell/\ss = \lambda(1-\eta)\ww/(\ww-\cc)$: owners must recycle enough of each task's cost saving to replace the demand that displaced workers would have generated. When $\ell > \ss$, the required rate exceeds one, so recycling is impotent precisely where the externality is most harmful, since $\ell > \ss$ implies $\alpha^{CO} = 0$ and firms automate when the planner would prefer none. When $\ell < \ss$, elimination is feasible in principle, but the planner already prefers positive automation (\Cref{prop:alphastar}) and the wedge is quantitatively smaller.

The frictionless case gives the sharpest result, but the structure carries over when frictions are positive. The proof of \Cref{prop:recycling} extends the result to $\kk > 0$. The Nash equilibrium generalizes to $\alpha^{NE} = (\ss - \ell/\nnhat)/\kk$, where $\nnhat = \nn(1-\ehat)+\ehat$ is an effective market size that interpolates between~$\nn$ (no recycling) and~$1$ (full recycling), making each firm behave as though it faced fewer competitors. The cooperative optimum, however, is unchanged: the $1/(1-\ehat)$ multiplier scales total profit without shifting the optimizer.

The upshot parallels \Cref{sec:eta}: recycling raises the fraction of demand loss each firm internalizes from $1/\nn$ to $1/\nnhat$, but cannot push it to one. Addressing how income is spent narrows the wedge but does not close it, because the underlying dilution across firms persists.

\subsection{Imperfect Product-Market Competition and Task Complementarity}\label{sec:richer_models}

The baseline assumes competitive pricing and perfect substitution across tasks. Relaxing either changes the magnitude of over-automation but not its existence, because neither alters the one fact the externality rests on: automation reduces the \emph{level} of aggregate expenditure by displacing workers, and each firm bears only a fraction of that loss. We give the argument informally; a formal treatment is left for future work.

\paragraph{Second-stage price or quantity competition.}
Suppose firms compete on quantities or prices after choosing automation. This changes how a given level of expenditure is \emph{divided} among firms and rewards a firm that automates more than its rivals with a larger market share. Neither force removes the externality. However expenditure is split, each firm still bears only a fraction of the demand its own automation destroys, so over-automation persists whenever a firm lacks full market power; with differentiated products the uninternalized loss scales with the share of the market the firm does \emph{not} capture. The market-share motive cancels across firms at the symmetric equilibrium, echoing the Red Queen effect of \Cref{sec:phi}, and under Cournot is partly self-correcting, since a firm that gains share also absorbs more of the demand loss it causes. As long as firms lack full market power, market structure moves the size of the wedge without changing its sign; only in the monopoly limit, where a single firm internalizes the entire demand loss it causes, does the wedge close.

\paragraph{CES task aggregation.}
The baseline takes the perfect-substitutes limit of the CES task aggregator in \citet{acemoglu2018race}. Under imperfect substitution, automating the marginal task yields diminishing output gains, which restrains automation from the supply side and shrinks the wedge; when tasks are complements the restraint is stronger still. Either way the demand side is unchanged, so the externality remains positive as long as displaced workers lose income.

\medskip
In every case the source of the distortion is the same and survives: firms do not internalize the demand they destroy. Richer modeling moves the magnitude, not the mechanism, which the baseline isolates in its starkest form.

\section{Discussion}\label{sec:discussion}

This paper develops a simple model with a stark insight. Even as AI-driven layoffs sweep across industries, and even as every firm recognizes that vanishing paychecks mean vanishing customers, not one of them will stop. Each firm reaps the full savings of replacing its own workers yet bears only a sliver of the demand it destroys; the rest lands on rivals. No firm can afford to be the one that holds back. This is the trap: an automation arms race that only intensifies as AI improves, that leaves workers and firm owners alike worse off, and that no market force can break. We close by discussing implications for empirics and policy, and then the scope and limitations of the analysis.

\paragraph{Empirical implications.}
Anthropic CEO Dario Amodei has warned that AI-driven displacement will be ``unusually painful,'' ``much broader'' and ``much faster'' than previous technological shocks \citep{cnbc2026amodei}. If that assessment proves correct and income replacement remains incomplete, the model points, perhaps counterintuitively, to where the problem is most severe: not dominant technology firms but fragmented industries deploying the most capable AI (\Cref{prop:alphastar,prop:phi}). The distinguishing empirical signature would be profit erosion. Standard competitive models predict that cost-reducing technology raises profits; profit erosion that coincides with mass layoffs would be difficult to rationalize without the externality (\Cref{prop:surplus}). That said, this signature requires displacement at a scale and speed beyond what has materialized so far. If reabsorption keeps pace with automation, the externality may remain too small to detect, and the paper's contribution is identifying a structural vulnerability rather than diagnosing an active crisis. Three settings where AI-driven displacement is already under way offer concrete starting points: customer support, where thousands of firms are simultaneously replacing agents with agentic AI \citep{cnbc2025salesforce}; software services, where tools that enable one engineer to replace a multi-person team \citep{cnbc2025devin} create measurable shifts in headcount-to-output ratios; and back-office operations across competing financial institutions, where regulatory reporting makes both adoption rates and revenue outcomes unusually transparent.

\paragraph{Policy implications.}
Much of the policy debate around AI-driven displacement focuses on how to respond after the fact, through retraining, income support, or regulation. Our results reframe the question: do competitive incentives drive firms to automate beyond what is collectively optimal? Even a planner who places zero weight on worker welfare would reduce the automation rate below the equilibrium level (\Cref{prop:surplus}). Because over-automation leaves both firms and workers worse off, correcting it is a matter of eliminating waste, not of redistributing gains between them. Universal basic income, perhaps the most widely discussed response, raises living standards but does not change a single firm's incentive to automate (\Cref{sec:ubi}). Collective bargaining faces the same wall: because automation is a dominant strategy, no voluntary agreement among firms to restrain layoffs is self-enforcing (\Cref{sec:coase}). By Tinbergen's principle, a distinct market failure requires a distinct instrument; only a Pigouvian automation tax supplies it (\Cref{tab:policy}). This ordering reverses only in the post-labor limit, where the arms race has nothing left to correct and a profit-funded basic income becomes the operative policy rather than a palliative (\Cref{sec:postlabor}).

One practical consideration bears on implementation: the model is a closed-sector game, and a unilateral automation tax could push adoption offshore, strengthening the case for multilateral coordination or border-adjustment mechanisms analogous to those used in carbon policy.

\paragraph{Scope, limitations, and future directions.}
The model is deliberately simple: one sector, one period, symmetric firms. Each of these choices is conservative, meaning the real problem is likely worse than what we show.

A single sector understates the externality. In a multi-sector economy, layoffs in one sector reduce spending on every sector's output, creating reinforcing demand spirals. Platform ecosystems make the point concrete: when a platform automates seller support, gig logistics, or content moderation, the lost spending cascades across an entire ecosystem of complementors.

A related concern is whether the demand channel survives general equilibrium: in a frictionless multi-sector economy, displaced wage income would rotate to other consumption and the mechanism would seem to dissolve. Two structural facts block the rotation. Mass-market goods saturate at high incomes, so the marginal capital-income dollar leaves the modeled sector; and mass-sector firms cannot quickly retool to capture redirected luxury spending. The firms least able to escape the PD are therefore precisely those most exposed to the rotation failure: mass-market producers, where the policy debate is loudest. A deeper version of the objection concerns not where the lost spending goes but the interest rate. Displaced workers do not save more; they take an income hit and, if anything, dissave or borrow. The extra desired saving comes instead from the redistribution toward owners, who consume a smaller share of their income. In a frictionless economy the interest rate falls until that extra desired saving is spent elsewhere. The demand loss is then transferred through the interest rate rather than destroyed, and standard efficiency reasoning sees no role for a corrective tax. This is correct in a frictionless economy. But it also pinpoints when the externality is instead real: the case where the interest rate cannot fall far enough. That happens when rates are already near zero \citep{farhi2016macroprudential,caballero2018safety}, or when displaced workers cannot borrow against future income because financial markets are incomplete \citep{greenwald1986externalities}. Both conditions are squarely relevant to mass AI displacement. Only when the rate is so constrained is the lost spending genuinely destroyed, so the corrective tax of \Cref{prop:tax} is warranted precisely under these frictions, not in the frictionless benchmark. \Cref{app:ge} works through these channels, including this interest-rate channel, and argues each is blocked or non-binding for mass-market firms; a fuller GE treatment is in development.

The static setting misses two dynamic forces that pull in opposite directions. AI investments are largely irreversible, and \Cref{prop:entry} shows that even the \emph{threat} of automation can reshape market structure before any displacement occurs, strengthening the case for early policy action. Working the other way, the income-replacement rate~$\eta$ rises over time as displaced workers retrain and new occupations emerge \citep{acemoglu2019automation}, so the optimal tax should shrink as the economy adjusts (\Cref{sec:tax}).

Symmetry rules out heterogeneity across firms and workers, and endogenizing AI development could compound the problem: firms racing to automate may invest disproportionately in labor-replacing AI rather than labor-augmenting AI \citep{acemoglu2018race}, feeding the very arms race the model identifies.

These extensions point on net toward a larger problem, not a smaller one; the one self-correcting force is gradual reabsorption (a rising income-replacement rate~$\eta$), which shrinks the corrective tax over time without removing the externality. A further connection worth exploring is the interaction with algorithmic pricing: when AI systems that automate tasks also set prices, collusive markups may partially internalize the demand externality but simultaneously strengthen the private incentive to automate \citep{banchio2022artificial,keppo2026fragility}. Pursuing these and other extensions, along with the empirical tests outlined above, are promising directions for future work.

{\small
\bibliographystyle{plainnat}
\bibliography{refs_final}

\begin{thebibliography}{56}
\providecommand{\natexlab}[1]{#1}
\providecommand{\url}[1]{\texttt{#1}}
\expandafter\ifx\csname urlstyle\endcsname\relax
  \providecommand{\doi}[1]{doi: #1}\else
  \providecommand{\doi}{doi: \begingroup \urlstyle{rm}\Url}\fi

\bibitem[Acemoglu(2025)]{acemoglu2025simple}
Daron Acemoglu.
\newblock The simple macroeconomics of {AI}.
\newblock \emph{Economic Policy}, 40\penalty0 (121):\penalty0 13--58, 2025.

\bibitem[Acemoglu and Restrepo(2018)]{acemoglu2018race}
Daron Acemoglu and Pascual Restrepo.
\newblock The race between man and machine: Implications of technology for growth, factor shares, and employment.
\newblock \emph{American Economic Review}, 108\penalty0 (6):\penalty0 1488--1542, 2018.

\bibitem[Acemoglu and Restrepo(2019)]{acemoglu2019automation}
Daron Acemoglu and Pascual Restrepo.
\newblock Automation and new tasks: How technology displaces and reinstates labor.
\newblock \emph{Journal of Economic Perspectives}, 33\penalty0 (2):\penalty0 3--30, 2019.

\bibitem[Acemoglu and Restrepo(2020)]{acemoglu2020wrong}
Daron Acemoglu and Pascual Restrepo.
\newblock The wrong kind of {AI}? {A}rtificial intelligence and the future of labour demand.
\newblock \emph{Cambridge Journal of Regions, Economy and Society}, 13\penalty0 (1):\penalty0 25--35, 2020.

\bibitem[Acemoglu and Restrepo(2026)]{acemoglu2024rent}
Daron Acemoglu and Pascual Restrepo.
\newblock Automation and rent dissipation: Implications for wages, inequality, and productivity.
\newblock \emph{Quarterly Journal of Economics}, 141\penalty0 (2):\penalty0 1521--1579, 2026.

\bibitem[Autor et~al.(2003)Autor, Levy, and Murnane]{autor2003skill}
David~H. Autor, Frank Levy, and Richard~J. Murnane.
\newblock The skill content of recent technological change: An empirical exploration.
\newblock \emph{Quarterly Journal of Economics}, 118\penalty0 (4):\penalty0 1279--1333, 2003.

\bibitem[Autor et~al.(2024)Autor, Chin, Salomons, and Seegmiller]{autor2024new}
David~H. Autor, Caroline Chin, Anna Salomons, and Bryan Seegmiller.
\newblock New frontiers: The origins and content of new work, 1940--2018.
\newblock \emph{Quarterly Journal of Economics}, 139\penalty0 (3):\penalty0 1399--1465, 2024.

\bibitem[Banchio and Mantegazza(2022)]{banchio2022artificial}
Martino Banchio and Giacomo Mantegazza.
\newblock Artificial intelligence and spontaneous collusion.
\newblock Working paper, Bocconi University and USC Marshall School of Business, 2022.

\bibitem[Bastani and Cachon(2025)]{bastani2025human}
Hamsa Bastani and G\'{e}rard~P. Cachon.
\newblock The human-{AI} contracting paradox.
\newblock Working paper, The Wharton School, University of Pennsylvania, December 2025.
\newblock SSRN 5962739.

\bibitem[Benzell et~al.(2015)Benzell, Kotlikoff, LaGarda, and Sachs]{benzell2015robots}
Seth~G. Benzell, Laurence~J. Kotlikoff, Guillermo LaGarda, and Jeffrey~D. Sachs.
\newblock Robots are {U}s: Some economics of human replacement.
\newblock Working Paper 20941, National Bureau of Economic Research, February 2015.

\bibitem[Beraja and Zorzi(2025)]{beraja2025inefficient}
Martin Beraja and Nathan Zorzi.
\newblock Inefficient automation.
\newblock \emph{Review of Economic Studies}, 92\penalty0 (1):\penalty0 69--96, 2025.

\bibitem[Bhaimiya(2025)]{cnbc2025layoffs}
Sawdah Bhaimiya.
\newblock {AI} job cuts: {A}mazon, {M}icrosoft, and more cite {AI} for 2025 layoffs.
\newblock CNBC, 2025.

\bibitem[Bhaimiya(2026)]{cnbc2026amodei}
Sawdah Bhaimiya.
\newblock Anthropic {CEO} {D}ario {A}modei warns {AI} may cause `unusually painful' disruption to jobs.
\newblock CNBC, 2026.

\bibitem[Bondi and Johnson(2026)]{bondi2026skill}
Tommaso Bondi and Gentry Johnson.
\newblock Skill atrophy and {AI} productivity measurement.
\newblock Technical report, Cornell Tech, April 2026.
\newblock SSRN 6169671.

\bibitem[Boppart(2014)]{boppart2014structural}
Timo Boppart.
\newblock Structural change and the {K}aldor facts in a growth model with relative price effects and non-{G}orman preferences.
\newblock \emph{Econometrica}, 82\penalty0 (6):\penalty0 2167--2196, 2014.

\bibitem[Brynjolfsson et~al.(2025{\natexlab{a}})Brynjolfsson, Chandar, and Chen]{brynjolfsson2025canaries}
Erik Brynjolfsson, Bharat Chandar, and Ruyu Chen.
\newblock Canaries in the coal mine? {S}ix facts about the recent employment effects of artificial intelligence.
\newblock Working paper, Stanford Digital Economy Lab, November 2025{\natexlab{a}}.

\bibitem[Brynjolfsson et~al.(2025{\natexlab{b}})Brynjolfsson, Li, and Raymond]{brynjolfsson2025generative}
Erik Brynjolfsson, Danielle Li, and Lindsey Raymond.
\newblock Generative {AI} at work.
\newblock \emph{Quarterly Journal of Economics}, 140\penalty0 (2):\penalty0 889--942, 2025{\natexlab{b}}.

\bibitem[Budman(2025)]{cnbc2025salesforce}
Scott Budman.
\newblock Salesforce {CEO} confirms 4,000 layoffs ``because {I} need less heads'' with {AI}.
\newblock CNBC, 2025.

\bibitem[{Bureau of Economic Analysis}(2026{\natexlab{a}})]{bea2026gdp}
{Bureau of Economic Analysis}.
\newblock Gross domestic product, first quarter 2026 (advance estimate).
\newblock \url{https://www.bea.gov/news/2026/gdp-advance-estimate-1st-quarter-2026}, April 2026{\natexlab{a}}.
\newblock Accessed May~5, 2026.

\bibitem[{Bureau of Economic Analysis}(2026{\natexlab{b}})]{bea2026pio}
{Bureau of Economic Analysis}.
\newblock Personal income and outlays, march 2026.
\newblock \url{https://www.bea.gov/news/2026/personal-income-and-outlays-march-2026}, April 2026{\natexlab{b}}.
\newblock Accessed May~5, 2026.

\bibitem[Caballero(2026)]{caballero2026speculative}
Ricardo~J. Caballero.
\newblock Speculative-growth and the {AI} ``{B}ubble''.
\newblock Technical report, Massachusetts Institute of Technology, May 2026.

\bibitem[Caballero and Farhi(2018)]{caballero2018safety}
Ricardo~J. Caballero and Emmanuel Farhi.
\newblock The safety trap.
\newblock \emph{The Review of Economic Studies}, 85\penalty0 (1):\penalty0 223--274, 2018.
\newblock \doi{10.1093/restud/rdx013}.

\bibitem[Caosun and Aral(2026)]{caosun2026augmentation}
Michael Caosun and Sinan Aral.
\newblock The augmentation trap: {AI} productivity and the cost of cognitive offloading.
\newblock Technical Report 2604.03501, arXiv, 2026.
\newblock URL \url{https://arxiv.org/abs/2604.03501}.

\bibitem[Chod et~al.(2019)Chod, Lyandres, and Yang]{chod2019trade}
Jiri Chod, Evgeny Lyandres, and S.~Alex Yang.
\newblock Trade credit and supplier competition.
\newblock \emph{Journal of Financial Economics}, 131\penalty0 (2):\penalty0 484--505, 2019.
\newblock \doi{10.1016/j.jfineco.2018.08.008}.

\bibitem[Coase(1960)]{coase1960problem}
R.~H. Coase.
\newblock The problem of social cost.
\newblock \emph{Journal of Law and Economics}, 3:\penalty0 1--44, 1960.

\bibitem[Comin et~al.(2021)Comin, Lashkari, and Mestieri]{cominlashkari2021structural}
Diego Comin, Danial Lashkari, and Mart\'{\i} Mestieri.
\newblock Structural change with long-run income and price effects.
\newblock \emph{Econometrica}, 89\penalty0 (1):\penalty0 311--374, 2021.

\bibitem[Cooper and John(1988)]{cooper1988coordinating}
Russell Cooper and Andrew John.
\newblock Coordinating coordination failures in {K}eynesian models.
\newblock \emph{Quarterly Journal of Economics}, 103\penalty0 (3):\penalty0 441--463, 1988.

\bibitem[Cooper and Haltiwanger(2006)]{cooperhaltiwanger2006adjustment}
Russell~W. Cooper and John~C. Haltiwanger.
\newblock On the nature of capital adjustment costs.
\newblock \emph{Review of Economic Studies}, 73\penalty0 (3):\penalty0 611--633, 2006.

\bibitem[Costinot and Werning(2023)]{costinot2023robots}
Arnaud Costinot and Iv{\'a}n Werning.
\newblock Robots, trade, and {L}uddism: A sufficient statistic approach to optimal technology regulation.
\newblock \emph{Review of Economic Studies}, 90\penalty0 (5):\penalty0 2261--2291, 2023.

\bibitem[Eloundou et~al.(2024)Eloundou, Manning, Mishkin, and Rock]{eloundou2024gpts}
Tyna Eloundou, Sam Manning, Pamela Mishkin, and Daniel Rock.
\newblock {GPTs} are {GPTs}: Labor market impact potential of {LLMs}.
\newblock \emph{Science}, 384\penalty0 (6702):\penalty0 1306--1308, 2024.
\newblock \doi{10.1126/science.adj0998}.

\bibitem[Farhi and Werning(2016)]{farhi2016macroprudential}
Emmanuel Farhi and Iv{\'a}n Werning.
\newblock A theory of macroprudential policies in the presence of nominal rigidities.
\newblock \emph{Econometrica}, 84\penalty0 (5):\penalty0 1645--1704, 2016.
\newblock \doi{10.3982/ECTA11883}.

\bibitem[Greenwald and Stiglitz(1986)]{greenwald1986externalities}
Bruce~C. Greenwald and Joseph~E. Stiglitz.
\newblock Externalities in economies with imperfect information and incomplete markets.
\newblock \emph{The Quarterly Journal of Economics}, 101\penalty0 (2):\penalty0 229--264, 1986.
\newblock \doi{10.2307/1891114}.

\bibitem[Guerreiro et~al.(2022)Guerreiro, Rebelo, and Teles]{guerreiro2022should}
Jo{\~a}o Guerreiro, Sergio Rebelo, and Pedro Teles.
\newblock Should robots be taxed?
\newblock \emph{Review of Economic Studies}, 89\penalty0 (1):\penalty0 279--311, 2022.

\bibitem[Jacobson et~al.(1993)Jacobson, LaLonde, and Sullivan]{jacobson1993earnings}
Louis~S. Jacobson, Robert~J. LaLonde, and Daniel~G. Sullivan.
\newblock Earnings losses of displaced workers.
\newblock \emph{American Economic Review}, 83\penalty0 (4):\penalty0 685--709, 1993.

\bibitem[Kaldor(1956)]{kaldor1956alternative}
Nicholas Kaldor.
\newblock Alternative theories of distribution.
\newblock \emph{Review of Economic Studies}, 23\penalty0 (2):\penalty0 83--100, 1956.

\bibitem[Keppo et~al.(2026)Keppo, Li, Tsoukalas, and Yuan]{keppo2026fragility}
Jussi Keppo, Yuze Li, Gerry Tsoukalas, and Nuo Yuan.
\newblock On the fragility of {AI} agent collusion.
\newblock Working paper, Boston University and National University of Singapore, January 2026.
\newblock SSRN 5386338.

\bibitem[Keynes(1930)]{keynes1930economic}
John~Maynard Keynes.
\newblock Economic possibilities for our grandchildren.
\newblock In \emph{Essays in Persuasion}, pages 358--373. Macmillan, London, 1930.

\bibitem[Korinek and Stiglitz(2019)]{korinek2019artificial}
Anton Korinek and Joseph~E. Stiglitz.
\newblock Artificial intelligence and its implications for income distribution and unemployment.
\newblock In Ajay Agrawal, Joshua Gans, and Avi Goldfarb, editors, \emph{The Economics of Artificial Intelligence: An Agenda}. University of Chicago Press, Chicago, 2019.

\bibitem[Li et~al.(2025)Li, Huang, and Shi]{li2025forced}
Boshuo Li, Ni~Huang, and Wei Shi.
\newblock Forced to change? {M}edia exposure of labor issues and firm artificial intelligence investment.
\newblock \emph{Information Systems Research}, 37\penalty0 (1):\penalty0 156--175, 2025.
\newblock \doi{10.1287/isre.2022.0402}.

\bibitem[Lucas(1967)]{lucas1967adjustment}
Robert~E. Lucas.
\newblock Adjustment costs and the theory of supply.
\newblock \emph{Journal of Political Economy}, 75\penalty0 (4):\penalty0 321--334, 1967.

\bibitem[Mankiw and Whinston(1986)]{mankiw1986free}
N.~Gregory Mankiw and Michael~D. Whinston.
\newblock Free entry and social inefficiency.
\newblock \emph{RAND Journal of Economics}, 17\penalty0 (1):\penalty0 48--58, 1986.

\bibitem[Matsuyama(2002)]{matsuyama2002rise}
Kiminori Matsuyama.
\newblock The rise of mass consumption societies.
\newblock \emph{Journal of Political Economy}, 110\penalty0 (5):\penalty0 1035--1070, 2002.

\bibitem[Murphy et~al.(1989)Murphy, Shleifer, and Vishny]{murphy1989industrialization}
Kevin~M. Murphy, Andrei Shleifer, and Robert~W. Vishny.
\newblock Industrialization and the big push.
\newblock \emph{Journal of Political Economy}, 97\penalty0 (5):\penalty0 1003--1026, 1989.

\bibitem[Palmer(2026)]{cnbc2026block}
Annie Palmer.
\newblock Block laying off about 4,000 employees, nearly half of its workforce.
\newblock CNBC, 2026.

\bibitem[Pigou(1920)]{pigou1920}
Arthur~Cecil Pigou.
\newblock \emph{The Economics of Welfare}.
\newblock Macmillan, London, 1920.

\bibitem[Ramey and Shapiro(2001)]{rameyshapiro2001displaced}
Valerie~A. Ramey and Matthew~D. Shapiro.
\newblock Displaced capital: A study of aerospace plant closings.
\newblock \emph{Journal of Political Economy}, 109\penalty0 (5):\penalty0 958--992, 2001.

\bibitem[Ricardo(1821)]{ricardo1821principles}
David Ricardo.
\newblock \emph{On the Principles of Political Economy and Taxation}.
\newblock John Murray, London, 3rd edition, 1821.

\bibitem[Rosenstein-Rodan(1943)]{rosenstein1943problems}
Paul~N. Rosenstein-Rodan.
\newblock Problems of industrialisation of {E}astern and {S}outh-{E}astern {E}urope.
\newblock \emph{Economic Journal}, 53\penalty0 (210/211):\penalty0 202--211, 1943.

\bibitem[Shapiro and Stiglitz(1984)]{shapiro1984equilibrium}
Carl Shapiro and Joseph~E. Stiglitz.
\newblock Equilibrium unemployment as a worker discipline device.
\newblock \emph{American Economic Review}, 74\penalty0 (3):\penalty0 433--444, 1984.

\bibitem[Son(2025)]{cnbc2025devin}
Hugh Son.
\newblock Goldman {S}achs is piloting its first autonomous coder in major {AI} milestone for {W}all {S}treet.
\newblock CNBC, 2025.

\bibitem[Summers(2014)]{summers2014secular}
Lawrence~H. Summers.
\newblock {U.S.} economic prospects: Secular stagnation, hysteresis, and the zero lower bound.
\newblock \emph{Business Economics}, 49\penalty0 (2):\penalty0 65--73, 2014.

\bibitem[Summers(2015)]{summers2015demand}
Lawrence~H. Summers.
\newblock Demand side secular stagnation.
\newblock \emph{American Economic Review}, 105\penalty0 (5):\penalty0 60--65, 2015.

\bibitem[van Geelen and Shah(2026)]{shah2026intelligence}
James van Geelen and Alap Shah.
\newblock The 2028 {G}lobal {I}ntelligence {C}risis.
\newblock Citrini Research, February 2026.
\newblock URL \url{https://www.citriniresearch.com/p/2028gic}.

\bibitem[Weitzman(1985)]{weitzman1985sharing}
Martin~L. Weitzman.
\newblock The simple macroeconomics of profit sharing.
\newblock \emph{American Economic Review}, 75\penalty0 (5):\penalty0 937--953, 1985.

\bibitem[Yang and Zhang(2024)]{yang2024generative}
S.~Alex Yang and Angela~Huyue Zhang.
\newblock Generative {AI} and copyright: A dynamic perspective.
\newblock Technical report, Working paper, 2024.
\newblock SSRN 4716233.

\bibitem[Zeira(1998)]{zeira1998workers}
Joseph Zeira.
\newblock Workers, machines, and economic growth.
\newblock \emph{Quarterly Journal of Economics}, 113\penalty0 (4):\penalty0 1091--1117, 1998.

\end{thebibliography}
}

\newpage
\appendix

\section{General-Equilibrium Robustness of the Demand Channel}\label{app:ge}

The paper assumes some fraction of displaced wage income leaks out of the modeled sector. A reasonable GE objection is that this lost income should rotate back through another channel, which would eliminate the demand destruction mechanism. We address this by listing the channels through which a GE model could plausibly undo the core result, explaining in each case why it is unlikely to do so, particularly for the firms most exposed to AI displacement.

\paragraph{Channel 1: Owners absorb the lost spending.}
If owners spent each additional dollar of income the way workers spend wages, automation would simply shift income between groups and the demand destruction mechanism would vanish. But empirically, they do not. Wealthy owners have already covered their basic needs and bought their durables, so additional income, whether from rising profits or rising asset prices, flows to luxury, savings, or further asset accumulation, not back to the mass-market goods displaced workers used to buy \citep{matsuyama2002rise,cominlashkari2021structural,boppart2014structural}. Mass-sector demand therefore falls, and the over-automation mechanism would continue to apply.

\paragraph{Channel 2: Mass-market firms move to where the dollars now go.}
If owner spending leaves the mass market, mass-market firms may try to follow. This is not always possible. Production is tied to the factories already built, the brand earned over years, the distribution network already in place, and the workforce already trained, and reallocating any of these takes time and money \citep{rameyshapiro2001displaced,cooperhaltiwanger2006adjustment}. On the timescale of AI displacement, retooling may not be an option. The Prisoner's Dilemma logic of \Cref{cor:frictionless} also rules out a unilateral pause: a firm that holds back to reposition loses cost savings now and recoups luxury revenue years later, if at all. Much of the redirected demand is unlikely to be captured by the firms that lost the original wages, so the demand destruction at the firm level is likely to persist.

\paragraph{Channel 3: Displaced workers find new jobs.}
Displaced workers who find new jobs recover some of the lost income, which shrinks the wedge. We already account for this through $\eta$, and \Cref{cor:eta} works through the consequences. The inefficiency vanishes only at $\eta = 1$: at $\eta < 1$ the market over-automates, at $\eta > 1$ it under-automates. Our result requires only $\eta < 1$, and the historical record on displaced workers shows large and persistent earnings losses long after layoffs \citep{jacobson1993earnings}. A richer labor-market model might refine how $\eta$ evolves over time, but its trajectory under AI displacement is uncertain; to the extent reabsorption is slow, reemployment alone would not eliminate the externality. And even if reemployment does eventually catch up, some of the over-automation damage could already be done in the interim: laid-off workers may bear sharp transitional losses, and the automation investments that displaced them are largely irreversible (\Cref{sec:discussion}).

\paragraph{Channel 4: Wages fall and the externality closes on its own.}
\Cref{sec:endo_wages} represents wage adjustment in reduced form, with a wage schedule $\ww(\abar)$. The objection is that a fuller GE treatment of wage formation might close the externality where the reduced form does not. There are reasons to doubt that it would. The competitive demand-allocation mechanism is largely independent of how wages are formed: each firm's demand loss is still spread across all $\nn$ firms in equal shares, so each firm tends to bear only a fraction of the demand it destroys (\Cref{prop:endo_wages}). And the wage adjustment that would close the externality, namely wages falling close to AI's cost $\cc$, is constrained in practice by subsistence and minimum-wage floors; even if those constraints did not bind, the resolution would be Pyrrhic, with retained workers earning barely more than the machines that would replace them. Where wages are fully flexible and can fall all the way to AI's cost~$\cc$ ($\mu = 0$), this adjustment eliminates the wedge in the limit, but only by extinguishing the cost saving that drives automation, so the efficiency gain is purchased entirely through depressed worker incomes; once a wage floor binds or workers receive positive weight, even that limiting route is foreclosed.

\paragraph{Channel 5: A flexible interest rate absorbs the demand loss.}
The deepest GE objection is not about where displaced income goes within the product market. It is about the interest rate. Displaced workers cut their consumption because they have lost income, not because they choose to save more; on standard smoothing logic they would, if anything, dissave or borrow. The extra desired saving comes instead from the redistribution toward owners, who have a lower overall propensity to consume and so save a larger share of their income, raising aggregate desired saving. This channel lies outside our framework: the model is a one-shot game with no intertemporal saving, investment, or real interest rate, so whether the leakage is reabsorbed depends on a dynamic GE closure we do not formalize here. We therefore treat it as a robustness argument rather than a model result. In a frictionless economy the real interest rate falls in response. It keeps falling until the extra saving is matched by extra investment and spending by others, so total output returns to potential. The demand loss is then not destroyed but transferred through the interest rate. An externality that works purely through a price in this way is called \emph{pecuniary}. The classic efficiency result for competitive markets says it warrants no corrective tax.

This objection is correct under its premises. It is also useful, because it identifies exactly when the externality is real rather than merely pecuniary: when the interest rate cannot fall far enough to do the absorbing. Two such conditions are directly relevant to AI displacement. The first is when interest rates are already near zero. The central bank then cannot push them low enough to offset the extra saving, so the gap shows up as lost output rather than a lower rate. This is the welfare-relevant aggregate demand externality formalized by \citet{farhi2016macroprudential}. The ``safety trap'' of \citet{caballero2018safety} is a prominent instance, there driven specifically by a shortage of safe assets. The second condition is incomplete financial markets \citep{greenwald1986externalities}. Displaced workers typically cannot borrow against uncertain future retraining income. The interest-rate adjustment that lets an unconstrained household smooth through the shock therefore does not help them, even when the rate is free to move. Under either condition the externality is real, not pecuniary, and the companion general-equilibrium project is built around the first.

\paragraph{Summary.}
Our result requires one thing: that displaced wage income not return in full to the modeled sector. Channels~1, 2, and 3 each appear to block that rotation under realistic conditions for mass-market firms exposed to AI, and Channel~4 narrows but is unlikely to close the gap. Channel~5 is different in kind, and lies outside the static model: it asks whether the lost demand is genuinely destroyed or merely transferred through the interest rate, a question that turns on a dynamic GE closure rather than on our one-shot framework. It shows the externality stays real under one of two conditions: interest rates near zero, or incomplete financial markets. Taken together, the argument is that the conclusion would likely survive a full GE treatment, with the externality biting hardest at mass-market firms.

\section{Proofs}\label{app:proofs}

\begin{proof}[\textbf{Proof of \Cref{prop:alphastar}}]
~\\
\emph{Preamble and part~(i): Nash equilibrium.}
From \cref{eq:profit_expanded}, we have
\[
\pi_i = \Piz + \LL\!\left[\alpha_i\!\left(\ss - \frac{\ell}{\nn}\right) - \frac{\kk}{2}\alpha_i^2 - \frac{\ell}{\nn}\sum_{j \neq i}\alpha_j\right]
\]

Differentiating with respect to~$\alpha_i$ (only the terms involving~$\alpha_i$ contribute, since the rivals' sum $\sum_{j \neq i}\alpha_j$ is a constant from firm~$i$'s perspective):
\begin{equation}\label{eq:FOCalpha}
    \frac{ \partial \pi_i }{\partial \alpha_i } = \LL \inbrak{ \inparen{ \ss - \frac{\ell}{\nn}} - \kk \alpha_i }.
\end{equation}
The second derivative is
\[
    \frac{ \partial^2 \pi_i }{\partial \alpha_i^2 } = -\kk\LL < 0,
\]
so $\pi_i$ is strictly concave in~$\alpha_i$.

Setting the first-order condition to zero:
\[
  \LL\!\left(\ss - \frac{\ell}{\nn} - \kk\alpha_i\right) = 0
  \qquad\Longrightarrow\qquad
  \alpha_i^* = \frac{\ss - \ell/\nn}{\kk}.
\]

Now, $\alpha_i$ is restricted to the range $[0,1]$ (since it is the fraction of work automated).

If $\alpha_i^* > 1$, then since $\frac{ \partial^2 \pi_i }{\partial \alpha_i^2 } < 0$, $\frac{ \partial \pi_i }{\partial \alpha_i } > 0$ for $\alpha_i < \alpha_i^*$, so in this case, the maximum value of $\pi_i(\alpha_i)$ occurs at $\alpha_i = 1$.

Similarly, if $\alpha_i^* < 0$, then $\frac{ \partial \pi_i }{\partial \alpha_i } < 0$ for $\alpha_i > \alpha_i^*$, so in this case, the maximum value of $\pi_i(\alpha_i)$ occurs at $\alpha_i = 0$.
Note that $\alpha_i^* < 0$ if and only if $\ss < \ell/\nn$, i.e., $\nn < \ell/\ss = \nn^{*}$; this establishes the preamble claim that no firm automates when $\nn \leq \nn^{*}$.

Note that $\alpha_i^*$ does not depend on any rival's choice $\alpha_j$ ($j \neq i$): the rivals' automation levels enter only through the additive term $-(\ell/\nn)\sum_{j \neq i}\alpha_j$, which does not affect the first-order condition.
Hence firm~$i$'s optimal strategy is independent of~$\alpha_{-i}$ (a dominant strategy), and is to set
\[
  \alpha_i = \frac{\ss - \ell/\nn}{\kk},
\]
provided the right-hand side lies in $[0,1]$; otherwise $\alpha_i = 0$ or $\alpha_i = 1$ at the respective boundary.
Since every firm solves the same problem, the unique symmetric Nash equilibrium has $\alpha_i = \alpha^{NE}$ for all~$i$.

\medskip
\emph{Part~(ii): Cooperative optimum.}
    Summing the per-firm profit~\eqref{eq:profit_clean} over all $\nn$ firms, the sector's total profit is
    \begin{align*}
        \pi_{tot} &\coloneqq
        \sum_{i=1}^\nn \pi_i \\
        &= \sum_{i=1}^\nn \inbrak{ \Piz + \LL \inparen{ \ss \alpha_i - \ell \abar - \frac{\kk}{2} \alpha_i^2}} \\
        &= \nn \Piz + \LL\ss\sum_i \alpha_i - \LL\ell\nn\abar - \frac{\kk\LL}{2} \sum_{i=1}^\nn \alpha_i^2.
    \end{align*}
Using $\sum_i \alpha_i = \nn\abar$, this simplifies to
\[
    \pi_{tot} = \nn \Piz + \nn\LL \abar(\ss - \ell) - \frac{\kk\LL}{2} \sum_{i=1}^\nn \alpha_i^2.
\]
The first two terms depend on $\boldsymbol{\alpha}$ only through $\abar$; the last term, $-\tfrac{\kk\LL}{2}\sum_i \alpha_i^2$, is maximized (least negative) when the $\alpha_i$ are as equal as possible, by convexity of~$x^2$.
Formally, for any fixed $\abar$, $\sum_i \alpha_i^2 \geq \nn\abar^2$ with equality if and only if $\alpha_i = \abar$ for all~$i$ (by the QM--AM inequality).
Hence the optimum is symmetric: $\alpha_i = \alpha$ for all~$i$, and $\abar = \alpha$.
Substituting:
\begin{equation}\label{eq:pitotsym}
     \pi_{tot} = \nn \Piz + \nn \LL \alpha(\ss-\ell) - \frac{\kk\nn\LL \alpha^2}{2}.
\end{equation}
This is a concave quadratic in~$\alpha$:
\[
    \frac{ \partial \pi_{tot} }{\partial \alpha} = \nn\LL(\ss-\ell) - \kk\nn\LL\alpha,
    \qquad
    \frac{ \partial^2 \pi_{tot} }{\partial \alpha^2} = -\kk\nn\LL < 0.
\]
Setting the first-order condition to zero:
\[
   \nn\LL(\ss-\ell) - \kk\nn\LL\alpha = 0 \quad \Rightarrow \quad \alpha^* = \frac{\ss-\ell}{\kk}.
\]
Since $\pi_{tot}$ is strictly concave, this is the unique global maximum on~$\mathbb{R}$.
Restricting to $\alpha \in [0,1]$:
if $(\ss-\ell)/\kk > 1$, strict concavity implies $\partial\pi_{tot}/\partial\alpha > 0$ on $[0,1]$, so the constrained maximum is $\alpha = 1$;
if $(\ss-\ell)/\kk < 0$, then $\partial\pi_{tot}/\partial\alpha < 0$ on $[0,1]$, so the constrained maximum is $\alpha = 0$.
Thus
\[
    \alpha^{CO} = \frac{\ss-\ell}{\kk},
\]
with boundary values $0$ or $1$ when the expression falls outside $[0,1]$.

\medskip
\emph{Part~(iii): Over-automation wedge.}
Suppose $\nn > \nn^{*}$, so $\alpha^{NE} = (\ss - \ell/\nn)/\kk > 0$ by~(i).

\emph{Case $\ss > \ell$.}
Restrict to $\ell < \ss < \kk + \ell/\nn$, so both $\alpha^{NE} = (\ss - \ell/\nn)/\kk$ and $\alpha^{CO} = (\ss - \ell)/\kk$ are interior.
Subtracting:
\[
  \alpha^{NE} - \alpha^{CO}
    = \frac{\ss - \ell/\nn}{\kk} - \frac{\ss - \ell}{\kk}
    = \frac{\ell - \ell/\nn}{\kk}
    = \frac{\ell\,(1 - 1/\nn)}{\kk}.
\]
This is strictly positive for $\nn \geq 2$ because $\ell > 0$ and $1 - 1/\nn > 0$.

For $\kk + \ell/\nn \le \ss < \kk + \ell$, $\alpha^{NE} = 1$ while $\alpha^{CO} = (\ss - \ell)/\kk \in (0,1)$, so the wedge is $1 - (\ss - \ell)/\kk$; this equals the plateau $\ell(1 - 1/\nn)/\kk$ at $\ss = \kk + \ell/\nn$ and decreases to $0$ as $\ss$ approaches $\kk + \ell$. For $\ss \ge \kk + \ell$ both rates equal $1$ and the wedge is $0$.

For the comparative statics, write the wedge as $W = \ell(1-1/\nn)/\kk$:
\begin{align*}
  \frac{\partial W}{\partial \nn} &= \frac{\ell}{\kk\nn^2} > 0
    &&\text{(increasing in~$\nn$)}, \\
  \frac{\partial W}{\partial \ell} &= \frac{1-1/\nn}{\kk} > 0
    &&\text{(increasing in~$\ell$)}, \\
  \frac{\partial W}{\partial \kk} &= -\frac{\ell(1-1/\nn)}{\kk^2} < 0
    &&\text{(decreasing in~$\kk$)}.
\end{align*}

\emph{Part~(iv): Boundary cases.}

When $\ss \le \ell$, then $(\ss - \ell)/\kk \leq 0$, so $\alpha^{CO} = 0$.
So the wedge is $\alpha^{NE} - \alpha^{CO} = \alpha^{NE}$.
Since we are in the setting where $\nn > \nn^{*}$, we have $(\ss - \ell/\nn)/\kk > 0$.
It is still possible, however, that $(\ss - \ell/\nn)/\kk > 1$.

If $\ss < \kk + \ell/\nn$, then $\alpha^{NE} = (\ss-\ell/\nn)/\kk$, and this is the wedge.
If $\ss \ge \kk + \ell/\nn$, then $\alpha^{NE}$ is a corner case, $\alpha^{NE} = 1$, and this is the wedge.
\end{proof}

\begin{lemma}[Boundary cases]\label{lem:boundary}
  Write $\delta(\mu) \coloneqq \frac{\mu\,\ell}{\lambda(1-\mu)}$ for the additional planner correction beyond $\alpha^{CO}$.
  Then, for $\kk > 0$
  \begin{enumerate}[label=(\roman*),nosep]
    \item $\alpha^{NE}$ is interior if and only if $\ell/\nn < \ss < \kk + \ell/\nn$;
      Note that $\nn > \nn^{*}$ is equivalent to $\ss > \ell/\nn$.
    \item $\alpha^{CO}$ is interior if and only if $\ell < \ss < \kk + \ell$;
    \item $\alpha^{SP}(\mu)$ is interior if and only if $\ell + \delta(\mu) < \ss < \kk + \ell + \delta(\mu)$.
  \end{enumerate}
  Each interval has width~$\kk$.
  Because $\ell/\nn < \ell < \ell + \delta(\mu)$ for $\mu > 0$, the three windows are progressively shifted to the right: interior automation arises at the lowest cost savings under Nash behavior, at intermediate savings under cooperation, and only at the highest savings under the social planner.
\end{lemma}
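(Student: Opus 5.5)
Each of the three rates has the same form: it is the stationary point of a strictly concave quadratic objective, clipped to $[0,1]$. So the plan is to write each unconstrained maximizer as $(\ss - X)/\kk$ for the appropriate $X$. Interiority then means $0 < (\ss - X)/\kk < 1$. Since $\kk > 0$, this is equivalent to $X < \ss < \kk + X$, an open window of width exactly $\kk$. Everything reduces to identifying $X$ in each case.

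For (i), I take the unconstrained first-order solution $(\ss - \ell/\nn)/\kk$ from the proof of \Cref{prop:alphastar}(i), so $X = \ell/\nn$. Strict concavity ($\partial^2\pi_i/\partial\alpha_i^2 = -\kk\LL < 0$) and the clipping argument given there show that the constrained optimum lies strictly inside $(0,1)$ exactly when the unconstrained one does. That gives the window $\ell/\nn < \ss < \kk + \ell/\nn$. The remark on $\nn^{*}$ is immediate: $\nn > \ell/\ss$ is equivalent to $\ss > \ell/\nn$, since $\ss > 0$.

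For (ii), I use the unconstrained cooperative solution $(\ss-\ell)/\kk$ from \Cref{prop:alphastar}(ii), so $X = \ell$. For (iii), I rewrite the formula in \Cref{prop:surplus}(i) as $\alpha^{SP}(\mu) = (\ss - \ell - \delta(\mu))/\kk$ with $\delta(\mu) = \mu\ell/(\lambda(1-\mu))$, so $X = \ell + \delta(\mu)$. Concavity of the $\mu$-planner objective in the common rate is needed for the clipping argument. It holds because $\mathcal{W}$ is linear in $\abar$ and $\mathcal{K}$ contributes the term $-(1-\mu)\kk\nn\LL\alpha^2/2$, which is strictly concave for $\mu<1$. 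I would verify this briefly, and it is the one step that is not a pure citation.

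For the final ordering I argue as follows. Since $\nn \ge 2$ and $\ell > 0$ (given $\eta < 1$), we have $\ell/\nn < \ell$. For $\mu \in (0,1)$, $\delta(\mu) > 0$ because $\lambda > 0$. All three windows have width $\kk$, so their left endpoints are strictly increasing and the windows are successively shifted right. The main subtlety is a degenerate case: if $\ell = 0$ (that is, $\eta = 1$), the inequalities collapse to equalities. I would state the ordering under $\ell > 0$ and note that it becomes weak otherwise.
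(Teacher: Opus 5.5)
Your proposal is correct and follows the paper's proof essentially line for line: you write each unconstrained rate as $(\ss - X)/\kk$ with $X \in \{\ell/\nn,\ \ell,\ \ell + \delta(\mu)\}$, read off the window $X < \ss < \kk + X$, and order the left endpoints using $\nn \ge 2$, $\ell > 0$, and $\delta(\mu) > 0$, with the same caveat that the windows coincide when $\eta = 1$. The explicit strict-concavity check that justifies clipping the $\mu$-planner's rate is a small addition; the paper leaves that step implicit by citing \Cref{prop:surplus}.
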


\begin{figure}[htbp]
\centering
\begin{tikzpicture}[>=stealth, thick,
    NE/.style  ={fill=blue!15},
    CO/.style  ={fill=red!15},
    SP/.style  ={fill=green!15}]
  \def\lN{1.0}    
  \def\el{3.5}    
  \def\elD{6.0}   
  \def\kv{4.0}    
  \pgfmathsetmacro{\rNE}{\lN+\kv}   
  \pgfmathsetmacro{\rCO}{\el+\kv}   
  \pgfmathsetmacro{\rSP}{\elD+\kv}  

  \draw[->] (-0.3,0) -- (10.5,0) node[right]{$\ss$};

  \fill[NE] (\lN,-0.70) rectangle (\rNE,-0.30);
  \draw (\lN,-0.70) rectangle (\rNE,-0.30);
  \node[left] at (-0.3,-0.50) {\small $\alpha^{NE}$};

  \fill[CO] (\el,-1.30) rectangle (\rCO,-0.90);
  \draw (\el,-1.30) rectangle (\rCO,-0.90);
  \node[left] at (-0.3,-1.10) {\small $\alpha^{CO}$};

  \fill[SP] (\elD,-1.90) rectangle (\rSP,-1.50);
  \draw (\elD,-1.90) rectangle (\rSP,-1.50);
  \node[left] at (-0.3,-1.70) {\small $\alpha^{SP}$};
  \draw [<->] (\el,-1.70) to node[below] {$\delta$} (\elD,-1.70);

  \foreach \x/\lab in {%
    \lN/{$\frac{\ell}{\nn}$},%
    \el/{$\ell$},%
    \elD/{$\ell\!+\!\delta$},%
    \rNE/{$\kk\!+\!\frac{\ell}{\nn}$},%
    \rCO/{$\kk\!+\!\ell$},%
    \rSP/{$\kk\!+\!\ell\!+\!\delta$}} {
      \draw (\x,0.10) -- (\x,-0.10);
      \node[above] at (\x,0.12) {\small\lab};
  }

  \draw[<->] (\lN,-2.50) to node[below] {\small width$=\kk$} (\rNE,-2.50);
\end{tikzpicture}
\caption{Interior regions for $\alpha^{NE}$, $\alpha^{CO}$, and $\alpha^{SP}(\mu)$ as functions of the cost saving~$\ss$, where $\delta = \mu\ell/[\lambda(1-\mu)]$.
Each shaded bar spans the values of~$\ss$ for which the corresponding automation rate is strictly interior.
All three intervals have the same width~$\kk$; they are shifted rightward by the progressively larger demand-loss terms that each objective internalizes.}
\label{fig:boundary}
\end{figure}
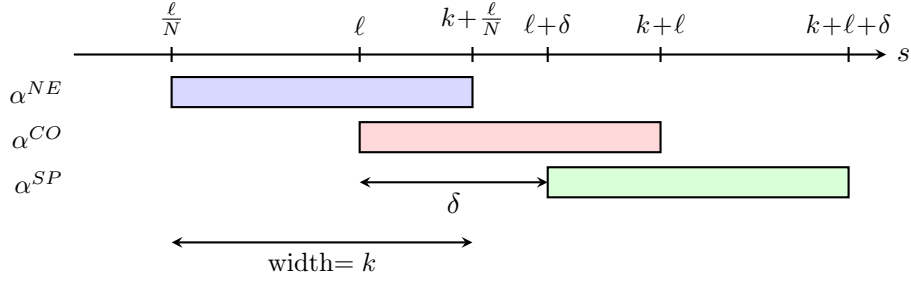

\begin{proof}[\textbf{Proof of \Cref{lem:boundary}}]
Each claim follows by checking when the raw formula lies strictly in $(0,1)$.

\emph{(i) Nash equilibrium.}
From \Cref{prop:alphastar}(i), $\alpha^{NE} = (\ss - \ell/\nn)/\kk$.
The raw formula $(\ss - \ell/\nn)/\kk$ is strictly positive iff $\ss > \ell/\nn$ (i.e., when $\nn > \nn^{*}$), and strictly less than one iff $\ss < \kk + \ell/\nn$.

\emph{(ii) Cooperative optimum.}
From \Cref{prop:alphastar}\ref{prop:CO}, $\alpha^{CO} = (\ss - \ell)/\kk$.
The raw expression $(\ss - \ell)/\kk$ is strictly positive iff $\ss > \ell$, and strictly less than one iff $\ss < \kk + \ell$.

\emph{(iii) Social planner.}
From \Cref{prop:surplus}(i), $\alpha^{SP}(\mu) = (\ss - \ell)/\kk - \mu\ell/[\lambda(1-\mu)\kk]$.
Collecting terms, $\alpha^{SP}(\mu) = \bigl(\ss - \ell - \delta(\mu)\bigr)/\kk$,
which is strictly positive iff $\ss > \ell + \delta(\mu)$ and strictly less than one iff $\ss < \kk + \ell + \delta(\mu)$.

Finally, $\ell/\nn < \ell$ holds because $\nn \geq 2$ and $\ell > 0$, and $\ell < \ell + \delta(\mu)$ for every $\mu \in (0,1)$, since then $\delta(\mu) = \mu\ell/[\lambda(1-\mu)] > 0$ (at $\eta = 1$ the demand loss vanishes, $\ell = 0$, so $\delta(\mu) = 0$ and the windows coincide; $\alpha^{SP}(\mu)$ is defined only for $\mu < 1$, where $\delta(\mu)$ is finite).
\end{proof}

\begin{proof}[\textbf{Proof of \Cref{cor:frictionless}} (Frictionless limit, corollary of \Cref{prop:alphastar})]
~\\
\begin{enumerate}[label=(\roman*),nosep]
\item With $\kk = 0$, firm~$i$'s profit from~\eqref{eq:profit_expanded} becomes
\[
    \pi_i = \Piz + \LL \inbrak{ \alpha_i \inparen{ \ss - \frac{\ell}{\nn}} - \frac{\ell}{\nn} \sum_{j \ne i} \alpha_j}.
\]
This is affine in~$\alpha_i$, with slope $\LL(\ss - \ell/\nn)$.
Crucially, the slope does not depend on rivals' choices $\alpha_{-i}$: rival automation affects the level of firm~$i$'s profit (through the last term) but not the marginal return to~$\alpha_i$.
The optimal $\alpha_i$ therefore does not depend on $\alpha_{-i}$, making it a dominant strategy.

When $\nn > \nn^{*}$ (equivalently $\ss > \ell/\nn$), the slope is strictly positive.
Since $\pi_i$ is linear and increasing in~$\alpha_i$ on $[0,1]$, the unique optimum is the upper boundary $\alpha_i^{*} = 1$.
This holds for every firm simultaneously, so $\alpha_i = 1$ for all~$i$ is the unique Nash equilibrium.

\item Assume additionally that the cost saving is less than the demand loss: $\ss < \ell$.
With $\kk = 0$, aggregate profit from~\eqref{eq:profit_clean} is linear in~$\abar$:
$\sum_i \pi_i = \nn\Piz + \nn\LL(\ss - \ell)\abar$.
Since $\ss - \ell < 0$, this is strictly decreasing in~$\abar$, so the profit-maximizing cooperative outcome is $\abar = 0$ (no automation), yielding per-firm profit~$\Piz$.

At the Nash equilibrium ($\abar = 1$), per-firm profit is $\Piz + \LL(\ss - \ell)$.
Since $\ss - \ell < 0$, we have $\Piz + \LL(\ss - \ell) < \Piz$: every firm earns strictly less than under cooperation.
The per-firm profit loss is $\Piz - [\Piz + \LL(\ss - \ell)] = \LL(\ell - \ss)$.
Across $\nn$ firms, the total deadweight loss is $\nn\LL(\ell - \ss)$.

For the demand loss, from \cref{eq:demand}:
\[
  D = A + \lambda \ww\LL\nn - \ell \LL\nn\,\abar.
\]
At $\abar = 0$: $D^{CO} = A + \lambda \ww\LL\nn$.
At $\abar = 1$: $D^{NE} = A + \lambda \ww\LL\nn - \ell \LL\nn$.
Hence $D^{CO} - D^{NE} = \ell \LL\nn$.
\end{enumerate}
\end{proof}

\begin{proof}[\textbf{Proof of \Cref{prop:surplus}}]
~\\
\begin{enumerate}[label=(\roman*),nosep]
\item The $\mu$-planner chooses $\abar$ to maximize $S(\mu) = \mu\,\mathcal{W} + (1-\mu)\,\mathcal{K}$ from~\eqref{eq:S}.
    We compute each derivative in turn.
    Recall that $\mathcal{W} = \ww\LL\nn[1-(1-\eta)\abar]$, so
    \begin{equation}\label{eq:dWdabar}
      \frac{d\mathcal{W}}{d\abar} = -\ww\nn\LL(1-\eta).
    \end{equation}
    Since $\ell = \lambda(1-\eta)\ww$, this equals $-\ell\,\nn\LL/\lambda$.
    Recall that $\mathcal{K} = \nn[\Piz + \LL((\ss-\ell)\abar - \tfrac{\kk}{2}\abar^{2})]$ (e.g. \cref{eq:pitotsym}), so
    \begin{equation}\label{eq:dKdabar}
      \frac{d\mathcal{K}}{d\abar} = \nn\LL\bigl[(\ss-\ell) - \kk\abar\bigr].
    \end{equation}
    Combining:
    \[
      \frac{dS}{d\abar}
        = \mu\,\frac{d\mathcal{W}}{d\abar} + (1-\mu)\,\frac{d\mathcal{K}}{d\abar}
        = -\frac{\mu\,\ell\,\nn\LL}{\lambda}
          + (1-\mu)\,\nn\LL\bigl[(\ss-\ell) - \kk\abar\bigr].
    \]
    Setting this to zero and dividing by $\nn\LL > 0$:
    \[
      (1-\mu)\bigl[(\ss-\ell) - \kk\abar\bigr] = \frac{\mu\,\ell}{\lambda}.
    \]
    Dividing both sides by $(1-\mu) > 0$ and isolating~$\kk\abar$:
    \[
      (\ss-\ell) - \kk\abar = \frac{\mu\,\ell}{\lambda(1-\mu)},
      \qquad\Longrightarrow\qquad
      \kk\abar = (\ss-\ell) - \frac{\mu\,\ell}{\lambda(1-\mu)}.
    \]
    Dividing by~$\kk$, we find
    \begin{equation}\label{eq:alphasp1}
      \alpha^{SP}(\mu) = \frac{\ss-\ell}{\kk} - \frac{\mu\,\ell}{\lambda\cdot(1-\mu)\cdot \kk}
    \end{equation}
    This optimum will be valid as long as $0 \le \alpha^{SP} \le 1$.
    To find the bounds, define $\bar\mu$ to be the value where
    $\alpha^{SP}(\bar\mu) = 0$.  Then, solving for $\bar\mu$, we have
    \begin{align*}
      \ss-\ell &= \frac{ \bar\mu \ell}{\lambda \cdot (1-\bar\mu)} \\
              &\Downarrow \\
      \bar\mu &= \frac{ \lambda (\ss-\ell)}{\ell + \lambda(\ss-\ell)}
    \end{align*}
    From \cref{eq:alphasp1}, we have that $\alpha^{SP}$ is decreasing in $\mu$, so if
    $\mu > \bar\mu$, then \cref{eq:alphasp1} will be negative.

    Similarly, setting $\underline{\mu}$ to be the value where $\alpha^{SP}(\underline{\mu}) = 1$, we have
    \begin{align*}
      \ss-\ell - \kk &= \frac{ \underline\mu \ell}{\lambda \cdot (1-\underline\mu)} \\
              &\Downarrow \\
      \underline\mu &= \frac{ \lambda (\ss-\ell-\kk)}{\ell + \lambda(\ss-\ell-\kk)}
    \end{align*}

    Then for $\mu \le \underline\mu$, we have \cref{eq:alphasp1} is greater than $1$.

\item 
    Recall 
    \begin{align*}
      S &= \mu \cdot \mathcal{W} + (1-\mu) \cdot \mathcal{K} \qquad \mbox{(\cref{eq:S})}\\
        &= \mu \cdot \ww\LL\nn\inbrak{1-(1-\eta)\abar}
           + (1-\mu) \cdot \nn\inbrak{\Piz + \LL\inparen{(\ss-\ell)\abar - \tfrac{\kk}{2}\abar^{2}}} \qquad \mbox{(\cref{eq:S,eq:pitotsym})}\\
        &= \nn\inbrak{\mu\,\ww\LL\inparen{1-(1-\eta)\abar}
           + (1-\mu)\inparen{\Piz + \LL\inparen{(\ss-\ell)\abar - \tfrac{\kk}{2}\abar^{2}}}}\\
        &= \underbrace{\nn\bigl[\mu\,\ww\LL + (1-\mu)\Piz\bigr]}_{a}
           + \underbrace{\nn\LL\!\inparen{-\mu\,\frac{\ell}{\lambda} + (1-\mu)(\ss-\ell)}}_{b}\,\abar
           + \underbrace{\inparen{-\frac{(1-\mu)\nn\LL\kk}{2}}}_{\tfrac{1}{2}\gamma}\,\abar^{2}
    \end{align*}
    where in the linear term we used $\ww(1-\eta) = \ell/\lambda$ from~\eqref{eq:ell}.
    This is a quadratic $S = a + b\,\abar + \tfrac{1}{2}\gamma\,\abar^{2}$, with
    $\gamma = -(1-\mu)\,\nn\LL\kk < 0$
    strictly negative.
    Completing the square around the maximum $x^{*} = -b/\gamma$:
    \[
      f(x) = a + bx + \tfrac{1}{2}\gamma\,x^{2}
        = f(x^{*}) + \tfrac{1}{2}\gamma\,(x - x^{*})^{2},
    \]
    and therefore
    \[
      f(x^{*}) - f(x) = -\tfrac{1}{2}\gamma\,(x - x^{*})^{2}
        = \tfrac{1}{2}|\gamma|\,(x - x^{*})^{2},
    \]
    where the last equality uses $\gamma \le 0$.
    Applying this with $x^{*} = \alpha^{SP}(\mu)$ (the planner's optimum, which coincides with the unconstrained vertex when it lies in $[0,1]$), $x = \alpha^{NE}$, and $|\gamma| = (1-\mu)\,\nn\LL\kk$:
    \[
      S(\alpha^{SP}) - S(\alpha^{NE})
        = \frac{(1-\mu)\,\nn\LL\kk}{2}\,
          [\alpha^{NE} - \alpha^{SP}(\mu)]^{2}.
    \]

\item 
    Recall from \eqref{eq:wedge_decomp}, the wedge decomposes as:
    \[
      \alpha^{NE} - \alpha^{SP}(\mu)
        = \frac{\ell(1-1/\nn)}{\kk} + \frac{\mu\,\ell}{\lambda(1-\mu)\,\kk}.
    \]
    The first term, $\ell(1-1/\nn)/\kk$, is strictly positive because $\ell > 0$ and $\nn \geq 2$.
    The second term, $\mu\,\ell/[\lambda(1-\mu)\,\kk]$, is strictly positive for any $\mu \in (0,1)$ and equals zero at $\mu = 0$.
    Hence $\alpha^{NE} > \alpha^{CO} \geq \alpha^{SP}(\mu)$ for all $\mu \in [0,1)$, since the distributional term $\mu\,\ell/[\lambda(1-\mu)\,\kk] \geq 0$ is subtracted from $\alpha^{CO}$.

    For worker income: from~\cref{eq:dWdabar}, $\mathcal{W} = \ww\LL\nn[1-(1-\eta)\abar]$ is affine and strictly decreasing in~$\abar$ (since $(1-\eta) > 0$ under the maintained assumption $\eta < 1$).
    Therefore $\alpha^{NE} > \alpha^{CO}$ implies $\mathcal{W}(\alpha^{NE}) < \mathcal{W}(\alpha^{CO})$.

    For owner surplus: from~\cref{eq:dKdabar}, $d\mathcal{K}/d\abar = \nn\LL[(\ss-\ell) - \kk\abar]$, which equals zero at $\abar = (\ss-\ell)/\kk = \alpha^{CO}$ and is strictly negative for $\abar > \alpha^{CO}$ (since $\kk > 0$).
    Hence $\mathcal{K}$ is strictly decreasing on $(\alpha^{CO},\,\alpha^{NE}]$, so $\mathcal{K}(\alpha^{NE}) < \mathcal{K}(\alpha^{CO})$.

    Provided $\ss < \kk + \ell$, so that $\alpha^{CO} < 1$ and the interval $(\alpha^{CO},\,\alpha^{NE}]$ is nonempty, both $\mathcal{W}$ and $\mathcal{K}$ are strictly lower at the Nash equilibrium than at the cooperative optimum, so neither class can be made better off by moving from $\alpha^{CO}$ to $\alpha^{NE}$: the Nash equilibrium is Pareto dominated. At $\ss \geq \kk + \ell$ both rates equal~$1$ and the two profiles coincide, so the dominance is weak.
\end{enumerate}
\end{proof}

\begin{proof}[\textbf{Proof of \Cref{cor:eta}} (Sign of the externality)]
The wedge formula $\alpha^{NE} - \alpha^{CO} = \ell\,(1-1/\nn)/\kk$ follows from the same computation as \Cref{prop:alphastar}\ref{prop:wedge_N}, now with $\ell = \lambda(1-\eta)\ww$ of either sign; it holds whenever both rates are interior (\Cref{lem:boundary}).
The three sign cases follow from the sign of~$\ell$: $\eta < 1$ gives $\ell > 0$ (positive wedge), $\eta = 1$ gives $\ell = 0$ (zero wedge, and both rates reduce to $\ss/\kk$), and $\eta > 1$ gives $\ell < 0$ (negative wedge).
At $\eta = 0$, $\ell = \lambda\ww$ is maximal, so the wedge $\lambda\ww(1-1/\nn)/\kk$ is maximized.
\end{proof}

\begin{proof}[\textbf{Proof of \Cref{prop:equity}} (Worker equity)]
With profit-sharing~$\ee$, aggregate demand satisfies
\[
  D = A + \lambda\bigl[\underbrace{\ww\LL\nn(1-(1-\eta)\abar)}_{\text{wage income}} + \;\ee\,\Pi\bigr],
\]
where total profit is $\Pi = D - \sum_i C_i = D - \nn\LL(\ww - \ss\abar + \tfrac{\kk}{2}\abar^2)$.
Substituting the expression for~$\Pi$ into the demand equation and collecting~$D$ terms on the left-hand side:
\[
  D(1 - \lambda\ee) = A + \lambda\,\ww\LL\nn(1-(1-\eta)\abar) - \lambda\ee\,\nn\LL\!\left(\ww - \ss\abar + \tfrac{\kk}{2}\abar^2\right).
\]
Expanding the right-hand side and grouping by powers of~$\abar$:
\begin{equation}\label{eq:D_equity}
  D = \frac{A + \lambda(1-\ee)\ww\LL\nn - \ell_\ee\LL\nn\abar - \lambda\ee\tfrac{\kk}{2}\nn\LL\abar^2}{1-\lambda\ee},
\end{equation}
where $\ell_\ee \coloneqq \ell - \lambda\ee\ss = \lambda(1-\eta)\ww - \lambda\ee(\ww - \cc)$.

\emph{(i) Cooperative optimum.}
The planner maximizes total profit $\Pi = D - \sum_i C_i$ by choosing the common automation rate~$\alpha$ at a symmetric profile.
Differentiating~$D$ (from the expression above) with respect to~$\alpha$:
\[
  \frac{\partial D}{\partial \alpha} = \frac{-\ell_\ee\,\LL\nn - \lambda\ee\kk\nn\LL\alpha}{1-\lambda\ee}.
\]
Differentiating total cost $\sum_i C_i = \nn\LL(\ww - \ss\alpha + \tfrac{\kk}{2}\alpha^2)$:
\[
  \frac{\partial(\sum_i C_i)}{\partial \alpha} = \nn\LL(-\ss + \kk\alpha).
\]
Setting $\partial\Pi/\partial\alpha = \partial D/\partial\alpha - \partial(\sum_i C_i)/\partial\alpha = 0$ and rearranging:
\[
  \nn\LL(\ss - \kk\alpha) = \frac{\ell_\ee\nn\LL + \lambda\ee\kk\nn\LL\alpha}{1-\lambda\ee}.
\]
Multiplying both sides by $(1-\lambda\ee)$:
\[
  (\ss - \kk\alpha)(1-\lambda\ee) = \ell_\ee + \lambda\ee\kk\alpha.
\]
Expanding the left-hand side and collecting~$\alpha$ terms:
\[
  \kk\alpha\underbrace{\bigl[(1-\lambda\ee) + \lambda\ee\bigr]}_{=\,1} = \ss(1-\lambda\ee) - \ell_\ee.
\]
Substituting $\ell_\ee = \ell - \lambda\ee\ss$, the right-hand side becomes $\ss - \ss\lambda\ee - \ell + \lambda\ee\ss = \ss - \ell$.
Hence $\kk\alpha = \ss - \ell$, giving $\alpha^{CO}(\ee) = (\ss - \ell)/\kk$, independent of~$\ee$.

\emph{(ii) Nash equilibrium.}
Competitive pricing gives $\Rev_i = D/\nn$.
The owner maximizes $(1-\ee)\pi_i$. For $\ee \in [0,1)$ we have $(1-\ee) > 0$, so the FOC reduces to $\partial\pi_i/\partial\alpha_i = 0$; the full-sharing endpoint $\ee = 1$ is reached by continuity, except in the knife-edge $\lambda\ee = 1$ (only $\lambda = 1$, $\ee = 1$) where the worker-equity demand fixed point \eqref{eq:D_equity} has divisor $1-\lambda\ee = 0$ and requires the separate limiting treatment of \Cref{prop:equity}.
Since $D$ depends on~$\alpha_i$ only through~$\abar$ (and the $\abar^2$ term), and $\partial\abar/\partial\alpha_i = 1/\nn$, differentiating $\Rev_i = D/\nn$ with respect to~$\alpha_i$ and evaluating at a symmetric profile yields
\[
  \frac{\partial\Rev_i}{\partial\alpha_i} = \frac{1}{\nn}\cdot\frac{\partial D}{\partial\alpha_i}
  = \frac{-\ell_\ee\LL - \lambda\ee\kk\LL\alpha}{\nn(1-\lambda\ee)}.
\]
Combining with the cost derivative $\partial C_i/\partial\alpha_i = \LL(-\ss + \kk\alpha)$, the symmetric FOC $\partial\Rev_i/\partial\alpha_i - \partial C_i/\partial\alpha_i = 0$ becomes:
\[
  \ss - \kk\alpha - \frac{\ell_\ee + \lambda\ee\kk\alpha}{\nn(1-\lambda\ee)} = 0.
\]
Define $\EE \coloneqq \nn(1-\lambda\ee) + \lambda\ee = \nn - \lambda\ee(\nn-1)$.
Collecting the $\alpha$ terms on the left-hand side:
\[
  \kk\alpha\underbrace{\left[\frac{\nn(1-\lambda\ee) + \lambda\ee}{\nn(1-\lambda\ee)}\right]}_{=\,\EE/[\nn(1-\lambda\ee)]}
  = \ss - \frac{\ell_\ee}{\nn(1-\lambda\ee)}.
\]
Multiplying both sides by $\nn(1-\lambda\ee)/\EE$:
\[
  \kk\alpha = \frac{\ss\,\nn(1-\lambda\ee) - \ell_\ee}{\EE}.
\]
Substituting $\ell_\ee = \ell - \lambda\ee\ss$ into the numerator:
\[
  \ss\,\nn(1-\lambda\ee) - (\ell - \lambda\ee\ss)
  = \ss\bigl[\nn(1-\lambda\ee) + \lambda\ee\bigr] - \ell
  = \ss\,\EE - \ell.
\]
Hence $\kk\alpha = (\ss\EE - \ell)/\EE = \ss - \ell/\EE$, giving $\alpha^{NE}(\ee) = (\ss - \ell/\EE)/\kk$.

\emph{(iii) Wedge.}
Subtracting the cooperative optimum from the Nash rate:
\[
  \alpha^{NE}(\ee) - \alpha^{CO}
  = \frac{\ss - \ell/\EE}{\kk} - \frac{\ss - \ell}{\kk}
  = \frac{\ell - \ell/\EE}{\kk}
  = \frac{\ell(\EE - 1)}{\kk\,\EE}.
\]
Since $\EE = \nn - \lambda\ee(\nn-1)$, we have $\EE - 1 = (\nn - 1) - \lambda\ee(\nn-1) = (\nn - 1)(1 - \lambda\ee)$, so the wedge becomes
\[
  W(\ee) = \frac{\ell(\nn-1)(1-\lambda\ee)}{\kk\,\EE}.
\]
To show the wedge is strictly decreasing in~$\ee$, write $W = \ell(\nn-1)(1-\lambda\ee)/[\kk(\nn - \lambda\ee(\nn-1))]$ and apply the quotient rule.
The numerator is $f(\ee) = 1 - \lambda\ee$ with $f' = -\lambda$, and the denominator is $\tilde{g}(\ee) = \nn - \lambda\ee(\nn-1)$ with $\tilde{g}' = -\lambda(\nn-1)$.
By the quotient rule:
\[
  \frac{d}{d\ee}\!\left(\frac{f}{\tilde{g}}\right)
  = \frac{f'\tilde{g} - f\tilde{g}'}{\tilde{g}^2}
  = \frac{-\lambda[\nn - \lambda\ee(\nn-1)] + \lambda(\nn-1)(1-\lambda\ee)}{\EE^2}
  = \frac{-\lambda}{\EE^2} < 0.
\]
Hence $\partial W/\partial\ee = -\lambda\ell(\nn-1)/(\kk\EE^2) < 0$: the wedge is strictly decreasing.
Setting $W = 0$ requires $1 - \lambda\ee = 0$, i.e., $\ee = 1/\lambda$, which satisfies $\ee \geq 1$ whenever $\lambda \leq 1$.
\end{proof}

\begin{proof}[\textbf{Proof of \Cref{cor:equity_voluntary}} (Voluntary profit-sharing)]
Firm~$i$ chooses $\ee_i \in [0,1]$ to maximize retained profit $(1-\ee_i)\pi_i$.
Differentiating with respect to~$\ee_i$:
\[
  \frac{d}{d\ee_i}\bigl[(1-\ee_i)\pi_i\bigr] = -\pi_i + (1-\ee_i)\frac{\partial\pi_i}{\partial\ee_i}.
\]
When firm~$i$ shares $\ee_i$ of its profit, its workers receive $\ee_i\pi_i$ and spend $\lambda\ee_i\pi_i$ in the sector.
Firm~$i$ captures $1/\nn$ of the resulting demand increase, so $\partial\pi_i/\partial\ee_i = \lambda\pi_i/\nn$.
Evaluating at $\ee_i = 0$:
\[
  -\pi_i + \frac{\lambda\pi_i}{\nn} = \pi_i\!\left(\frac{\lambda}{\nn} - 1\right) < 0,
\]
since $\lambda \leq 1$ and $\nn \geq 2$.
This left-endpoint sign is not sufficient on its own, because $\pi_i$ itself rises with~$\ee_i$ through the demand feedback, so we verify that retained profit falls throughout $[0,1]$ for every rival profile. Fix the other firms' shares at $\{\ee_j\}_{j \neq i}$ and let $\hat\pi_i$ denote firm~$i$'s profit absent its own-sharing feedback (its profit at $\ee_i = 0$, holding rivals fixed), with $\hat\pi_i > 0$ in the maintained interior equilibrium. Writing $M_i \coloneqq \nn - \lambda\sum_{j \neq i}\ee_j$, the own-sharing channel gives $\pi_i = \hat\pi_i + \lambda\ee_i\pi_i/M_i$, hence $\pi_i(\ee_i) = \hat\pi_i/(1-\lambda\ee_i/M_i)$ and
\[
  R_i(\ee_i) \coloneqq (1-\ee_i)\pi_i(\ee_i) = \frac{M_i\,\hat\pi_i\,(1-\ee_i)}{M_i-\lambda\ee_i},
  \qquad
  R_i'(\ee_i) = -\,\frac{M_i\,\hat\pi_i\,(M_i-\lambda)}{(M_i-\lambda\ee_i)^2}.
\]
Since $\sum_{j \neq i}\ee_j \leq \nn-1$, we have $M_i \geq \nn - \lambda(\nn-1) = \nn(1-\lambda) + \lambda$, so for $\lambda < 1$, $M_i - \lambda \geq \nn(1-\lambda) > 0$ and $M_i - \lambda\ee_i \geq M_i - \lambda > 0$; thus $R_i'(\ee_i) < 0$ for every $\ee_i \in [0,1]$ and every rival profile. Retained profit is therefore strictly decreasing on all of $[0,1]$, so $\ee_i = 0$ is the unique optimum regardless of the other firms' choices, a strictly dominant strategy. (Only at the knife-edge $\lambda = 1$ with all rivals fully sharing, where $M_i = 1$, does $R_i' \equiv 0$ and $\ee_i = 0$ becomes weakly optimal, the same $\lambda = 1$ degeneracy noted above.)
\end{proof}

\begin{proof}[\textbf{Proof of \Cref{cor:coase}}]
Partition the $\nn$ firms into a coalition $\mathcal{M}$ of size $M$ and a fringe of $\nn - M$ non-members.
Let
\begin{equation}
  \alpha^F \coloneqq  \frac{1}{\nn-M} \cdot \sum_{i \not \in \mathcal{M} } \alpha_i
\end{equation}
be the average automation rate of fringe firms.
At a symmetric profile where all coalition members choose $\alpha^{M}$, the average automation rate is
\[
  \abar = \frac{M\alpha^{M} + (\nn - M)\alpha^{F}}{\nn}.
\]
From~\eqref{eq:profit_expanded}, a coalition member~$i$'s profit is
\[
  \pi_i = \Piz + \LL\!\left[\alpha_i\!\left(\ss - \frac{\ell}{\nn}\right) - \frac{\kk}{2}\alpha_i^2 - \frac{\ell}{\nn}\sum_{j \neq i}\alpha_j\right].
\]
Then
\begin{align*}
  \pi_i &= \Piz + \LL\!\left[\alpha_i\!\left(\ss - \frac{\ell}{\nn}\right) - \frac{\kk}{2}\alpha_i^2 - \frac{\ell}{\nn}\inparen{ \sum_{j \in \mathcal{M} \setminus \{i\} }\alpha_j + \sum_{j \notin \mathcal{M}} \alpha_j } \right] \\
  &= \Piz + \LL\!\left[\alpha_i\!\left(\ss - \frac{\ell}{\nn}\right) - \frac{\kk}{2}\alpha_i^2 - \frac{\ell}{\nn}\inparen{ \sum_{j \in \mathcal{M} \setminus \{i\} }\alpha_j + (\nn-M)\alpha^F } \right].
\end{align*}
Under a symmetric strategy, where all firms in the coalition follow the same strategy $\alpha_i = \alpha^M$, we have
\begin{align*}
  \pi_i &= \Piz + \LL\!\left[\alpha^M\!\left(\ss - \frac{\ell}{\nn}\right) - \frac{\kk}{2}\inparen{\alpha^M}^2 - \frac{\ell}{\nn}\inparen{ (M-1)\alpha^M + (\nn-M)\alpha^F } \right] \\
        &= \Piz + \LL\!\left[\alpha^M\!\left(\ss - \frac{\ell M}{\nn}\right) - \frac{\kk}{2}\inparen{\alpha^M}^2 - \frac{\ell}{\nn}(\nn-M)\alpha^F \right] \\
        &= \Piz - \LL\frac{\ell(\nn-M)\alpha^F}{\nn}+ \LL\!\left[\alpha^M\!\left(\ss - \frac{\ell M}{\nn}\right) - \frac{\kk}{2}\inparen{\alpha^M}^2 \right]
\end{align*}
The coalition chooses $\alpha^M$ to maximize $\sum_{i \in \mathcal{M}} \pi_i$.
At the symmetric equilibrium, $\sum_{ i \in \mathcal{M} } \pi_i = M \pi_i$, so maximizing the individual firm's profit in the coalition also maximizes the coalition profits.
Taking the derivative, we find
\[
  \frac{\partial \pi_i }{\partial \alpha^M} = \LL \inparen{ \ss - \frac{M\ell}{\nn} - \kk \alpha^M} = \LL \inparen{ \ss - \frac{\ell}{\nn} - \kk \alpha^M - \frac{(M-1)\ell}{\nn} }
\]
The additional term $-(M-1)\ell/\nn$ (relative to \cref{eq:FOCalpha}) arises because each coalition member internalizes the demand loss its automation imposes on the other $M-1$ members, each of whom loses $\ell\LL/\nn$ in revenue.
Setting the first-order condition to zero:
\[
  \alpha^{M} = \frac{\ss - M\ell/\nn}{\kk},
\]
clamped to $[0,1]$.
At $M = 1$, this reduces to $\alpha^{NE} = (\ss - \ell/\nn)/\kk$.
At $M = \nn$, it reduces to $\alpha^{CO} = (\ss - \ell)/\kk$.

When both $\alpha^M$ and $\alpha^{CO}$ are interior, the residual wedge is
\[
  \alpha^{M} - \alpha^{CO} = \frac{\ss - M\ell/\nn}{\kk} - \frac{\ss - \ell}{\kk} = \frac{\ell(1 - M/\nn)}{\kk},
\]
which is strictly positive for $M < \nn$ and zero only at $M = \nn$.
\end{proof}

\begin{proof}[\textbf{Proof of \Cref{prop:tax}} (Pigouvian automation tax)]
~\\
\begin{enumerate}[label=(\roman*),nosep]
\item Under the tax, firm~$i$'s profit from~\eqref{eq:profit_expanded} becomes
\[
  \pi_i = \Piz + \LL\!\left[\alpha_i\!\left(\ss - \tau - \frac{\ell}{\nn}\right) - \frac{\kk}{2}\alpha_i^2 - \frac{\ell}{\nn}\sum_{j \neq i}\alpha_j\right].
\]
The first-order condition is $\ss - \tau - \ell/\nn - \kk\alpha_i = 0$, giving $\alpha^{NE}(\tau) = (\ss - \tau - \ell/\nn)/\kk$.
Setting $\alpha^{NE}(\tau) = \alpha^{CO} = (\ss - \ell)/\kk$ yields $\tau^{*} = \ell - \ell/\nn = \ell(1-1/\nn)$.

\item At $\tau = \tau^{*}$, all firms choose $\alpha^{CO}$.
Per-firm profit is $\pi^{\mathrm{tax}} = \pi(\alpha^{CO}) - \tau^{*}\LL\alpha^{CO} = \pi^{CO} - \tau^{*}\LL\alpha^{CO}$.
Total tax revenue is $\tau^{*}\LL\nn\alpha^{CO}$. Rebated as an exogenous lump sum (each firm's receipt is fixed independently of its own automation choice, so it does not re-enter the first-order condition), each firm receives $\tau^{*}\LL\alpha^{CO}$, restoring profit to~$\pi^{CO}$ without altering the marginal incentive that pins down~$\tau^{*}$.
\end{enumerate}
\end{proof}

\begin{proof}[\textbf{Proof of \Cref{prop:phi}} (AI productivity widens the over-automation wedge)]
~\\
\begin{enumerate}[label=(\roman*),nosep]
\item With $\phi > 1$, perfect competition allocates revenue by output share: $\Rev_i = D \cdot Y_i / \sum_j Y_j$, which reduces to $D/\nn$ only once the symmetric profile $\alpha_i = \alpha$ is imposed. The derivative below is taken on this general form, so that the dependence of the denominator $\sum_j Y_j$ on~$\alpha_i$ (the source of the market-share gain) is retained before symmetry is substituted.
The firm's first-order condition equates the marginal benefit of automation to its marginal cost: $\partial\Rev_i/\partial\alpha_i + \ss\LL - \kk\LL\alpha_i = 0$, where $\ss\LL$ is the per-unit cost saving from replacing a worker with AI and $\kk\LL\alpha_i$ is the marginal integration friction.
Substituting the marginal revenue from~\eqref{eq:rev_phi} and rearranging, the symmetric first-order condition is
\[
\kk\alpha = \ss - \frac{\ell}{\nn} + \frac{D(\alpha)\,(\phi-1)(\nn-1)}{\nn^2[1+(\phi-1)\alpha]\,\LL}.
\]
Define $\mathrm{LHS}(\alpha) = \kk\alpha$ and let $\mathrm{RHS}(\alpha)$ denote the right-hand side.
The left-hand side is strictly increasing (slope~$\kk$).
The right-hand side is strictly decreasing: the market-share term has numerator proportional to $D(\alpha) = A + \lambda \ww\LL\nn - \ell \LL\nn\alpha$ (decreasing in~$\alpha$) and denominator factor $1+(\phi-1)\alpha$ (increasing in~$\alpha$).
Hence $\mathrm{LHS} = \mathrm{RHS}$ has a unique solution.

To show $\alpha^{NE}(\phi) > \alpha^{NE}(1)$, evaluate both sides at the baseline equilibrium $\alpha = \alpha^{NE}(1) = (\ss - \ell/\nn)/\kk$:
$\mathrm{LHS} = \ss - \ell/\nn$, while $\mathrm{RHS} = \ss - \ell/\nn + (\text{positive market-share term}) > \mathrm{LHS}$.
Since $\mathrm{LHS}$ is increasing and $\mathrm{RHS}$ is decreasing, the unique crossing must occur at $\alpha^{NE}(\phi) > \alpha^{NE}(1)$.

\item The cooperative planner maximizes $\sum_i \pi_i = D - \sum_i C_i$.
Total revenue equals aggregate demand~$D$ regardless of how output is allocated across firms: expenditure $D$ is pinned down by~\eqref{eq:demand}, so reallocating output shares does not change total revenue.
The planner's first-order condition therefore depends only on costs:
\[
\frac{\partial(\sum_i\pi_i)}{\partial\alpha_i} = -\ell \LL + \ss\LL - \kk\LL\alpha_i,
\]
which is independent of~$\phi$.
Setting to zero gives $\alpha^{CO}(\phi) = (\ss-\ell)/\kk = \alpha^{CO}(1)$.

For the generalized planner: $S(\mu) = \mu\,\mathcal{W} + (1-\mu)\,\mathcal{K}$.
Worker income $\mathcal{W} = \ww\LL\nn[1-(1-\eta)\abar]$ does not depend on~$\phi$.
Owner surplus $\mathcal{K} = D - \sum_i C_i$ at symmetric profiles, and neither $D$~\eqref{eq:demand} nor $C_i$~\eqref{eq:cost} depends on~$\phi$.
Hence $S(\mu)$ is $\phi$-invariant at every symmetric~$\abar$, and $\alpha^{SP}(\mu;\phi) = \alpha^{SP}(\mu;1)$ for all~$\mu$.

\item Combining (i) and (ii): $\alpha^{NE}(\phi) > \alpha^{NE}(1)$ while $\alpha^{SP}(\mu;\phi) = \alpha^{SP}(\mu;1)$ for all~$\mu$, so the wedge $\alpha^{NE}(\phi) - \alpha^{SP}(\mu;\phi)$ is strictly larger than $\alpha^{NE}(1) - \alpha^{SP}(\mu;1)$ for every~$\mu$.
Since $\alpha^{NE}(\phi)$ is increasing in~$\phi$ (by the same LHS/RHS argument with a larger market-share term), the wedge is strictly increasing in~$\phi$.
\end{enumerate}
\end{proof}

\begin{proof}[\textbf{Proof of \Cref{prop:entry}} (Endogenous entry, frictionless benchmark)]
We assume $\kk = 0$ (frictionless case), $\lambda = 1$ (full recycling), $0 < \kappa < A$ (entry is costly but the market is viable for at least one firm), $\ell > \ss$ (so that $\nn^{*} > 1$ by \Cref{cor:frictionless}), and the genericity condition $\nn^{*} \notin \mathbb{N}$ (so no integer firm count sits exactly at the indifference cutoff and every integer $\nn$ lies strictly on one branch of the profit schedule).

\medskip
\emph{Step 1: the profit schedule is strictly decreasing on~$\mathbb{N}$.}
Let $m = \lfloor \nn^{*} \rfloor$, so $m \leq \nn^{*} < m+1$.
For $\nn \leq \nn^{*}$, \cref{cor:frictionless} gives $\alpha = 0$, so by~\eqref{eq:profit_clean} with $\lambda = 1$: $\Pi^{*}(\nn) = A/\nn$, which is strictly decreasing.
For $\nn > \nn^{*}$, full automation is dominant (\cref{cor:frictionless}), and per-firm profit drops by $\Delta = \LL(\ell - \ss) > 0$, giving $\Pi^{*}(\nn) = A/\nn - \Delta$, also strictly decreasing.
At the crossing: $\Pi^{*}(m) = A/m > A/(m+1) > A/(m+1) - \Delta = \Pi^{*}(m+1)$.
Hence $\Pi^{*}$ is strictly decreasing on~$\mathbb{N}$.

\medskip
\emph{Step 2: existence and uniqueness of~$\nn^{FE}$.}
Since $\kappa < A$, $\Pi^{*}(1) = A > \kappa$.
Since $\Delta > 0$, $\Pi^{*}(\nn) \to -\Delta < 0$ as $\nn \to \infty$.
The set $\mathcal{S} = \{\nn \in \mathbb{N} : \Pi^{*}(\nn) \geq \kappa\}$ is therefore nonempty and finite.
Let $\nn^{FE} = \max \mathcal{S}$.
By strict monotonicity, $\Pi^{*}(\nn^{FE}) \geq \kappa$ and $\Pi^{*}(\nn^{FE}+1) < \kappa$, so $\nn^{FE}$ satisfies~\eqref{eq:free_entry} and is unique.

\medskip
\emph{Step 3: characterization by regime.}
We determine $\nn^{FE}$ by checking which integers are in~$\mathcal{S}$.

On the no-automation branch ($\nn \leq m$): $\Pi^{*}(\nn) = A/\nn \geq \kappa$ iff $\nn \leq A/\kappa$.
So the viable integers on this branch are $\{1, \dots, \min(\lfloor A/\kappa \rfloor, m)\}$.

On the full-automation branch ($\nn \geq m+1$): $\Pi^{*}(\nn) = A/\nn - \Delta \geq \kappa$ iff $\nn \leq A/(\kappa + \Delta)$.
The smallest integer on this branch is $m+1$, so viable integers exist iff $m + 1 \leq A/(\kappa + \Delta)$, i.e.\ $\kappa + \Delta \leq A/(m+1)$.
When nonempty, the viable set is $\{m+1, \dots, \lfloor A/(\kappa + \Delta) \rfloor\}$.

Since $\nn^{FE}$ is the largest viable integer overall:

\emph{Case~(i): (low entry cost) $\kappa + \Delta \leq A/(m+1)$.}
The full-automation branch contains viable integers up to $\lfloor A/(\kappa + \Delta) \rfloor \geq m+1$.
Since $\kappa + \Delta > \kappa$, we have $A/(\kappa+\Delta) < A/\kappa$, so the full-automation branch's maximum does not exceed the no-automation branch's maximum in absolute terms, but it exceeds~$m$ (the cap on the no-automation branch).
Hence $\nn^{FE} = \lfloor A/(\kappa + \Delta) \rfloor \geq m + 1 > \nn^{*}$, and every firm fully automates.
See \cref{fig:entry_profit}(a).

\emph{Case~(ii): (intermediate entry cost) $\kappa + \Delta > A/(m+1)$ and $\kappa < A/m$.}
The first condition means $A/(m+1) - \Delta < \kappa$, so $\Pi^{*}(m+1) = A/(m+1) - \Delta < \kappa$.
Since $\Pi^{*}$ is decreasing, no integer above~$\nn^{*}$ is viable.
The second condition gives $A/\kappa > m$, so $\lfloor A/\kappa \rfloor \geq m$.
Hence the largest viable integer on the no-automation branch is~$m$, and $\nn^{FE} = m$.
Profit is $\Pi^{*}(m) = A/m > \kappa$ (strict, since $A/\kappa > m$ implies $A/m > \kappa$).
In this case since $\nn^{FE} = m < \nn^*$ (strict, since $\nn^{*} \notin \mathbb{N}$), no firm automates.
See \cref{fig:entry_profit}(b).

\emph{Case~(iii): (high entry cost) $\kappa \geq A/m$.}
Then $A/\kappa \leq m$, so $\lfloor A/\kappa \rfloor \leq m$.
Since $\kappa + \Delta > \kappa \geq A/m > A/(m+1)$, no integer on the full-automation branch is viable (same argument as case~(ii)).
Hence $\nn^{FE} = \lfloor A/\kappa \rfloor$.
Since $\nn^{FE} = \lfloor A/\kappa \rfloor \leq m \leq \nn^{*}$, no firm automates.
See \cref{fig:entry_profit}(c).

\medskip
These three cases exhaust all $\kappa \in (0, A)$.
When $\kappa > A$, then $\nn^{FE} = 0$.
\end{proof}

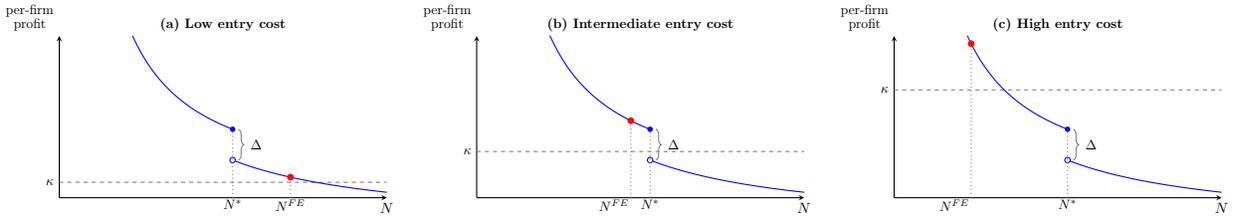
\begin{figure}[ht]
\centering

\pgfmathsetmacro{\A}{8}        
\pgfmathsetmacro{\Del}{0.8}    
\pgfmathsetmacro{\Nstar}{4.5}  
\pgfmathsetmacro{\mInt}{4}     
\pgfmathsetmacro{\xmax}{8.5}   
\pgfmathsetmacro{\ymax}{4.2}

\resizebox{0.32\textwidth}{!}{%
\begin{tikzpicture}[>=stealth, thick]

  \pgfmathsetmacro{\kap}{0.4}
  \pgfmathsetmacro{\NFE}{6}  
  \pgfmathsetmacro{\piNFE}{\A/\NFE - \Del}

  \draw[->] (0,0) -- (\xmax,0) node[below] {$\nn$};
  \draw[->] (0,0) -- (0,\ymax)
    node[above left, align=center, font=\small] {per-firm\\[-2pt]profit};

  \draw[dashed, gray] (0,\kap) node[left, black, font=\small] {$\kappa$}
    -- (\xmax,\kap);

  \draw[blue, thick, domain=1.9:\Nstar, samples=80]
    plot (\x, {\A/\x});

  \draw[blue, thick, domain=\Nstar:\xmax, samples=80]
    plot (\x, {\A/\x - \Del});

  \pgfmathsetmacro{\piUpper}{\A/\Nstar}
  \fill[blue] (\Nstar, \piUpper) circle (2pt);
  \pgfmathsetmacro{\piLower}{\A/\Nstar - \Del}
  \draw[blue, fill=white] (\Nstar, \piLower) circle (2pt);

  \draw[decorate, decoration={brace, amplitude=4pt, mirror},
        thick, gray]
    (\Nstar+0.15, \piLower) -- (\Nstar+0.15, \piUpper)
    node[midway, right=5pt, black, font=\small] {$\Delta$};

  \draw[dotted, gray] (\Nstar,0) -- (\Nstar, \piUpper);
  \node[below, font=\small, inner sep=1pt] at (\Nstar, 0) {$\nn^{*}$};

  \fill[red] (\NFE, \piNFE) circle (2.5pt);
  \draw[dotted, gray] (\NFE, 0) -- (\NFE, \piNFE);
  \node[below, font=\small, inner sep=1pt] at (\NFE, 0) {$\nn^{FE}$};

  \node[font=\small\bfseries] at (0.5*\xmax, \ymax+0.3)
    {(a) Low entry cost};

\end{tikzpicture}%
}%
\hfill
\resizebox{0.32\textwidth}{!}{%
\begin{tikzpicture}[>=stealth, thick]

  \pgfmathsetmacro{\kap}{1.2}
  \pgfmathsetmacro{\NFE}{4}  
  \pgfmathsetmacro{\piNFE}{\A/\NFE}

  \draw[->] (0,0) -- (\xmax,0) node[below] {$\nn$};
  \draw[->] (0,0) -- (0,\ymax)
    node[above left, align=center, font=\small] {per-firm\\[-2pt]profit};

  \draw[dashed, gray] (0,\kap) node[left, black, font=\small] {$\kappa$}
    -- (\xmax,\kap);

  \draw[blue, thick, domain=1.9:\Nstar, samples=80]
    plot (\x, {\A/\x});

  \draw[blue, thick, domain=\Nstar:\xmax, samples=80]
    plot (\x, {\A/\x - \Del});

  \pgfmathsetmacro{\piUpper}{\A/\Nstar}
  \fill[blue] (\Nstar, \piUpper) circle (2pt);
  \pgfmathsetmacro{\piLower}{\A/\Nstar - \Del}
  \draw[blue, fill=white] (\Nstar, \piLower) circle (2pt);

  \draw[decorate, decoration={brace, amplitude=4pt, mirror},
        thick, gray]
    (\Nstar+0.15, \piLower) -- (\Nstar+0.15, \piUpper)
    node[midway, right=5pt, black, font=\small] {$\Delta$};

  \draw[dotted, gray] (\Nstar,0) -- (\Nstar, \piUpper);
  \node[below, font=\small, inner sep=1pt] at (\Nstar, 0) {$\nn^{*}$};

  \fill[red] (\NFE, \piNFE) circle (2.5pt);
  \draw[dotted, gray] (\NFE, 0) -- (\NFE, \piNFE);
  \node[below left, font=\small, inner sep=1pt] at (\NFE, 0) {$\nn^{FE}$};

  \node[font=\small\bfseries] at (0.5*\xmax, \ymax+0.3)
    {(b) Intermediate entry cost};

\end{tikzpicture}%
}%
\hfill
\resizebox{0.32\textwidth}{!}{%
\begin{tikzpicture}[>=stealth, thick]

  \pgfmathsetmacro{\kap}{2.8}
  \pgfmathsetmacro{\NFE}{2}  
  \pgfmathsetmacro{\piNFE}{\A/\NFE}

  \draw[->] (0,0) -- (\xmax,0) node[below] {$\nn$};
  \draw[->] (0,0) -- (0,\ymax)
    node[above left, align=center, font=\small] {per-firm\\[-2pt]profit};

  \draw[dashed, gray] (0,\kap) node[left, black, font=\small] {$\kappa$}
    -- (\xmax,\kap);

  \draw[blue, thick, domain=1.9:\Nstar, samples=80]
    plot (\x, {\A/\x});

  \draw[blue, thick, domain=\Nstar:\xmax, samples=80]
    plot (\x, {\A/\x - \Del});

  \pgfmathsetmacro{\piUpper}{\A/\Nstar}
  \fill[blue] (\Nstar, \piUpper) circle (2pt);
  \pgfmathsetmacro{\piLower}{\A/\Nstar - \Del}
  \draw[blue, fill=white] (\Nstar, \piLower) circle (2pt);

  \draw[decorate, decoration={brace, amplitude=4pt, mirror},
        thick, gray]
    (\Nstar+0.15, \piLower) -- (\Nstar+0.15, \piUpper)
    node[midway, right=5pt, black, font=\small] {$\Delta$};

  \draw[dotted, gray] (\Nstar,0) -- (\Nstar, \piUpper);
  \node[below, font=\small, inner sep=1pt] at (\Nstar, 0) {$\nn^{*}$};

  \fill[red] (\NFE, \piNFE) circle (2.5pt);
  \draw[dotted, gray] (\NFE, 0) -- (\NFE, \piNFE);
  \node[below left, font=\small, inner sep=1pt] at (\NFE, 0) {$\nn^{FE}$};

  \node[font=\small\bfseries] at (0.5*\xmax, \ymax+0.3)
    {(c) High entry cost};

\end{tikzpicture}%
}
\caption{Per-firm profit as a function of the number of firms in the frictionless benchmark ($\kk=0$, $\lambda=1$).
In each panel, profit follows $A/\nn$ for $\nn \leq \nn^{*}$ (no automation) and drops discretely by $\Delta = \LL(\ell - \ss)$ at~$\nn^{*}$ when full automation becomes dominant.
The red dot marks the free-entry equilibrium~$\nn^{FE}$.
(a)~Low entry cost: $\nn^{FE}$ lies above~$\nn^{*}$ and all firms automate.
(b)~Intermediate entry cost: $\nn^{FE} = \lfloor\nn^{*}\rfloor$; the threat of automation deters the marginal entrant, sustaining positive profits without any automation occurring.
(c)~High entry cost: $\nn^{FE}$ falls well below~$\nn^{*}$; entry costs alone limit competition and the automation threshold is never approached.}
\label{fig:entry_profit}
\end{figure}

\begin{proof}[\textbf{Proof of \Cref{prop:entry_convex}} (Endogenous entry with convex costs)]

First, we show that the per-firm profit is decreasing in $\nn$.  From \cref{eq:profit_clean}, we can write
\begin{equation}
  \pi^{NE}(\nn) = A/\nn + C + g(\nn)
\end{equation}
where 
\begin{align}
  C &\coloneqq (\lambda-1)\ww\LL \\
  g(\nn) &\coloneqq \LL\,\alpha^{NE}(\nn)\inbrak{ (\ss - \ell) - \tfrac{\kk}{2}\alpha^{NE}(\nn) }
\end{align}
Since $A/\nn$ is strictly decreasing on~$\mathbb{N}$, it suffices to show that $g$ is weakly decreasing for $\nn \in \mathbb{N}$.
First, note that $g(\cdot)$ is decreasing in $\alpha^{NE}$
\begin{equation}
  \frac{d g}{d \alpha^{NE}} = \LL\,\inbrak{ (\ss-\ell) - \kk\, \alpha^{NE} }
\end{equation}
By assumption $\ell > \ss$, so the derivative is always negative.
Then, by \Cref{prop:alphastar}, we have 
\begin{equation}
  \alpha^{NE} = \left\{ \begin{array}{ll} 0 & \mbox{ if $\nn \le \frac{\ell}{\ss}$ } \\ \min\inparen{ \frac{\ss - \ell/\nn}{\kk},\, 1 } & \mbox{ if $\nn > \frac{\ell}{\ss}$} \end{array} \right.
\end{equation}
So $\alpha^{NE}$ is non-decreasing in $\nn$.
Thus $g(\alpha^{NE}(\nn))$ is non-increasing in $\nn$.

Since $A/\nn$ is decreasing in $\nn$, 
and $g\inparen{\alpha^{NE}(\nn)}$ is non-increasing in $\nn$, 
so $\pi^{NE}(\nn) = A/\nn + C + g(\nn)$ is strictly decreasing in $\nn$, 
i.e., $\pi^{NE}(\nn) > \pi^{NE}(\nn+1)$.

Now, let 
\begin{equation}
  S \coloneqq \{\nn \in \mathbb{N} : \pi^{NE}(\nn) \geq \kappa\}
\end{equation}
Since $\pi^{NE}(1) = \Piz(1) > \kappa$ by assumption, the set $S$ is non-empty.
To see that~$S$ is finite, we need to show that $\lim_{\nn \to \infty} \pi^{NE} < \kappa$.
First, 
\begin{align}
  \alpha_\infty \coloneqq \lim_{\nn \to \infty} \alpha^{NE}(\nn) = \min\inparen{ 1, \frac{\ss}{\kk} }
\end{align}
So 
\begin{equation}
  \lim_{\nn \to \infty} g(\nn) = \LL\,\alpha_{\infty}\inbrak{ (\ss - \ell) - \tfrac{\kk}{2}\alpha_\infty } \le 0
\end{equation}
because $\ss < \ell$ (by assumption), with strict inequality whenever $\ss > 0$ (at $\ss = 0$, $\alpha_\infty = 0$ and the limit is~$0$).
Thus
\begin{equation}
  \lim_{\nn \to \infty} \pi^{NE}(\nn) = \lim_{\nn \to \infty} \inbrak{ A/\nn + C + g(\nn) } = (\lambda-1)\ww \LL + \lim_{\nn \to \infty} g(\nn) \le 0
\end{equation}
Here $C = (\lambda -1)\ww\LL \le 0$ (equality only at $\lambda = 1$) and $\lim_{\nn \to \infty} g(\nn) \le 0$, so $\lim_{\nn \to \infty} \pi^{NE} \le 0 < \kappa$; since $\kappa > 0$ the inequality relative to~$\kappa$ is strict, and the set $S$ is finite.
Let 
\begin{equation}
  \nn^{FE} \coloneqq \max \inparen{ \nn \in S }
\end{equation}
Then $\pi^{NE}(\nn^{FE}) \geq \kappa$ and $\pi^{NE}(\nn^{FE}+1) < \kappa$, so $\nn^{FE}$ satisfies~\eqref{eq:free_entry}.
If $\nn^{FE} > \nn^{*}$, then $\alpha^{NE}(\nn^{FE}) = \min \inparen{ (\ss - \ell/\nn^{FE})/\kk, 1 } > 0$; since $\ell > \ss$ implies $\alpha^{CO} = \max\{0,\,(\ss - \ell)/\kk\} = 0$, we have $\alpha^{NE}(\nn^{FE}) > \alpha^{CO}$, so over-automation persists.
\end{proof}

\begin{proof}[\textbf{Proof of \Cref{prop:endo_wages}} (Endogenous wages)]
The symmetric equilibrium is a fixed point: $\abar$ such that $\abar = \alpha^{NE}(\ww(\abar))$.

(i)~At a symmetric rate $\alpha_i = \alpha$ for all~$i$, we have $\abar = \alpha$ and, from~\eqref{eq:profit_clean},
\[
  \pi_i \;=\; \Piz + \LL\!\left(\ss\,\alpha - \ell\,\alpha - \tfrac{\kk}{2}\alpha^2\right).
\]
Aggregate profit is $\sum_i \pi_i = \nn\,\pi_i$. Consistent with the wage-taking convention, we differentiate at fixed~$\ww$ and evaluate the resulting condition at the equilibrium wage $\ww(\alpha)$; differentiating with respect to~$\alpha$ and dividing by~$\nn\LL$ yields the planner's per-firm marginal benefit of automation:
\[
  g(\alpha) \coloneqq \ss(\alpha) - \ell(\alpha) - \kk\alpha
    = \ww(\alpha) - \cc - \lambda(1-\eta)\ww(\alpha) - \kk\alpha
    = \ww(\alpha)\bigl[1-\lambda(1-\eta)\bigr] - \cc - \kk\alpha,
\]
where we used $\ss = \ww - \cc$ and $\ell = \lambda(1-\eta)\ww$.
By contrast, each firm's profit from~\eqref{eq:profit_expanded} depends on~$\abar$ only through its own revenue share~$\ell/\nn$. Differentiating with respect to~$\alpha_i$ and dividing by~$\LL$ gives the private marginal benefit:
\[
  h(\alpha) \coloneqq \ss(\alpha) - \frac{\ell(\alpha)}{\nn} - \kk\alpha
    = \ww(\alpha)\!\left[1 - \frac{\lambda(1-\eta)}{\nn}\right] - \cc - \kk\alpha.
\]
Differentiating, $g'(\alpha) = \ww'(\alpha)[1-\lambda(1-\eta)] - \kk$; the first term is weakly negative and $\kk > 0$, so $g' < 0$.
The same argument applies to~$h$ (whose coefficient $1-\lambda(1-\eta)/\nn > 0$ for $\nn \geq 2$), so both are strictly decreasing.
Note that $g(\alpha) = h(\alpha) - \ell(\alpha)(1-1/\nn)$ for all~$\alpha$.
At the NE fixed point, $h(\alpha^{NE}) = 0$ by definition, so
\[
  g(\alpha^{NE}) = 0 - \ell(\alpha^{NE})(1-1/\nn) = -\ell(\alpha^{NE})(1-1/\nn) < 0.
\]
Since $g$ is strictly decreasing and $g(\alpha^{CO}) = 0$ (the planner's optimality condition), the inequality $g(\alpha^{NE}) < 0 = g(\alpha^{CO})$ implies $\alpha^{CO} < \alpha^{NE}$. This argument zeroes both first-order conditions ($h(\alpha^{NE}) = 0$ and $g(\alpha^{CO}) = 0$), so it presumes both rates interior. At the full-automation corner, where $\ss(\abar) \geq \kk + \ell(\abar)$ pins $\alpha^{NE} = \alpha^{CO} = 1$, the automation-rate gap is zero and the strict inequality lapses; the externality persists in firms' incentives but the rate comparison can no longer register it.

(ii)~From~\eqref{eq:Nstar}, the threshold is $\nn^{*}(\ww) = \lambda(1-\eta)\ww/(\ww-\cc)$.
Applying the quotient rule:
\[
  \frac{\partial \nn^{*}}{\partial \ww}
  = \frac{\lambda(1-\eta)(\ww - \cc) - \lambda(1-\eta)\ww}{(\ww-\cc)^2}
  = \frac{-\lambda(1-\eta)\cc}{(\ww-\cc)^2} < 0,
\]
so $\nn^{*}$ is strictly decreasing in~$\ww$ when $\cc > 0$ and $\eta < 1$ (and weakly otherwise: at $\cc = 0$ or $\eta = 1$ the derivative is zero and $\nn^{*}$ is wage-invariant).
Since $\ww'(\abar) \leq 0$ by assumption, $\ww(\abar) \leq \ww(0)$ for all $\abar \geq 0$.
Combining: $\nn^{*}(\ww(\abar)) \geq \nn^{*}(\ww(0))$, with strict inequality whenever $\ww(\abar) < \ww(0)$, $\cc > 0$, and $\eta < 1$.
\end{proof}

\begin{corollary}[Generalized planner under wage adjustment]\label{cor:endo_mu}
Under the conditions of \Cref{prop:endo_wages}, let $\mu \in [0,1)$ and define
$g_\mu(\alpha) \coloneqq \ss(\ww(\alpha)) - \ell(\ww(\alpha))\bigl[1 + \mu/(\lambda(1-\mu))\bigr] - \kk\alpha$.
If $g_\mu$ is strictly decreasing, then $\alpha^{NE} > \alpha^{SP}(\mu)$ whenever the planner's rate is interior ($\ss(\abar) < \kk + \ell(\abar) + \delta(\mu)$).
\end{corollary}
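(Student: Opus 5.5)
The argument mirrors part~(i) of the proof of \Cref{prop:endo_wages}, with the profit-only marginal benefit $g$ replaced by the $\mu$-planner's marginal benefit $g_\mu$. First I would derive $g_\mu$ as the planner's first-order condition under the wage-taking convention. At a symmetric profile $\alpha_i = \alpha$, write $S(\mu) = \mu\,\mathcal{W} + (1-\mu)\,\mathcal{K}$. Hold $\ww$ fixed, differentiate in $\alpha$, and use \eqref{eq:dWdabar} and \eqref{eq:dKdabar}. The derivative is
\[
  \frac{dS}{d\alpha} = \nn\LL\Bigl[(1-\mu)\bigl(\ss - \ell - \kk\alpha\bigr) - \frac{\mu\,\ell}{\lambda}\Bigr],
\]
and dividing by $(1-\mu)\nn\LL > 0$ gives exactly $g_\mu(\alpha)$, evaluated at the equilibrium wage $\ww(\alpha)$. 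Hence an interior planner optimum satisfies $g_\mu(\alpha^{SP}(\mu)) = 0$. Interiority is guaranteed by the hypothesis $\ss(\abar) < \kk + \ell(\abar) + \delta(\mu)$, together with the lower window from \Cref{lem:boundary}(iii).

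Next I would relate $g_\mu$ to the private marginal benefit $h(\alpha) = \ss - \ell/\nn - \kk\alpha$ from the proof of \Cref{prop:endo_wages}. Subtracting gives the pointwise identity
\[
  g_\mu(\alpha) = h(\alpha) - \ell(\alpha)\Bigl(1 - \frac{1}{\nn}\Bigr) - \ell(\alpha)\,\frac{\mu}{\lambda(1-\mu)},
\]
which is the wage-dependent analogue of the decomposition \eqref{eq:wedge_decomp}.

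Then I would evaluate this identity at the Nash fixed point. If $\alpha^{NE}$ is interior, $h(\alpha^{NE}) = 0$, so $g_\mu(\alpha^{NE}) = -\ell(\alpha^{NE})\bigl[(1-1/\nn) + \mu/(\lambda(1-\mu))\bigr]$. This is strictly negative, because $\ell(\ww) = \lambda(1-\eta)\ww > 0$ when $\ww > \cc \ge 0$ and $\eta < 1$, and because $\nn \ge 2$. If instead $\alpha^{NE} = 1$ is a corner, then $\alpha^{SP} < 1$ by interiority, so the conclusion holds trivially. Finally, $g_\mu$ is strictly decreasing by assumption, and $g_\mu(\alpha^{NE}) < 0 = g_\mu(\alpha^{SP}(\mu))$, so $\alpha^{SP}(\mu) < \alpha^{NE}$.

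The main obstacle is not this crossing argument but making sure the first-order characterization is legitimate. The planner's problem is posed with the wage taken parametrically, so $g_\mu$ is a fixed-point condition rather than the derivative of a single concave function. I would therefore define $\alpha^{SP}(\mu)$ as the root of $g_\mu$, following the convention in the footnote to \Cref{prop:endo_wages}. Strict monotonicity of $g_\mu$, which is an assumption here, makes that root unique. I would also state explicitly that $\eta < 1$ is maintained, so that $\ell > 0$ and the inequality is strict.
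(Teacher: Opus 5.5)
Your proposal is correct and follows the paper's proof sketch essentially step for step. Both arguments rewrite the wage-taking planner FOC as $g_\mu(\alpha)=0$, use the identity $g_\mu = h - \ell(\alpha)\bigl(1 - 1/\nn + \mu/(\lambda(1-\mu))\bigr)$, sign $g_\mu(\alpha^{NE})$ via $h(\alpha^{NE})=0$, and conclude with the crossing argument under strict monotonicity. Two of your additions are small but welcome tightenings: the explicit treatment of the $\alpha^{NE}=1$ corner, and noting that strictness at $\mu=0$ comes from $1-1/\nn>0$ rather than from $\mu>0$, as the paper's sketch states.
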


\begin{proof}[\textbf{Proof sketch}]
The argument mirrors part~(i) of \Cref{prop:endo_wages}, replacing the cooperative planner's marginal benefit~$g$ with the $\mu$-planner's.
The $\mu$-planner maximizes $S(\mu) = \mu\,\mathcal{W} + (1-\mu)\,\mathcal{K}$ over a common rate~$\alpha$, taking wages as given.
From the proof of \Cref{prop:surplus}, the first-order condition is
\[
  (1-\mu)\bigl[\ss(\alpha) - \ell(\alpha) - \kk\alpha\bigr] = \frac{\mu\,\ell(\alpha)}{\lambda},
\]
which can be rewritten as $g_\mu(\alpha) = 0$.
This function relates to the private marginal benefit $h(\alpha) = \ss(\alpha) - \ell(\alpha)/\nn - \kk\alpha$ from the proof of \Cref{prop:endo_wages} by
\[
  g_\mu(\alpha) = h(\alpha) - \ell(\alpha)\!\left(1 - \frac{1}{\nn} + \frac{\mu}{\lambda(1-\mu)}\right).
\]
At the Nash equilibrium, $h(\alpha^{NE}) = 0$, so
\[
  g_\mu(\alpha^{NE}) = -\ell(\alpha^{NE})\!\left(1 - \frac{1}{\nn} + \frac{\mu}{\lambda(1-\mu)}\right) < 0,
\]
since $\ell > 0$, $\nn \geq 2$, and $\mu > 0$.
At the planner's fixed point, $g_\mu(\alpha^{SP}(\mu)) = 0$ by definition.
If $g_\mu$ is strictly decreasing, the inequality $g_\mu(\alpha^{NE}) < 0 = g_\mu(\alpha^{SP})$ implies $\alpha^{SP}(\mu) < \alpha^{NE}$, exactly as in the $\mu = 0$ case. As in \Cref{prop:endo_wages}, this zeroes $g_\mu$ and so holds on the interior; once $\ss(\abar) \geq \kk + \ell(\abar) + \delta(\mu)$ pins $\alpha^{SP}(\mu) = \alpha^{NE} = 1$ (\Cref{lem:boundary}), the gap closes and the strict inequality lapses.

It remains to verify monotonicity.
Differentiating:
\[
  g_\mu'(\alpha) = \ww'(\alpha)\!\left[1 - (1-\eta)\!\left(\lambda + \frac{\mu}{1-\mu}\right)\right] - \kk.
\]
Define the bracket as $C_\mu \coloneqq 1 - (1-\eta)[\lambda + \mu/(1-\mu)]$.
At $\mu = 0$, $C_0 = 1 - \lambda(1-\eta) \geq 0$, which is the coefficient used in the proof of \Cref{prop:endo_wages}; since $\ww' \leq 0$, the product $\ww' C_0 \leq 0$ and $g_0' < 0$ follows immediately.
For $\mu > 0$, $C_\mu$ decreases.
As long as $C_\mu \geq 0$, the same argument applies; this holds for all $\mu \leq \bar\mu \coloneqq [1-(1-\eta)\lambda]/[2-\eta-(1-\eta)\lambda]$ (approximately $0.48$ at $\lambda = 0.5$, $\eta = 0.30$).
When $C_\mu < 0$, the product $\ww' C_\mu \geq 0$, so $g_\mu' < 0$ requires $\kk > |\ww'(\alpha)| \cdot |C_\mu|$: integration frictions must dominate wage sensitivity.

As a numerical illustration, consider $\ww(\abar) = 1 - 0.5\,\abar$, $\cc = 0.30$, $\lambda = 0.5$, $\eta = 0.30$, $\kk = 1$, $\nn = 7$, and $\mu = 0.3$.
The equilibrium rates are $\alpha^{NE} \approx 0.44$, $\alpha^{CO} \approx 0.26$, and $\alpha^{SP}(0.3) \approx 0.04$, confirming $\alpha^{SP} < \alpha^{CO} < \alpha^{NE}$.
The distributional premium is substantial: the $\mu$-planner would reduce automation to near zero, well below the cooperative optimum that already lies far below the Nash rate.
\end{proof}

\begin{proof}[\textbf{Proof of \Cref{prop:recycling}} (Capital income recycling)]
We derive the result for general $\kk \geq 0$; the proposition's two parts follow from the $\kk = 0$ specialization.

Aggregate demand is $D = A + \lambda\ww\LL\nn - \ell\LL\nn\abar + \ehat\Pi$, where total profit is $\Pi = D - \nn\LL(\ww - \ss\abar) - \tfrac{\kk}{2}\LL\sum_j\alpha_j^2$.
Substituting and solving for~$D$:
\[
  D(1-\ehat) = A + (\lambda-\ehat)\ww\LL\nn - \ell_{\ehat}\,\LL\nn\abar - \tfrac{\ehat\kk}{2}\LL\textstyle\sum_j\alpha_j^2,
\]
with $\ell_{\ehat} = \ell - \ehat\ss$. When $\kk = 0$ this gives~\eqref{eq:D_delta}.
Since $\nn\abar = \alpha_i + \sum_{j\ne i}\alpha_j$, differentiating:
\[
  \frac{\partial D}{\partial \alpha_i} = -\frac{\ell_{\ehat}\,\LL}{1-\ehat} - \frac{\ehat\kk\LL}{1-\ehat}\,\alpha_i.
\]
Revenue $\Rev_i = D/\nn$ gives $\partial\Rev_i/\partial\alpha_i = -\ell_{\ehat}\LL/[\nn(1-\ehat)] - \ehat\kk\LL\alpha_i/[\nn(1-\ehat)]$.
Firm~$i$'s marginal profit, including the direct cost saving~$\ss\LL$ and marginal friction~$\kk\LL\alpha_i$, is
\[
  \frac{\partial\pi_i}{\partial\alpha_i}
  = \LL\!\left(\ss - \frac{\ell_{\ehat}}{\nn(1-\ehat)} - \kk\alpha_i\cdot\frac{\nnhat}{\nn(1-\ehat)}\right),
\]
where $\nnhat \coloneqq \nn(1-\ehat)+\ehat$.
This depends only on~$\alpha_i$, so the equilibrium is in strictly dominant strategies.

\emph{Part~(i): $\kk = 0$.}
The marginal profit reduces to $\LL(\ss - \ell_{\ehat}/[\nn(1-\ehat)])$, a constant independent of~$\alpha_i$.
This is positive if and only if $\ss\nn(1-\ehat) > \ell_{\ehat}$, i.e. $\nn > \frac{\ell_{\ehat}}{\ss(1-\ehat)} = \nntil$.
Thus, when $\ell_{\ehat} > 0$ so that $\nntil > 0$, full automation is strictly dominant when $\nn > \nntil$ and no automation is strictly dominant when $\nn < \nntil$, reproducing the structure of \Cref{cor:frictionless}; when $\ell_{\ehat} \leq 0$ the marginal profit $\LL(\ss - \ell_{\ehat}/[\nn(1-\ehat)])$ is positive for every~$\nn$, so full automation is unconditionally dominant, the case characterized in part~(ii).

\emph{Part~(ii).}
$\ell_{\ehat} = \ell - \ehat\ss \leq 0$ if and only if $\ehat \geq \ell/\ss$.

\emph{Extension to $\kk > 0$.}
Setting the first-order condition to zero:
\[
  \alpha^{NE} = \frac{\nn(1-\ehat)}{\kk\nnhat}\!\left(\ss - \frac{\ell_{\ehat}}{\nn(1-\ehat)}\right) = \frac{\ss\nnhat - \ell}{\kk\nnhat} = \frac{\ss - \ell/\nnhat}{\kk}.
\]
This is positive if and only if $\nnhat > \nn^{*}$, reproducing \Cref{prop:alphastar} with~$\nn$ replaced by~$\nnhat$.
At a symmetric profile ($\alpha_j = \alpha$ for all~$j$), total profit is $\Pi = \frac{\nn}{1-\ehat}[\Piz + \LL(\ss-\ell)\alpha - \tfrac{\kk}{2}\LL\alpha^2]$.
The $1/(1-\ehat)$ multiplier scales the objective without changing the optimizer, so $\alpha^{CO} = \max\{0,\;(\ss-\ell)/\kk\}$, the same as in \Cref{prop:alphastar}.
\end{proof}

\end{document}